\DeclareMathOperator{\tr}{Tr}
	\definecolor{darkgreen}{rgb}{0,0.6,0}
	 \definecolor{darkred}{rgb}{0.8,0,0}
\theoremstyle{plain}
\newtheorem{theorem}{Theorem}[section]
\newtheorem{lemma}[theorem]{Lemma}
\numberwithin{equation}{section}
\newcommand\xqed[1]{%
	\leavevmode\unskip\penalty9999 \hbox{}\nobreak\hfill\quad\hbox{#1}%
}
\newcommand\remarkend{\xqed{$\triangle$}}
	\def\@endtheorem{\remarkend\endtrivlist\@endpefalse }
\theoremstyle{remark}
\newtheorem{remark}[theorem]{Remark}
\newtheorem*{remark*}{Remark}
	\def\@endtheorem{\endtrivlist\@endpefalse }
\newcommand\R{{\ensuremath {\mathbb R} }}
\newcommand\N{{\ensuremath {\mathbb N} }}
\newcommand\1{{\ensuremath {\mathds 1} }}
\newcommand{\gH}{\mathfrak{H}}
\newcommand{\cE}{\mathcal{E}}
\newcommand{\cG}{\mathcal{G}}
\newcommand{\cM}{\mathcal{M}}
\newcommand{\cQ}{\mathcal{Q}}
\newcommand{\cS}{\mathcal{S}}
\newcommand{\cT}{\mathcal{T}}
\newcommand{\cU}{\mathcal{U}}
\newcommand{\NLS}{\relax\ifmmode \mathrm{NLS}\else\textup{NLS}\xspace\fi}
\def\di{\mathop{}\!\mathrm{d}}
\def \dix{\di x}
\def\diy{\di y}
\def\diz{\di z}
\DeclareMathOperator{\sgn}{sign}
\newcommand{\normDISPLAY}[1]{{\left\vert\kern-0.25ex\left\vert #1 \right\vert\kern-0.25ex\right\vert}}
\newcommand{\normINLINE}[1]{{\vert\kern-0.25ex\vert #1 \vert\kern-0.25ex\vert{\vphantom{\normDISPLAY{1}}}}}
\newcommand{\norm}[1]{{\mathchoice{\normDISPLAY{#1}}{\normINLINE{#1}}{\normINLINE{#1}}{\normINLINE{#1}}}}
\newcommand{\normt}[1]{\left\vert\kern-0.25ex\left\vert\kern-0.25ex\left\vert #1 \right\vert\kern-0.25ex\right\vert\kern-0.25ex\right\vert}
\title[Stabilization of 2D BECs with repulsive, three\nobreakdash-body interactions]{Stabilization against collapse\\of 2D attractive Bose--Einstein condensates\\with repulsive, three\nobreakdash-body interactions}
\author[D.-T. Nguyen]{Dinh-Thi Nguyen}
	\address{(D.-T. Nguyen) Department of Mathematics, Uppsala University, Box 480, 751 06 Uppsala, Sweden}
	\email{\href{dinh_thi.nguyen@math.uu.se}{dinh\_thi.nguyen@math.uu.se}}
\author[J. Ricaud]{Julien Ricaud}
	\address{(J. Ricaud) CMAP, CNRS, {\'E}cole polytechnique, Institut Polytechnique de Paris, 91120 Pa\-lai\-seau, France}
	\email{\href{julien.ricaud@polytechnique.edu}{julien.ricaud@polytechnique.edu}}
\subjclass{35J10, 35Q55, 81V70, 82D05}
\keywords{Bose--Einstein condensates, Gagliardo--Nirenberg inequality, nonlinear Schr{\"o}\-dinger equation, three-body interaction, two-body interaction}
\begin{document}

\begin{abstract}
	We consider a trapped Bose gas of~$N$ identical bosons in two dimensional space with both an attractive, two\nobreakdash-body, scaled interaction and a repulsive, three\nobreakdash-body, scaled interaction respectively of the form $-aN^{2\alpha-1} U(N^\alpha \cdot)$ and $bN^{4\beta-2} W(N^\beta \cdot, N^\beta \cdot))$, where $a,b,\alpha,\beta>0$ and $\int_{\mathbb R^2}U(x) {\mathop{}\mathrm{d}} x = 1 = \iint_{\mathbb R^{4}} W(x,y) {\mathop{}\mathrm{d}} x {\mathop{}\mathrm{d}} y$. We derive rigorously the cubic--quintic nonlinear Schr{\"o}\-dinger semiclassical theory as the mean-field limit of the model and we investigate the behavior of the system in the double-limit $a = a_N \to a_*$ and $b = b_N \searrow 0$. Moreover, we also consider the homogeneous problem where the trapping potential is removed.
\end{abstract}

\maketitle
\tableofcontents

\section{Introduction}

Bose--Einstein condensates (BECs)~\cite{Bose-24,Einstein-24,AndEnsMatWieCor-95,CorWie-02,Ketterle-02} are a form of matter made of bosonic particles that have been cooled down to temperatures near absolute zero, at which they condense into a single quantum state. In such context, a natural way to describe the interaction in a very general way is to consider an effective potential taking into account all the possible $k$\nobreakdash-body interactions, $k=2,\dots,N$, of the form $v_{\mathrm{kB}}(x_1-x_2,x_1-x_3,\dots,x_1-x_k)$ computed by integrating out the possible internal degrees of freedom of $k$ particles placed at $x_1$, $x_2$, ..., and $x_k$, in the ideal situation where all the others $N-k$ particles are infinitely far away:
\[
	\sum_{1\leq i < j \leq N} v_{\mathrm{2B}}(x_i-x_j) + \sum_{1\leq i < j < k \leq N} v_{\mathrm{3B}}(x_i-x_j,x_i-x_k) + \dots\,.
\]
If for most applications, limiting this effective interaction only to the two\nobreakdash-body term is enough for practical purposes, the three- and higher\nobreakdash-body terms of this many\nobreakdash-body expansion are actually not always negligible. For instance, it was shown in~\cite{MurBar-71} that $2\%$ of the binding energy of liquid~$\text{He}^4$ comes from three\nobreakdash-body interactions and in~\cite{MasBukSza-03} that it even amounts to $14\%$ for water, and that the two\nobreakdash-body approximation alone cannot explain certain of their physical properties~\cite{UjeVit-07,PieTaiSki-11}. Moreover, in the setting of an effective interaction composed of an attractive, two\nobreakdash-body potential and a repulsive, three\nobreakdash-body potential, setting in which the present work is interested, it was shown in~\cite{BisBla-15,Blakie-16} that the three\nobreakdash-body interaction can stabilize the condensate against a collapse due to the two\nobreakdash-body interaction and that the competing contributions can result in crystallization, making of it a good candidate setting to create super-solid states.

In the low-density regime, where the distance between particles is much larger than the interaction range, the pairwise interactions can be effectively treated as the dominant interactions as the three- and higher\nobreakdash-body interactions become negligible in comparison. This is due to the fact that the probability of having at least three particles simultaneously interacting becomes much smaller compared to two particles interacting at a time. As a result, the contributions from three- and higher\nobreakdash-body interactions can be neglected without significantly affecting the overall behavior of the system. In many cases, neglecting these effective interactions simplifies the analysis and calculations, leading the literature to focus primarily on the pairwise interactions between particles. This simplification is particularly useful in various fields, including condensed matter physics and atomic physics, where systems such as gases or liquids can be well-described by pairwise interaction potentials~\cite{LieYng-01,LieSei-02,LieSei-06}.

However, in the high-density regime, where particles are closely packed and the interparticle distances are comparable to or smaller than the range of interactions, the role of three- and higher\nobreakdash-body interactions can become more significant. In this regime, the effects of these interactions may need to be taken into account in order to accurately describe the system's behavior: in systems with strong three\nobreakdash-body interactions, the simultaneous interactions between three particles can lead to collective effects and correlations that cannot be captured by considering only pairwise interactions, and they can influence various properties of the system, such as its equation of state or its phase transitions.

In the past few years, the BECs with three\nobreakdash-body interactions received attention and many mathematical works dealt with three\nobreakdash-body interactions~\cite{ChePav-11,Chen-12,Yuan-15,Xie-15,CheHol-19,NamSal-20,LiYao-21,NamRicTri-22a,NamRicTri-22b,NamRicTri-23,NguRic-22-arxiv}. If most of the literature considers only a three\nobreakdash-body interaction, without a two\nobreakdash-body interaction, some works considered the combination of several few\nobreakdash-body interactions sharing the same scaling parameter, see~\cite{Xie-15,LiYao-21}. However, considering few\nobreakdash-body interactions with distinct scaling parameters can be of interest in some cases. For example (see~\cite{NguRic-22-arxiv}), in the one-dimensional case, this allows for the repulsive, two\nobreakdash-body interaction to compensate the (critical) attractive, three\nobreakdash-body interaction.

Finally, a well-known property of BEC for a system with (only) a two\nobreakdash-body interaction is that collapse occurs when the two\nobreakdash-body interaction is attractive and the number of atoms or the scattering length of the microscopic interaction exceeds a critical value~\cite{BraSacTolHul-95,BraSacHul-97,SacStoHul-98}.
One possible way to stabilize 2D focusing Bose gases against collapse is to take into account repulsive, three\nobreakdash-body interactions. These interactions can indeed counteract the attractive forces between the particles and prevent collapse from occurring. In particular, the repulsive, three\nobreakdash-body interactions can lead to the formation of stable droplets or clusters of particles, which can help to stabilize the system. This effect was first predicted in~\cite{EsrGreZhoLin-96} and studied, e.g., in~\cite{GamFreTom-99,AkhDasVag-99,GamFreTomCho-00}. Moreover, experimental studies have confirmed the stabilization of 2D Bose gases with repulsive, three\nobreakdash-body interactions against collapse~\cite{JosRic-97}.

\subsection{Model}
In this paper, we consider a 2D Bose gas with a scaled attractive, two\nobreakdash-body interaction and a scaled repulsive, three\nobreakdash-body interaction, trapped in a quasi 2D layer by means of a purely anharmonic potential. Namely, we consider a system of~$N\geq 3$ identical bosons in~$\R^2$, described by the nonrelativistic Hamiltonian
\begin{align}\label{hamiltonian}
	H_{a,b,N} := \sum_{i=1}^N\left(-\Delta_{x_i} + |x_i|^s\right) &- \frac{a}{N-1}\sum_{1\leq i<j\leq N} U_{N^\alpha}(x_i-x_j) \nonumber \\
	& + \frac{b}{(N-1)(N-2)} \sum_{1\leq i<j<k\leq N} W_{N^\beta}(x_i-x_j, x_i-x_k) \,,
\end{align}
acting on $L^2_{\mathrm{sym}}(\R^{2N})$. The parameter $s>0$ stands for the power of the anharmonic trap ($s=2$ being commonly used in laboratory experiments). Moreover, the parameters $a\geq0$ and $b\geq0$ are the strengths of the two- and three\nobreakdash-body interactions, respectively. The interaction terms $U_{N^\alpha}$ and $W_{N^\beta}$ are scaled through the parameters $\alpha,\beta>0$,
\begin{equation}\label{scaled-interaction}
	U_{N^\alpha}(x) := N^{2\alpha}U(N^\alpha x) \quad \text{and} \quad W_{N^\beta}(x,y) : = N^{4\beta}W(N^\beta x, N^\beta y) \,,
\end{equation}
in such a way that we may expect a well-defined semiclassical theory in the limit $N\to+\infty$. Finally, we assume on one hand that the two\nobreakdash-body interaction $-aU$ is attractive and even:
\begin{equation}\label{condition:potential-two-body}
	U(x)=U(-x) \geq 0, \quad \text{ with } \quad U \in L^1 \cap L^\infty(\R^2) \quad \text{ and } \quad \int_{\R^2}U(x) \dix = 1 \,.
\end{equation}
On the other hand, the three\nobreakdash-body interaction $bW$ has to be repulsive for matter stability and we assume
\begin{equation}\label{condition:potential-three-body}
	W(x,y) = W(y,x) \geq 0, \quad W \in L^1 \cap L^\infty(\R^2 \times \R^2), \quad \text{ and } \quad \iint_{\R^4} W(x,y) \dix \diy = 1\,,
\end{equation}
as well as the symmetry condition
\begin{equation}\label{condition:potential-three-body-symmetry}
	W(x-y, x-z)=W(y-x,y-z)=W(z-y,z-x)\,.
\end{equation}

We are interested in the large-$N$ behavior of the ground state energy of~$H_{a,b,N}$, given by
\begin{equation}\label{energy:quantum}
	E_{a,b,N}^{\mathrm{QM}} := N^{-1} \inf\left\{\langle\Psi_N |H_{a,b,N}| \Psi_N \rangle : \Psi_N \in L^2_{\mathrm{sym}}(\R^{2N}), \norm{\Psi_N}_{L^2}=1 \right\}
\end{equation}
and the corresponding ground states. The usual mean-field approximation suggests to restrict wave functions to the factorized ansatz of~$N$ particles
\begin{equation}\label{eq:BEC}
	\Psi_N(x_1,\ldots,x_N) \approx v^{\otimes N}(x_1,\ldots,x_N) := v(x_1)\ldots v(x_N) \,.
\end{equation}
Inserting in the energy functional the above uncorrelated state, with the normalization condition $\norm{v}_{L^2}=1$, we obtain the Hartree energy functional
\begin{align}\label{functional:hartree-inhomogeneous}
	\cE_{a, b,N}^{\mathrm{H}}(v) :=&{} \frac{\langle v^{\otimes N}, H_{a,b,N} v^{\otimes N} \rangle}{N} \nonumber \\
	=&{}
	\begin{multlined}[t]
		\int_{\R^2}\left( |\nabla v(x)|^2 + |x|^s |v(x)|^2 \right ) \dix - \frac{a}{2}\iint_{\R^4} U_{N^\alpha}(x-y)|v(x)|^2|v(y)|^2 \dix \diy \\
		+ \frac{b}{6}\iiint_{\R^6} W_{N^\beta}(x-y, x-z)|v(x)|^2|v(y)|^2|v(z)|^2 \dix \diy \diz \,.
	\end{multlined}
\end{align}
The corresponding Hartree ground state energy, given by
\begin{equation}\label{energy:hartree-inhomogeneous}
	E_{a,b,N}^{\mathrm{H}} := \inf\left\{\cE_{a, b,N}^{\mathrm{H}}(v):v\in H^1(\R^2), \norm{v}_{L^2}=1\right\},
\end{equation}
is thus an upper bound to the many\nobreakdash-body ground state energy: $E_{a,b,N}^{\mathrm{H}} \geq E_{a,b,N}^{\mathrm{QM}}$. The Hartree functional, which is commonly used to describe the mean-field approximation in many\nobreakdash-body systems, can be formally related to the \NLS functional under certain conditions. In our setting, as $N\to+\infty$, the potentials $U_{N^\alpha}(x-y)$ and $W_{N^\beta}(x-y, x-z)$ converge to the delta functions $\delta_{x=y}$ and $\delta_{x=y=z}$, respectively. The Hartree functional therefore formally boils down to the \NLS functional
\begin{equation}\label{functional:nls-inhomogeneous}
	\cE_{a,b}^{\NLS}(v) := \int_{\R^2}\left( |\nabla v(x)|^2 + |x|^s|v(x)|^2 - \frac{a}{2} |v(x)|^4 + \frac{b}{6} |v(x)|^6 \right) \dix \,,
\end{equation}
with the associated \NLS ground state energy
\begin{equation}\label{energy:nls-inhomogeneous}
	E_{a,b}^{\NLS} := \inf\left\{\cE^{\NLS}_{a,b}(v):v\in H^1(\R^2), \norm{v}_{L^2}=1\right\}.
\end{equation}

Note that the cubic term in~\eqref{functional:nls-inhomogeneous} is focusing ($a\geq0$), hence the quintic term has to be defocusing ($b\geq0$) in order to have a chance for the energy to be bounded below. This explains our assumption that the three\nobreakdash-body interaction is \emph{repulsive} ($bW \geq 0$) once the two\nobreakdash-body interaction has been assumed \emph{attractive} ($-aU \leq 0$).

The effective cubic--quintic functional~\eqref{functional:nls-inhomogeneous} is to the combinaison of a two- and a three\nobreakdash-body interactions what the cubic \NLS functional is to a purely two\nobreakdash-body interaction. The validity of the \NLS theory with a purely attractive, two\nobreakdash-body interaction potential ---$a>0=b$ in~\eqref{functional:nls-inhomogeneous}--- has been proved in seminal papers of Lewin, Nam, and Rougerie~\cite{Lewin-ICMP15,LewNamRou-16,LewNamRou-17,NamRou-20} (see also~\cite{CheHol-17b}), and the collapse and condensation of the many\nobreakdash-body ground states was studied in~\cite{LewNamRou-18-proc} (see also~\cite{GuoSei-14,Nguyen-20,GuoLuoYan-20,DinNguRou-23}). It is worth noting that, in the absence of three\nobreakdash-body interaction, the \NLS functional is unstable when the strength $a$ is above the critical mass $a_*$, which is the optimal constant in the Gagliardo--Nirenberg inequality
\begin{equation}\label{ineq:GN}
	\frac{a_*}{2} \norm{v}_{L^4}^4 \leq \norm{\nabla v}_{L^2}^2 \norm{v}_{L^2}^2,\quad \forall\, v\in H^1(\R^2) \,.
\end{equation}
Moreover, it is well-known that~\eqref{ineq:GN} has a positive radial symmetry optimizer $Q\in H^1(\R^2)$, which is the unique optimizer up to translations, multiplication by a complex factor, and scaling. Such optimizer solves the cubic \NLS equation (see~\cite{Weinstein-83})
\begin{equation}\label{eq:NLS}
	-\Delta Q + Q - Q^{3} = 0 \,.
\end{equation}
The critical value $a_*$ is then $a_* = \norm{Q}_{L^2}^2$ where $Q$ is the unique (up to translations) positive radial solution of~\eqref{eq:NLS}. Defining
\begin{equation}\label{Def_Q0}
	Q_0 := \norm{Q}_{L^2}^{-1} Q \,,
\end{equation}
it follows from~\eqref{ineq:GN} and~\eqref{eq:NLS} that
\begin{equation}\label{norms_Q_0}
	\norm{\nabla Q_0}_{L^2}^2 = \norm{Q_0}_{L^2}^2 = \frac{a_*}{2}\norm{Q_0}_{L^4}^4 = 1 \,.
\end{equation}

\subsection{Main results}
In this work, we derive rigorously the 2D cubic--quintic \NLS theory as the mean-field model of Bose gases. An interesting property that we establish is that the system is always stable due to the repulsive, three\nobreakdash-body interaction ($b>0$), but that the attractive, two\nobreakdash-body interaction can lead to the collapse of the system as $b$ tends to~$0$, depending on the (limit) value of~$a$. We study in particular the attractive BECs with repulsive, three\nobreakdash-body interactions in the \emph{critical} regime of collapse where $b$ tends to~$0$ while $a$ approaches the critical blow-up value $a_*$ (from above or below). This is obtained using the cubic--quintic \NLS functional (also known as Ginzburg--Gross--Pitaevskii).

In such collapse regime, we prove that the blowup profile of the \NLS ground states is given by the solution to the (critical) cubic \NLS in 2D. In the absence of three\nobreakdash-body interactions, this has been previously studied, e.g., in~\cite{GuoSei-14,LewNamRou-18-proc,GuoLuoYan-20,DinNguRou-23}.

In the many\nobreakdash-body theory, we consider the many-boson Hamiltonian~\eqref{hamiltonian}, interacting with rescaled potentials, and we prove the condensation and collapse of the many\nobreakdash-body ground states under the constraints $0 < \alpha < 1/8$ and $0 < \beta < 1/16$, using the interpolation Hartree theory. When the attractive, two\nobreakdash-body interaction is too negative ---namely, when $a \geq a_*$---, we need the extra assumption $\alpha<\beta$ in order for the three\nobreakdash-body interaction to compensate the two\nobreakdash-body interaction. Finally, in the collapse regime, we justify the validity of the \NLS theory with complete BEC at the point of blow-up.

It is worth noting that, for an attractive BEC with a repulsive, three\nobreakdash-body interaction, the system traps itself due to the attractive, two\nobreakdash-body interaction. Therefore, it is possible to remove the external potential and to consider the BECs in the full infinite space. In the 2D homogeneous case, an interesting phenomenon, which differs from the inhomogeneous case, occurs in the presence of a \emph{non-trivial}, repulsive, three\nobreakdash-body interaction: the attractive, two\nobreakdash-body interaction has to be sufficiently negative ($a>a_*$) in order to obtain the existence of \NLS ground states. The condensation in the mean-field limit and the collapse phenomenon are then also observed in the Hartree theory (see Appendix~\ref{app:homogeneous}). In the many\nobreakdash-body theory, however, a true many\nobreakdash-body wave function does not exist due to the translation-invariance and the linearity of the many\nobreakdash-body system. Even though we could consider \emph{approximate} ground states, their condensation is not expected due to their superposition in the many\nobreakdash-body theory. It stays therefore an \emph{open problem} to establish the convergence of quantum energy (in the infinite space) to the homogeneous \NLS energy.

\medskip

We now state precisely our main results. 
In the first part of the paper, we consider the minimization problem~\eqref{energy:nls-inhomogeneous}. Our first result is a classification of the values of the parameters $a$ and $b$ for which there exist \NLS ground states.
\begin{theorem}[Existence of \NLS ground states]\label{thm:existence-nls}
	Let $a>0$, $b\in\R$, and $E_{a,b}^{\NLS}$ be given in~\eqref{energy:nls-inhomogeneous}.
	\begin{enumerate}[label=(\roman*)]
		\item\label{nls-GS-inexistence} If ($b=0$ and $a>a_*$) or $b<0$, then $E_{a,b}^{\NLS} = -\infty$.
		\item\label{nls-GS-critical} If ($b=0$ and $a=a_*$), then $E_{a_*, 0}^{\NLS} = 0$, but there are no ground states.
		\item\label{nls-GS-existence} If ($b=0$ and $a < a_*$) or $b>0$, then $E_{a,b}^{\NLS}$ has a ground state.
	\end{enumerate}
\end{theorem}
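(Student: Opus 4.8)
The plan is to prove the three cases separately, in each case working from the Gagliardo--Nirenberg inequality~\eqref{ineq:GN} and standard scaling/concentration-compactness arguments.

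\emph{Case \ref{nls-GS-inexistence}.} First I would exhibit a trial sequence along which $\cE_{a,b}^{\NLS}$ diverges to $-\infty$. When $b<0$, the quintic term is unbounded below on its own, so it suffices to use a concentrating family $v_\lambda(x) = \lambda\, u(\lambda x)$ (with $u$ fixed, normalized, say $u = Q_0$) and let $\lambda\to+\infty$: the kinetic and quartic terms scale like $\lambda^2$ while the quintic term scales like $\lambda^4$ with a negative coefficient, and the trapping term $\int |x|^s |v_\lambda|^2$ stays bounded (in fact $\to 0$), so $\cE_{a,b}^{\NLS}(v_\lambda)\to-\infty$. When $b=0$ and $a>a_*$, I would instead use the same concentrating family built from the Gagliardo--Nirenberg optimizer $Q_0$: by~\eqref{norms_Q_0} one gets $\cE_{a,0}^{\NLS}(v_\lambda) = \lambda^2\bigl(1 - a/a_*\bigr) + \int |x|^s |v_\lambda(x)|^2\dix$, and since $1 - a/a_* < 0$ and the last term is $O(\lambda^{-s})$, letting $\lambda\to+\infty$ gives $E_{a,0}^{\NLS} = -\infty$.

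\emph{Case \ref{nls-GS-critical}.} For $b=0$, $a=a_*$, the functional is $\cE_{a_*,0}^{\NLS}(v) = \int |x|^s|v|^2 + \bigl(\norm{\nabla v}_{L^2}^2 - \tfrac{a_*}{2}\norm{v}_{L^4}^4\bigr) \geq \int|x|^s|v|^2 \geq 0$ by~\eqref{ineq:GN}, so $E_{a_*,0}^{\NLS}\geq 0$. The same concentrating family $v_\lambda = \lambda Q_0(\lambda\cdot)$ now gives $\cE_{a_*,0}^{\NLS}(v_\lambda) = \int|x|^s|v_\lambda|^2 = \lambda^{-s}\int|x|^s Q_0^2 \to 0$, so $E_{a_*,0}^{\NLS} = 0$. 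For non-existence: if $v$ were a minimizer, then $\int|x|^s|v|^2 = 0$, forcing $v$ to be supported on $\{0\}$, hence $v\equiv 0$ in $H^1$, contradicting $\norm{v}_{L^2}=1$.

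\emph{Case \ref{nls-GS-existence}.} The heart of the matter is the existence statement when $b>0$ (the case $b=0$, $a<a_*$ being the classical one from~\cite{GuoSei-14,DinNguRou-23}, obtainable by the same method by dropping the quintic term). The strategy is the direct method: first show $E_{a,b}^{\NLS} > -\infty$ by combining~\eqref{ineq:GN} with the pointwise bound $\tfrac{a}{2}t^2 \leq \tfrac{b}{6}t^3 + C_{a,b}$ for $t = |v(x)|^2 \geq 0$ (Young's inequality), which gives $\cE_{a,b}^{\NLS}(v) \geq \int|x|^s|v|^2 - C_{a,b} > -\infty$; in fact, the same computation shows that on any minimizing sequence $(v_n)$ the quantities $\norm{\nabla v_n}_{L^2}$, $\norm{v_n}_{L^6}$, and $\int|x|^s|v_n|^2$ are all uniformly bounded. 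The confining potential $|x|^s\to+\infty$ then upgrades this to compactness: up to a subsequence $v_n\rightharpoonup v$ weakly in $H^1$, and the bound on $\int|x|^s|v_n|^2$ together with the local Rellich theorem prevents mass from escaping to infinity, so $v_n\to v$ strongly in $L^2(\R^2)$ and hence (by interpolation with the $H^1$ and $L^6$ bounds) strongly in $L^q$ for all $2\leq q<6$, in particular $L^4$; the quintic term is handled by weak lower semicontinuity (Fatou, since $W\geq 0$ forces $|v_n|^6\geq 0$), and the trapping term likewise by Fatou. Passing to the limit, $\norm{v}_{L^2}=1$ and $\cE_{a,b}^{\NLS}(v)\leq \liminf \cE_{a,b}^{\NLS}(v_n) = E_{a,b}^{\NLS}$, so $v$ is a minimizer.

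The main obstacle is the lower-semicontinuity/compactness bookkeeping in Case \ref{nls-GS-existence}: one must be careful that the focusing quartic term, which is \emph{not} weakly lower semicontinuous, is controlled by genuine strong $L^4$ convergence (coming from the confinement, not merely from the $H^1$ bound), while the defocusing quintic and the trapping terms only need Fatou. Once it is observed that the confining potential $|x|^s$ provides compactness of the minimizing sequence in $L^2$, and hence in $L^4$, the rest is routine. I would also note in passing that the same Young-type inequality shows the cubic--quintic energy density is bounded below uniformly, which is what makes the $b>0$ theory stable regardless of the size of $a$ --- the phenomenon highlighted in the introduction.
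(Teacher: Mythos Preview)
Your approach is essentially the same as the paper's: trial states built from $Q_0$ for cases~\ref{nls-GS-inexistence} and~\ref{nls-GS-critical}, and for case~\ref{nls-GS-existence} an absorption of the quartic term into the quintic one to obtain a priori bounds on $\norm{\nabla v_n}_{L^2}$ and $\norm{|\cdot|^{s/2}v_n}_{L^2}$, after which compactness is routine (the paper simply cites~\cite{GuoSei-14} for the latter step, whereas you spell it out).

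One small correction is needed in case~\ref{nls-GS-existence}: the pointwise inequality $\tfrac{a}{2}t^2 \leq \tfrac{b}{6}t^3 + C_{a,b}$ is true for $t\geq 0$, but integrating it over $\R^2$ produces an infinite constant, so it does not yield $\cE_{a,b}^{\NLS}(v)\geq \int|x|^s|v|^2 - C_{a,b}$ as written. What you want is $\tfrac{a}{2}t \leq \tfrac{b}{6}t^2 + C_{a,b}$ (still with $t=|v(x)|^2$), which after multiplying by $|v(x)|^2$ and integrating gives $\tfrac{a}{2}\norm{v}_{L^4}^4 \leq \tfrac{b}{6}\norm{v}_{L^6}^6 + C_{a,b}\norm{v}_{L^2}^2$; this is exactly~\eqref{ineq:2-to-3} in the paper (obtained there via H\"older plus Young), and it delivers the lower bound $\cE_{a,b}^{\NLS}(v) \geq \norm{\nabla v}_{L^2}^2 + \int|x|^s|v|^2 - C_{a,b}$ that you claim. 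With this fix your argument goes through.
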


A preliminary remark is that a ground state of~$E_{a,b}^{\NLS}$, when it exists, can be chosen to be nonnegative. This follows from the fact that $\norm{ \nabla v }_{L^2} \geq \norm{ \nabla |v| }_{L^2}$ for any $v \in H^1(\R^2)$ (see, e.g.,~\cite[Theorem 7.8]{LieLos-01}) and hence
$\cE_{a,b}^{\NLS}(v) \geq \cE_{a,b}^{\NLS}(|v|)$. In the absence of three\nobreakdash-body interactions, i.e., $b=0$, the classification of \NLS ground states was studied in~\cite{GuoSei-14}. In the case $b>0$, the repulsive quintic term counteracts the attractive cubic term since
\begin{equation}\label{ineq:2-to-3}
	\norm{v}_{L^4}^4 \leq \norm{v}_{L^6}^3 \norm{v}_{L^2} \leq \varepsilon \norm{v}_{L^6}^6 + \frac{1}{4\varepsilon} \norm{v}_{L^2}^2,\quad \forall \varepsilon > 0 \,,
\end{equation}
by the H\"older inequality, leading to a stabilization of the \NLS system against collapse.

However, Theorem~\ref{thm:existence-nls}-\emph{\ref{nls-GS-critical}} suggests that \NLS ground states might still collapse in the limit $a=a_n \to a_*$ and $b=b_n \searrow 0$, in the sense that $\norm{\nabla v_n}_{L^2} \to+\infty$ where $\{v_n\}_n$ is a sequence of ground states of~$E^{\NLS}_{a_n, b_n}$. Our next result therefore concerns the collapse of the \NLS ground states. Defining
\begin{equation}\label{Def-Q-s}
	 \cQ_6 := \frac{1}{6} \norm{Q_0}_{L^6}^6 \quad \text{and} \quad \cQ_s := s \norm{ |\cdot|^{\frac{s}{2}} Q_0 }_{L^2}^2
\end{equation}
for $Q_0$ in~\eqref{Def_Q0}, the variational principle gives
\begin{equation}\label{nls-energy-upperbound}
	E_{a,b}^{\NLS} \leq \cE_{a,b}^{\NLS}\left(\ell^{-1} Q_0(\ell^{-1} \cdot)\right) = \left(1 - \frac{a}{a_*}\right) \ell^{-2} + b \cQ_6 \ell^{-4} + \cQ_s \frac{\ell^{s}}{s}
\end{equation}
for all $\ell > 0$. By optimizing the right hand side of~\eqref{nls-energy-upperbound} over $\ell > 0$, it gives an upper bound to the \NLS energy. Depending on the collapse regime, we can actually also determine the main order in the expansion of the \NLS energy, by establishing explicitly the blowup profile of the corresponding \NLS ground states. Those results are summarized in the following theorem.

\begin{theorem}[Collapse of the \NLS ground states]\label{thm:collapse-nls}
	Let $s>0$, $\zeta \geq 0$. Let $\cQ_s$ and $\cQ_6$ be as in~\eqref{Def-Q-s}. Let $\{a_n\}_n,\{b_n\}_n \subset [0, +\infty)$ satisfy $a_n \to a_*$ and $b_n \searrow 0$ as $n\to+\infty$, with $\max\{0, a_* - a_n\} + b_n > 0$, and assume that
	\begin{equation}\label{thm-collapse-nls-condition-ratio-sequences}
		\lim_{n\to+\infty} \frac{2(a_* - a_n)}{a_* \cQ_s} \left( \frac{4 \cQ_6 b_n}{\cQ_s} \right)^{ -\frac{s+2}{s+4} } = (1-\zeta) \zeta^{-\frac{s+2}{s+4}} \in (-\infty,+\infty]\,.
	\end{equation}	
	Finally, let $\{v_n\}_n$ be a sequence of (approximate) ground states of~$E_{a_n, b_n}^{\NLS}$ given by~\eqref{energy:nls-inhomogeneous}. Then,
	\begin{equation}\label{thm:collapse-nls-inhomogeneous-ground-state}
		\lim_{n\to+\infty} \ell_n v_n(\ell_n \cdot) = Q_0
	\end{equation}
	strongly in~$H^1(\R^2)$ for the whole sequence, where $Q_0$ is given by~\eqref{Def_Q0} and
	\begin{equation}\label{thm:collapse-nls-blowup-length}
		\ell_n =
		\left\{
		\begin{aligned}
			&\left( \frac{2( a_* - a_n )}{a_* \cQ_s (1-\zeta)} \right)^{\frac{1}{s+2}} &\text{if } \zeta \neq 1\,, \\
			&\left(\frac{4 \cQ_6 b_n}{ \cQ_s \zeta} \right)^{\frac{1}{s+4}} &\text{if } \zeta \neq 0 \,.
		\end{aligned}
		\right.
	\end{equation}
	Furthermore,
	\begin{equation}\label{blowup:nls-energy}
		E_{a_n, b_n}^{\NLS} = \left( \frac{1}{2} + \frac{1}{s} - \frac{\zeta}{4} + o(1)\right) \cQ_s \ell_n^s \,.
	\end{equation}
\end{theorem}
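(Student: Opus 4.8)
The plan is to exploit the scaling structure already visible in the upper bound~\eqref{nls-energy-upperbound}: after rescaling $v_n(x) \mapsto \ell_n v_n(\ell_n x) =: u_n(x)$, the rescaled energy should concentrate all its mass on the profile $Q_0$, which is the unique (up to the usual symmetries) optimizer forcing equality in the Gagliardo--Nirenberg inequality~\eqref{ineq:GN}. First I would establish the matching lower bound on $E_{a_n,b_n}^{\NLS}$. Writing any normalized $v \in H^1(\R^2)$ and using~\eqref{ineq:GN} to bound $-\tfrac{a_n}{2}\|v\|_{L^4}^4 \geq -\tfrac{a_n}{a_*}\|\nabla v\|_{L^2}^2$, one gets
\begin{equation*}
	\cE_{a_n,b_n}^{\NLS}(v) \geq \Bigl(1 - \tfrac{a_n}{a_*}\Bigr) \|\nabla v\|_{L^2}^2 + \||\cdot|^{s/2} v\|_{L^2}^2 + \tfrac{b_n}{6}\|v\|_{L^6}^6 \,.
\end{equation*}
When $a_n < a_*$ all three terms are nonnegative and one optimizes a three-term expression in the two effective ``lengths'' $\|\nabla v\|_{L^2}^{-1}$ and the spatial spread; when $a_n \geq a_*$ the first term is $\leq 0$ and must be absorbed, which is exactly where the quintic term $\tfrac{b_n}{6}\|v\|_{L^6}^6$ together with inequality~\eqref{ineq:2-to-3} (or a direct interpolation $\|\nabla v\|_{L^2}^2 \lesssim \|v\|_{L^6}^6 + \dots$ adapted to the regime) is needed, producing a bound of the same order $\cQ_s \ell_n^s$ as the upper bound. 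The hypothesis~\eqref{thm-collapse-nls-condition-ratio-sequences} with the parameter $\zeta$ encodes precisely which of the two competing scales dominates the balance, and the two formulas for $\ell_n$ in~\eqref{thm:collapse-nls-blowup-length} are the two (asymptotically equivalent, up to the $\zeta$-dependent constants) solutions of that optimization. Carrying this out gives~\eqref{blowup:nls-energy} as a two-sided estimate.

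Second, for the convergence~\eqref{thm:collapse-nls-inhomogeneous-ground-state} I would run a concentration-compactness / rigidity argument on the rescaled minimizers $u_n := \ell_n v_n(\ell_n\cdot)$. Plugging the sharp energy asymptotics back in, one reads off that $\|\nabla u_n\|_{L^2}^2 \to 1$, $\|u_n\|_{L^2}^2 = 1$, and $\tfrac{a_*}{2}\|u_n\|_{L^4}^4 \to 1$, i.e.\ $u_n$ is an asymptotically optimizing sequence for~\eqref{ineq:GN}; simultaneously the rescaled trapping term $\ell_n^{s}\||\cdot|^{s/2}u_n\|_{L^2}^2$ and the rescaled quintic term $b_n\ell_n^{-4}\|u_n\|_{L^6}^6$ must stay bounded (their sum is $O(\cQ_s\ell_n^s)$ times the appropriate factor), so $u_n$ is bounded in $H^1$, has no mass escaping to spatial infinity (the trap), and $\|u_n\|_{L^6}$ stays bounded. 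Uniqueness of the Gagliardo--Nirenberg optimizer (up to translation, phase, dilation) forces, along a subsequence, $u_n(\cdot - y_n) \to \theta Q_0$ weakly, and then $\|u_n\|_{L^2} = \|Q_0\|_{L^2}$ with weak convergence upgrades to strong $L^2$, hence (by the gradient norms also converging) strong $H^1$ convergence. One then fixes the symmetries: $v_n$ nonnegative (as noted after Theorem~\ref{thm:existence-nls}) kills the phase, and a careful choice of the scaling parameter $\ell_n$ — which is why it is pinned down explicitly in~\eqref{thm:collapse-nls-blowup-length} rather than left free — together with, say, requiring the translation to be absorbed by centering, removes the translation and dilation ambiguity so the limit is exactly $Q_0$, for the whole sequence by a standard subsequence-uniqueness argument.

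The main obstacle I expect is the lower-bound analysis in the regime $a_n \geq a_*$, where the naive estimate gives a negative gradient contribution and one genuinely needs the defocusing quintic term to rescue boundedness-from-below with the \emph{correct constant}. The delicate point is not merely getting some bound of order $\ell_n^s$, but getting the sharp leading constant $\tfrac12 + \tfrac1s - \tfrac{\zeta}{4}$ in~\eqref{blowup:nls-energy} uniformly across the allowed range $\zeta \in [0,+\infty]$, which requires tracking the interplay between $\|\nabla v\|_{L^2}$, $\|v\|_{L^6}$, and the near-saturation of~\eqref{ineq:GN} very precisely — essentially a quantitative (stability) version of Gagliardo--Nirenberg combined with the three-term scaling optimization. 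A secondary subtlety is handling the endpoint cases $\zeta = 0$ (pure two-body-driven collapse, $b_n$ negligible) and $\zeta = 1$ or $\zeta = +\infty$ (where one of the two formulas for $\ell_n$ degenerates), which should be dealt with by choosing the appropriate one of the two expressions in~\eqref{thm:collapse-nls-blowup-length} and checking the limits match; and, for the convergence statement, ruling out dichotomy/vanishing in the concentration-compactness step, which follows because the non-vanishing of $\|u_n\|_{L^4}$ is forced by $\tfrac{a_*}{2}\|u_n\|_{L^4}^4 \to 1$.
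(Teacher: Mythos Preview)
Your overall architecture (rescale by $\ell_n$, show the rescaled sequence is asymptotically optimizing for Gagliardo--Nirenberg, invoke rigidity of the optimizer) matches the paper's. But the logical order you propose is inverted, and the step that actually does the work is missing.

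You claim first to obtain the sharp two-sided bound~\eqref{blowup:nls-energy} directly from~\eqref{ineq:GN} and~\eqref{ineq:2-to-3}, and \emph{then} to read off $\|\nabla u_n\|_{L^2}^2\to1$, $\tfrac{a_*}{2}\|u_n\|_{L^4}^4\to1$ from it. This is circular: those crude inequalities give a lower bound of the right \emph{order} $\ell_n^s$ but not with the sharp constant $\tfrac12+\tfrac1s-\tfrac{\zeta}{4}$, because after applying~\eqref{ineq:GN} the remaining minimization (over $\|\nabla v\|_{L^2}$, $\||\cdot|^{s/2}v\|_{L^2}$, $\|v\|_{L^6}$ with only $\|v\|_{L^2}=1$ as a constraint) is not saturated by $Q_0$. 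The paper runs it the other way: the upper bound plus~\eqref{ineq:GN} give only enough to extract a weak $H^1$ limit $w$ and show $\|\nabla w_n\|_{L^2}^2-\tfrac{a_*}{2}\|w_n\|_{L^4}^4\to0$, hence $w(x)=\sqrt{t}\,Q_0(\sqrt{t}\,x+x_0)$ for \emph{some} $t>0$, $x_0\in\R^2$. The sharp lower bound then falls out only \emph{after} one proves $t=1$ and $x_0=0$.

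That identification is the step you wave away (``choice of $\ell_n$'', ``centering''), and it is where the specific structure of the problem enters. In the paper, passing to the limit in the full rescaled energy inequality and evaluating all norms on $\sqrt{t}\,Q_0(\sqrt{t}\cdot+x_0)$ produces the scalar inequality
\[
\tfrac12+\tfrac1s-\tfrac{\zeta}{4}\ \ge\ \tfrac{1-\zeta}{2}\,t+\tfrac{\zeta}{4}\,t^2+\tfrac{1}{s}\,t^{-s/2}\ =:\ g_\zeta(t),
\]
whose unique minimum is $g_\zeta(1)=\tfrac12+\tfrac1s-\tfrac{\zeta}{4}$, forcing $t=1$; and the trap term contributes $\||\cdot-x_0|^{s/2}Q_0\|_{L^2}^2\ge\||\cdot|^{s/2}Q_0\|_{L^2}^2$ with equality iff $x_0=0$ (strict rearrangement, since $Q_0$ is strictly symmetric decreasing and $|\cdot|^s$ strictly symmetric increasing). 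Your ``centering'' suggestion does not work for approximate ground states, which need not be radial; the trap is what kills the translation. Also note the paper splits into the overlapping cases $\zeta\ne0$ and $\zeta\ne1$, using whichever formula for $\ell_n$ is nondegenerate, rather than treating $\zeta=0,1$ as endpoints.
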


\begin{remark}\label{rem:collapse-nls-inhomogeneous}\leavevmode
	\begin{itemize}[leftmargin=*]
		\item A sequence $\{v_n\}_n$ of \emph{approximate} ground states of~$E_{a_n, b_n}^{\NLS}$ is a sequence belonging to the minimizing domain and satisfying, as $n\to+\infty$,
		\begin{equation}\label{nls:inhomogeneous-approximate-ground-state}
			E_{a_n, b_n}^{\NLS} \leq \cE_{a_n, b_n}^{\NLS}(v_n) \leq (1+o(1))E_{a_n, b_n}^{\NLS} \,.
		\end{equation}
		
		\item The hypothesis $\max\{0, a_* - a_n\} + b_n > 0$ is equivalent to ($b_n>0$ or ($b_n = 0$ and $a_n<a_*$)), since we assume $a_n, b_n\geq0$. This hypothesis is made in order to ensure the existence of \NLS ground states, thanks to Theorem~\ref{thm:existence-nls}. 
		
		\item In the case $a_n \to a_*$ (from below or above) as $n\to+\infty$ while $b_n=b>0$ is fixed, $E_{a_n, b}^{\NLS}$ converges to $E_{a_*, b}^{\NLS}$ and the corresponding sequence of ground states converges to a ground state of~$E_{a_*, b}^{\NLS}$. The situation is similar in the case of a fixed $a_n=a < a_*$ while $b_n \searrow 0$ as $n\to+\infty$. Finally, note that the asymptotic behavior of~$E_{a_n,0}^{\NLS}$ and its ground states, in the limit $a_n \nearrow a_*$ as $n\to+\infty$, was already studied in~\cite{GuoSei-14}. 
		
		\item Our result in Theorem~\ref{thm:collapse-nls} covers all possible cases where the speeds of convergence of $a_n \to a_*$ and of $b_n \searrow 0$, as $n\to+\infty$, compare between them in a ``proportionality'' fashion (including the ``infinite proportionality'' case $a_n = a_*$), but with the notable exception of the case where $a_n$ converges from above ($a_n \searrow a_*$) with a convergence rate slower in order of magnitude than the one of $b_n \searrow 0$, which corresponds to the excluded case where the limit in~\eqref{thm-collapse-nls-condition-ratio-sequences} is $-\infty$, i.e., $\zeta=+\infty$.
		
			Roughly speaking, this latter case is similar to considering $a_n=a > a_*$ fixed and $b_n \searrow 0$. Within this framework, the \NLS functional is unstable, by Theorem~\ref{thm:existence-nls}-\emph{\ref{nls-GS-inexistence}}, and the \NLS ground states still blows up, in the sense that the \NLS energy is unbounded from below. We refer to Appendix~\ref{app:homogeneous} for further discussions.
	\end{itemize}
\end{remark}

In the second part of the paper, we turn to the $N$-particle Hamiltonian~\eqref{hamiltonian} with attractive, two- and repulsive, three\nobreakdash-body interaction potentials of the form~\eqref{scaled-interaction}. In the mean-field regime, we verify the validity of the effective cubic--quintic \NLS functional~\eqref{functional:nls-inhomogeneous}. As usual, the convergence of ground states is formulated using $k$-particles reduced density matrices, defined for $0\leq k \leq N$ and any $\Psi_N \in\gH^N$ as the partial trace
\[
	\gamma_{\Psi_N}^{(k)} := \tr_{k+1\to N} | \Psi_N \rangle \langle \Psi_N |\,.
\]
Equivalently, $\gamma_{\Psi_N}^{(k)}$ is the trace class operator on $\gH^k$ with kernel
\[
	\gamma_{\Psi_N}^{(k)}(x_1,\ldots,x_k;y_1,\ldots,y_k) := \int_{\R^{2(N-k)}}\overline{\Psi_N(x_1,\ldots,x_k;Z)}{\Psi_N(y_1,\ldots,y_k;Z)} \di Z \,.
\]
One of the main features of the reduced density matrices is that we can write
\[
	\frac{\langle\Psi_N | H_{a,b,N} | \Psi_N \rangle}{N} = \tr\left[ h \gamma_{\Psi_N}^{(1)} \right] - \frac{a}{2}\tr\left[U_{N^\alpha} \gamma_N^{(2)} \right] + \frac{b}{6}\tr\left[W_{N^\beta}\gamma_N^{(3)}\right],
\]
where $h_x := -\Delta_x+ |x|^s$ is the one-particle operator. Moreover, the Bose--Einstein condensation~\eqref{eq:BEC} is characterized properly by
\[
	\lim_{N\to+\infty} \tr \left|\gamma_{\Psi_N}^{(k)} - |v^{\otimes k}\rangle \langle v^{\otimes k}|\right| = 0,\quad \forall\, k=1,2,\dots\,.
\]

Our main result on the many\nobreakdash-body problem is the following.
\begin{theorem}[Condensation and collapse of the many\nobreakdash-body ground states]\label{thm:many-body}
	Let $0<\alpha<1/8$ and $0 < \beta < 1/16$. Assume that $U$ and $W$ satisfy~\eqref{condition:potential-two-body}--\eqref{condition:potential-three-body-symmetry}.
	\begin{enumerate}[label=(\roman*)]
		\item Let $a,b > 0$ be fixed with $a < a_*$ or ($a\geq a_*$ and $\alpha<\beta$). Then,
			\begin{align}\label{cv:qm-nls-energy}
				\lim_{N\to+\infty} E_{a,b,N}^{\mathrm{QM}} = E_{a,b}^{\NLS} >-\infty \,.
			\end{align}
			Moreover, for any sequence of ground states $\{\Psi_N \}_N$ of $E_{a,b,N}^{\mathrm{QM}}$ given by~\eqref{energy:quantum}, there exists a Borel probability measure~$\nu$ supported on 
			\[
				\cM^{\NLS} :=\{\NLS \text{ ground states}\}
			\]
			such that, along a subsequence,
			\begin{equation}\label{cv:qm-nls-ground-state}
				\lim_{N\to+\infty}\tr \left| \gamma_{\Psi_N}^{(k)} - \int |v^{\otimes k} \rangle \langle v^{\otimes k}| \di\nu(v) \right| =0,\quad \forall\, k=1,2,\dots\,.
			\end{equation}
			If $\cM^{\NLS}$ has, up to a phase, a unique ground state~$v$, then
			\[
				\lim_{N\to+\infty}\tr \left| \gamma_{\Psi_N}^{(k)} - |v^{\otimes k} \rangle \langle v^{\otimes k}| \right| =0,\quad \forall\, k=1,2,\dots
			\]
			for the whole sequence.
		\item Let $Q_0$, $\cQ_6$, $\cQ_s$, $\zeta$, $\{a_N\}_N$, $\{b_N\}_N$, and $\{\ell_N \}_N$ be as in Theorem~\ref{thm:collapse-nls}. Assume further that $|x|U(x) \in L^1(\R^2)$, that $\zeta \neq 2(s+2)/s$, that $\alpha<\beta$ if $\zeta \geq 1$, and that $\ell_N \sim N^{-\eta}$ with
		\begin{equation}\label{collapse:speed-many-body}
			0 < \eta < \min\left\{ \frac{\alpha}{s+3}, \frac{1-8\alpha}{4s}, \beta, \frac{1-16\beta}{4s} \right\}.
		\end{equation}
		Then,
		\begin{equation}\label{blowup:qm-energy}
			E_{a_N, b_N, N}^{\mathrm{QM}} = E_{a_N, b_N}^{\NLS} (1+o(1)) = \left( \frac{1}{2} + \frac{1}{s} - \frac{\zeta}{4} + o(1)\right) \cQ_s \ell_N^s \,.
		\end{equation}
		Moreover, for $\Phi_N = \ell_N \Psi_N(\ell_N \cdot)$ where $\{\Psi_N \}_N$ is any sequence of ground states of~$E_{a_N, b_N, N}^{\mathrm{QM}}$, given by~\eqref{energy:quantum}, we have
		\begin{equation}\label{blowup:qm-ground-state}
			\lim_{N\to+\infty} \tr \left| \gamma_{\Phi_N}^{(k)} - | Q_0^{\otimes k} \rangle \langle Q_0^{\otimes k} | \right| = 0,\quad \forall\, k=1,2,\dots\,.
		\end{equation}
	\end{enumerate}
\end{theorem}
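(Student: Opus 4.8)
The plan is to prove~(i) by the now-standard combination of a variational upper bound and a quantum de Finetti lower bound, and to obtain~(ii) by rescaling around the collapsing length~$\ell_N$ so as to reduce to a quantitative version of~(i). \emph{For~(i):} the upper bound comes from testing $\langle\Psi_N,H_{a,b,N}\Psi_N\rangle/N$ on the factorised state $v^{\otimes N}$, with $v$ a minimiser of~$\cE^{\NLS}_{a,b}$ (which exists by Theorem~\ref{thm:existence-nls} since $b>0$, and is smooth and fast-decaying by elliptic regularity); since $U_{N^\alpha}$ and $W_{N^\beta}$ converge to Dirac masses, $\cE^{\mathrm H}_{a,b,N}(v)\to\cE^{\NLS}_{a,b}(v)=E^{\NLS}_{a,b}$, hence $\limsup_N E^{\mathrm{QM}}_{a,b,N}\le E^{\NLS}_{a,b}$. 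For the lower bound I would first establish a uniform-in-$N$ stability bound $E^{\mathrm{QM}}_{a,b,N}\ge -C$ with the companion a priori estimates $\tr[h\gamma^{(1)}_{\Psi_N}]\le C$ and $\tr[W_{N^\beta}\gamma^{(3)}_{\Psi_N}]\le C$, $h=-\Delta+|x|^s$: for $a<a_*$ a Gagliardo--Nirenberg-type bound on the scaled two-body term against the kinetic energy suffices, while for $a\ge a_*$ one must also absorb the excess into the nonnegative three-body term, and this closure is exactly where $\alpha<\beta$ is used (heuristically, if $\alpha>2\beta$ and $a>a_*$, a macroscopic clump on the two-body scale $N^{-\alpha}$ --- which then lies outside the shorter three-body range $N^{-\beta}$ --- would drive $E^{\mathrm{QM}}_{a,b,N}\to-\infty$). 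Since $h$ has compact resolvent, the $\gamma^{(k)}_{\Psi_N}$ are tight, so by the quantum de Finetti theorem, along a subsequence, $\gamma^{(k)}_{\Psi_N}\to\int|v^{\otimes k}\rangle\langle v^{\otimes k}|\di\nu(v)$ in trace norm for a Borel probability measure~$\nu$ on the unit sphere of~$L^2(\R^2)$. Passing to the $\liminf$ in $\langle\Psi_N,H_{a,b,N}\Psi_N\rangle/N$: the kinetic and trap terms are weakly lower semicontinuous, while the two- and three-body terms, which the singular limit cannot handle directly, are controlled through a quantitative de Finetti estimate applied to coarse-grained versions of $U_{N^\alpha}$ and $W_{N^\beta}$, the total error vanishing precisely under $\alpha<1/8$ and $\beta<1/16$. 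This yields $\liminf_N E^{\mathrm{QM}}_{a,b,N}\ge\int\cE^{\NLS}_{a,b}(v)\di\nu(v)\ge E^{\NLS}_{a,b}$, forcing $\operatorname{supp}\nu\subseteq\cM^{\NLS}$ and proving~\eqref{cv:qm-nls-energy}--\eqref{cv:qm-nls-ground-state}, and (the limit being subsequence-independent when $\cM^{\NLS}$ is a single ray) the last assertion of~(i).

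\emph{For~(ii):} I would zoom in on the collapsing scale. With $\Phi_N(x_1,\dots,x_N)=\ell_N^{N}\Psi_N(\ell_N x_1,\dots,\ell_N x_N)$ one has $\langle\Psi_N,H_{a_N,b_N,N}\Psi_N\rangle=\ell_N^{-2}\langle\Phi_N,\widetilde H_N\Phi_N\rangle$, where $\widetilde H_N$ has the form of $H_{a_N,b_N,N}$ but with trap coefficient $\ell_N^{s+2}$, two-body potential $U_{\ell_N N^\alpha}$, three-body coefficient $b_N\ell_N^{-2}$ and three-body potential $W_{\ell_N N^\beta}$; by~\eqref{collapse:speed-many-body} the new parameters $\ell_N N^\alpha\sim N^{\alpha-\eta}$ and $\ell_N N^\beta\sim N^{\beta-\eta}$ still diverge, so $\widetilde H_N$ is again a (critical, attractive) mean-field Hamiltonian, and the infimum of its rescaled \NLS functional~$\widetilde\cE_N$ is $\ell_N^2E^{\NLS}_{a_N,b_N}\sim\cQ_s\ell_N^{s+2}$. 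The upper bound follows from $v_N^{\otimes N}$ with $v_N$ an \NLS ground state: by Theorem~\ref{thm:collapse-nls} $\ell_N v_N(\ell_N\cdot)\to Q_0$ in $H^1(\R^2)$, and a quantitative comparison of $\cE^{\mathrm H}_{a_N,b_N,N}(v_N)$ with $\cE^{\NLS}_{a_N,b_N}(v_N)$ --- the two-body difference, which competes with the near-cancellation of the kinetic and quartic terms, being $O(N^{-\alpha}\ell_N^{-3})$ by the first moment $|x|U\in L^1$, hence $o(\ell_N^s)$ exactly when $\eta<\alpha/(s+3)$, and the three-body difference, a mere relative-$o(1)$ correction to the already $O(\ell_N^s)$ quintic energy, needing only uniform continuity of~$|Q_0|^2$ and dominated convergence (no moment on~$W$) --- shows $\cE^{\mathrm H}_{a_N,b_N,N}(v_N)=E^{\NLS}_{a_N,b_N}(1+o(1))$, so that $\limsup_N E^{\mathrm{QM}}_{a_N,b_N,N}\le\bigl(\tfrac12+\tfrac1s-\tfrac\zeta4+o(1)\bigr)\cQ_s\ell_N^s$. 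For the matching lower bound I would rerun the argument of~(i) on~$\widetilde H_N$, now tracking every error quantitatively in~$N$: the one-body a priori bound gives a finite-$N$ de Finetti measure~$\nu_N$ for the $\gamma^{(k)}_{\Phi_N}$; the balance $\tr[-\Delta\,\gamma^{(1)}_{\Phi_N}]-\tfrac{a_N}{2}\tr[U_{\ell_N N^\alpha}\gamma^{(2)}_{\Phi_N}]\ge(1-a_N/a_*)\tr[-\Delta\,\gamma^{(1)}_{\Phi_N}]-(\mathrm{errors})$, together with the nonnegative rescaled quintic term and the trap, gives $\langle\Phi_N,\widetilde H_N\Phi_N\rangle/N\ge\ell_N^2E^{\NLS}_{a_N,b_N}-(\mathrm{errors})$ with de Finetti / coarse-graining errors that are $o(\ell_N^{s+2})$ precisely under the remaining constraints $\eta<(1-8\alpha)/(4s)$ and $\eta<(1-16\beta)/(4s)$ of~\eqref{collapse:speed-many-body}; the exclusion $\zeta\neq 2(s+2)/s$ is needed because there the coefficient $\tfrac12+\tfrac1s-\tfrac\zeta4$ vanishes and~\eqref{blowup:qm-energy} becomes vacuous. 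This proves~\eqref{blowup:qm-energy}; feeding it back gives $\int\bigl[\widetilde\cE_N(v)-\ell_N^2E^{\NLS}_{a_N,b_N}\bigr]\di\nu_N(v)=o(\ell_N^{s+2})$, so a $\nu_N$-typical~$v$ is an approximate minimiser of~$\widetilde\cE_N$; since such a minimiser, rescaled by~$\ell_N^{-1}$, is an approximate \NLS ground state, Theorem~\ref{thm:collapse-nls} forces it to be $H^1$-close to~$Q_0$, whence $\nu_N\to\delta_{Q_0}$ and~\eqref{blowup:qm-ground-state} follows.

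\emph{The main obstacle} is the quantitative lower bound for~$\widetilde H_N$ in the collapse regime: making all the de Finetti and coarse-graining errors explicit negative powers of~$N$, uniformly as $a_N\to a_*$ and $b_N\searrow 0$, and in particular retaining the vanishingly small defocusing three-body term throughout so that it can absorb both those errors and --- when $a_N\ge a_*$ --- the then-unfavourable quartic term, which is why $\alpha<\beta$ must reappear here. Making this bookkeeping close exactly under~\eqref{collapse:speed-many-body}, rather than under a strictly stronger hypothesis, is the delicate point; everything else is a careful but essentially routine quantitative rerun of~(i).
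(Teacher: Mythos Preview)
Your plan for~(i) is essentially the paper's: project onto $P=\1(h\le L)$, apply the information-theoretic quantitative de~Finetti, and optimise $L\sim N^{2\alpha+1/4}+N^{4\beta+1/4}$ (whence $\alpha<1/8$, $\beta<1/16$). The one organisational difference is that the paper does not attempt the stability bound or the $\alpha<\beta$ absorption directly at the many-body level; instead it factors everything through the \emph{Hartree} energy, proving once that $E^{\mathrm H}_{a,b,N}\ge E^{\mathrm{QM}}_{a,b,N}\ge E^{\mathrm H}_{a,b,N}-C(N^{2\alpha-1/4}+N^{4\beta-1/4})$ and then delegating both the $\alpha<\beta$ mechanism and the passage to $E^{\NLS}$ to a separate Hartree theorem (where the three-body term can genuinely be played against the two-body one via $N^\beta\varepsilon_N\to\infty$). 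The a~priori bound $\tr[h\gamma^{(1)}_{\Psi_N}]\le C$ is obtained by rerunning the same chain on $H_{a,b,N}-\lambda\sum h_{x_i}$.

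For~(ii) your route diverges: the paper does \emph{not} rescale the Hamiltonian. It simply reuses the inequality above with the Hartree collapse result $E^{\mathrm H}_{a_N,b_N,N}=E^{\NLS}_{a_N,b_N}+o(\ell_N^s)$ (valid under $\eta<\min\{\alpha/(s+3),\beta\}$), so that the remaining constraints $\eta<(1-8\alpha)/(4s)$ and $\eta<(1-16\beta)/(4s)$ come straight from asking $N^{2\alpha-1/4}+N^{4\beta-1/4}=o(\ell_N^s)$. For the states, it stays at the original scale and shows, by a Chebyshev argument on $\delta_N:=\int(\cE^{\mathrm H}_{a_N,b_N,N}(\sqrt{\rho_\gamma})-E^{\mathrm H}_{a_N,b_N,N})/|E^{\mathrm H}_{a_N,b_N,N}|\,\di\mu_{\Psi_N}\to0$, that $\mu_{\Psi_N}$-almost every $\gamma$ has $\sqrt{\rho_\gamma}$ an approximate \emph{Hartree} minimiser, then invokes the Hartree collapse theorem once more. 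Your rescaling is in principle viable, but after rescaling the trap becomes $\ell_N^{s+2}|x|^s$, the one-body operator degenerates, and the CLR count reads $\dim P\lesssim (L/\ell_N^2)^{1+2/s}$; redoing all the de~Finetti bookkeeping at that scale and checking it closes exactly under~\eqref{collapse:speed-many-body} is heavier than the paper's clean separation into ``many-body $\to$ Hartree (once, quantitative)'' and ``Hartree $\to$ \NLS\ in the collapse regime''.
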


It is worth noting that the condition $\alpha <\beta$ is imposed only if the attractive, two\nobreakdash-body interaction is too negative, i.e., $a\geq a_*$. This is needed to ensure that the three\nobreakdash-body interaction \emph{compensates} the two\nobreakdash-body interaction. This can be seen at the level of the Hartree theory (see Theorem~\ref{thm:collapse-hartree}). The condition is unnecessary when the strength of the two\nobreakdash-body interaction is below the threshold $a_*$, since in this case it can be controlled directly by the kinetic term using~\eqref{ineq:GN}. Moreover, note that the assumption $\zeta \neq 2(s+2)/s$ is related to the fact that the main order term in the \NLS energy is trivial at this value of $\zeta$.

The mathematical method used here is the information-theoretic quantum de~Finetti theorem developed recently by Rougerie~\cite{Rougerie-20a}. This method was used in the study of 2D Bose gases with only an attractive, two\nobreakdash-body interactions~\cite{NamRou-20} (see also~\cite{LewNamRou-16,LewNamRou-17} for related results) in the dilute regime, by using the second moment estimate. However, such an estimate is not available for systems with three\nobreakdash-body interactions in dimensional two and higher (see~\cite{NguRic-22-arxiv} for discussions in the one-dimensional case). Nevertheless, the method in~\cite{NamRou-20,Rougerie-20a} can still be coupled with the first moment estimate in order to derive BECs.

Finally, note that, in the mean-field regime and for fixed strengths, we can even obtain~\eqref{cv:qm-nls-energy} and~\eqref{cv:qm-nls-ground-state} for the wider range $0<\alpha<1/4$ and $0<\beta<1/8$ by refined arguments (see Remark~\ref{rem:bec-refined}). However, these arguments cannot be used in the collapse regime to prove~\eqref{collapse:speed-many-body}--\eqref{blowup:qm-ground-state} for a wider range in $(\alpha,\beta)$. Finally, note that more advance tools will be required if one wants to obtain BECs in the entire high-density regime $0<\alpha<1/2$ and $0<\beta<1/4$, and even more in the dilute regimes, where $\alpha>1/2$ or $\beta>1/4$.

\medskip
The remainder of the paper is devoted to the detailed mathematical proofs of our results. Moreover, in Appendix~\ref{app:homogeneous}, we state some results in the homogeneous BECs and give sketches of proofs.

\medskip
\noindent\textbf{Acknowledgments.} D.-T.~Nguyen was supported by the Knut and Alice Wallenberg Foundation. J.~Ricaud acknowledges support from the Agence Nationale de la Recherche under Grant No. ANR-19-CE46-0007 (project ICCI).

\section{Proofs of the main results}
\subsection{Existence and collapse in the \NLS theory}
The goal of this section is to establish the existence of the ground states for the \NLS problem~\eqref{energy:nls-inhomogeneous} and investigate its blowup profile as well as the first order expansion of its energy.

\begin{proof}[Proof of Theorem~\ref{thm:existence-nls}]
	Recall that~\eqref{nls-energy-upperbound} reads
	\[
		E_{a,b}^{\NLS} \leq \left(1 - \frac{a}{a_*}\right)\ell^{-2} + b \cQ_6 \ell^{-4} + \frac{\cQ_s}{s} \ell^s,\quad \forall\, \ell>0 \,.
	\]
	
	For~\ref{nls-GS-inexistence}, i.e., $b<0$ or ($b=0$ and $a>a_*$), taking $\ell \to 0$ yields the claim $E_{a,b}^{\NLS} = -\infty$.
	
	For~\ref{nls-GS-critical}, i.e., $b=0$ and $a=a_*$, on one hand we obtain $E_{a_*, 0}^{\NLS} \leq 0$ the same way and, on the other hand, the converse inequality follows from~\eqref{ineq:GN}. Now, if a ground state~$u$ for $E_{a_*, 0}^{\NLS} = 0$ would exist, then~\eqref{ineq:GN} would imply $\norm{ |\cdot|^{s/2} u }_{L^2} = 0$, contradicting $\norm{u}_{L^2}=1$.
	
	The rest of the proof is dedicated to~\ref{nls-GS-existence}: the existence of \NLS ground states when $b>0$ or ($b=0$ and $a < a_*$). Recall that Theorem~\ref{thm:existence-nls} assumes $a>0$. Let $\{ v_n \}_n$ be a minimizing sequence for $E_{a,b}^{\NLS}$, i.e.,
	\[
		\lim_{n\to+\infty}\cE_{a,b}^{\NLS}(v_n) = E_{a,b}^{\NLS} \quad \text{ with } \quad \norm{v_n}_{L^2}=1 \,.
	\]
	The existence of \NLS ground states is obtained (see, e.g.,~\cite{GuoSei-14}) by claiming that $\{ \nabla v_n \}_n$ and $\{ |\cdot|^{s/2} v_n \}_n$ are uniformly bounded in~$L^2(\R^2)$. This follows immediately from~\eqref{ineq:GN} if $a < a_*$ and $b\geq0$. In the remaining case, where $a \geq a_*$ and $b>0$, the first inequality in~\eqref{ineq:2-to-3} yields
	\[
		\cE_{a,b}^{\NLS}(v_n) \geq \norm{\nabla v_n}_{L^2}^2 + \norm{ |\cdot|^{\frac{s}{2}} v_n }_{L^2}^2 - \frac{a}{2} \norm{v_n}_{L^4}^4 + \frac{b}{6} \norm{v_n}_{L^4}^8 \geq \norm{\nabla v_n}_{L^2}^2 + \norm{ |\cdot|^{\frac{s}{2}} v_n }_{L^2}^2 - C_{a,b} \,,
	\]
	implying the claim on $\{ \nabla v_n \}_n$ and $\{ |\cdot|^{s/2} v_n \}_n$ since $\cE_{a,b}^{\NLS}(v_n)$ is uniformly bounded.
\end{proof}

\begin{proof}[Proof of Theorem~\ref{thm:collapse-nls}]
	The proof for approximate ground states being the same as the one for true ground states, with very few changes, we only write the latter for brevity. Also for shortness, we define $A_n := 1 - a_n/a_*$.

	We start with the upper bound matching~\eqref{blowup:nls-energy}. We obtain it by inserting $\ell=\ell_n$, defined in Theorem~\ref{thm:collapse-nls}, into~\eqref{nls-energy-upperbound}, then rewriting the upper bound using~\eqref{thm:collapse-nls-blowup-length}:
	\begin{equation}\label{nls-upperbound}
		E_{a_n, b_n}^{\NLS} \leq \ell_n^s \left(A_n \ell_n^{-s-2} + \cQ_6 b_n \ell_n^{-s-4} + \frac{\cQ_s}{s} \right) = \left( \frac{1}{2} + \frac{1}{s} - \frac{\zeta}{4} + o(1)\right) \cQ_s \ell_n^s \,.
	\end{equation}
	Note that the computation of the equality is performed separately for $\zeta\neq0$ and $\zeta\neq1$, but both yield the same expansion.
	
	We prove the matching lower bound by proving the claimed convergence~\eqref{thm:collapse-nls-inhomogeneous-ground-state}. Let $\{v_n\}_n$ be a sequence of ground states of~$E_{a_n, b_n}^{\NLS}$ and $w_n := \ell_n v_n(\ell_n \cdot)$ for any $n$. Then, $\norm{w_n}_{L^2} = 1 = \norm{v_n}_{L^2}$ and
	\begin{equation}\label{nls-approximate}
		E_{a_n, b_n}^{\NLS} = \ell_n^s \left[ \ell_n^{-s-2} \left(\norm{ \nabla w_n }_{L^2}^2 - \frac{a_n}{2} \norm{w_n}_{L^4}^4 \right) + \frac{b_n}{6} \ell_n^{-s-4}\norm{w_n}_{L^6}^6 + \norm{ |\cdot|^{\frac{s}{2}} w_n }_{L^2}^2 \right].
	\end{equation}
	We distinguish two (overlapping) cases for~$\zeta \geq 0$: $\zeta\neq0$ and $\zeta\neq1$.
	
	\medskip
	\noindent\textbf{Case $\zeta\neq0$.}
	Applying~\eqref{ineq:GN} to the kinetic term in~\eqref{nls-approximate} and rewriting the lower bound obtained that way, using~\eqref{thm-collapse-nls-condition-ratio-sequences} and the definition of~$\ell_n$ depending on $b_n$ in~\eqref{thm:collapse-nls-blowup-length}, we obtain by~\eqref{nls-upperbound} that
	\begin{equation}\label{nls_control_nrj_L2_for_gamma_not_0}
		\frac{1}{2} + \frac{1}{s} - \frac{\zeta}{4} + o(1) \geq \frac{a_*}{2} \left(\frac{1-\zeta}{2} +o(1)\right) \norm{w_n}_{L^4}^4 + \frac{\zeta}{24 \cQ_6} \norm{w_n}_{L^6}^6 + \cQ_s^{-1} \norm{ |\cdot|^{\frac{s}{2}} w_n }_{L^2}^2 \,.
	\end{equation}
	Using moreover~\eqref{ineq:2-to-3} on the nonnegative $L^6$-norm term yields that $\{w_n\}_n$ is uniformly bounded in~$L^4(\R^2)$ and consequently, by~\eqref{nls_control_nrj_L2_for_gamma_not_0}, that $\{w_n\}_n$ and $\{ |\cdot|^{s/2} w_n \}_n$ are uniformly bounded respectively in~$L^6(\R^2)$ and~$L^2(\R^2)$.
	
	We now multiply~\eqref{nls-approximate} by $\ell_n^2$, neglect in it the nonnegative quintic and external terms, and use~\eqref{nls-upperbound}. On the one hand, using the $L^4$-boundedness of~$\{w_n\}_n$, it gives that $\{\norm{ \nabla w_n }_{L^2}\}_n$ is bounded uniformly as well and, consequently, that there exists $w\in H^1(\R^2)$ such that, up to a subsequence, $w_n \to w$ weakly in~$H^1(\R^2)$ and strongly in~$L^r(\R^2)$ for $2\leq r < \infty$. Hence, in particular, $\norm{ w }_{L^2}=1$. On the other hand, passing to the limit $n\to+\infty$ yields
	\[
		0 \geq \lim_{n\to+\infty} \left(\norm{ \nabla w_n }_{L^2}^2 - \frac{a_n}{2} \norm{w_n}_{L^4}^4 \right) \geq \norm{ \nabla w }_{L^2}^2 - \frac{a_*}{2} \norm{ w }_{L^4}^4 \geq 0 \,,
	\]
	where we have used Fatous' lemma and~\eqref{ineq:GN}.
	
	It follows from the above that $w_n \to w$ strongly in~$H^1(\R^2)$ and that $w$ is an optimizer of~\eqref{ineq:GN}. Hence, after a suitable rescaling, $w(x)=\sqrt{t}Q_0(\sqrt{t}x+x_0)$ for some $t>0$ and $x_0\in\R^2$. We now show that $w \equiv Q_0$ by proving that $t= 1$ and $x_0 = 0$. Indeed, taking the limit $n\to+\infty$ in~\eqref{nls_control_nrj_L2_for_gamma_not_0} then writing things in terms of $Q_0$ (instead of $w$) gives
	\begin{align}\label{nls-control_nrj-L4-for-gamma_not-0}
		\frac{1}{2} + \frac{1}{s} - \frac{\zeta}{4}
		&= t \frac{1-\zeta}{2} \frac{a_*}{2} \norm{Q_0}_{L^4}^4 + t^2 \frac{\zeta}{24 \cQ_6} \norm{Q_0}_{L^6}^6 + t^{-\frac{s}{2}} \cQ_s^{-1} \norm{ \left|\cdot-x_0\right|^{\frac{s}{2}} Q_0 }_{L^2}^2 \nonumber \\
		&\geq t \frac{1-\zeta}{2} + t^2 \frac{\zeta}{4} + \frac{t^{-\frac{s}{2}}}{s} =: g_{\zeta}(t)
	\end{align}
	where, for the inequality, we use the fact that $Q_0$ is \emph{strictly symmetric decreasing} and $|\cdot|^s$ is \emph{strictly symmetric increasing} ---see, e.g., \cite[Appendix~A]{NguRic-22-arxiv} for details---, as well as the value of the $L^4$-norm of~$Q_0$, given in~\eqref{norms_Q_0}, and the definition of~$\cQ_6$. Notice that 
	\begin{align*}
		g_{\zeta}': (0,+\infty) & \to (-\infty,+\infty)\\
		t & \mapsto \frac{1-\zeta}{2} + t\frac{\zeta}{2} - \frac{t^{-\frac{s}{2}-1}}{2}
	\end{align*}
	is strictly increasing and $g_{\zeta}'(1) = 0$. Hence, $g_{\zeta}$ attains its global minimum at $t=1$ with 
	\[
		g_{\zeta}(1) = \frac{1}{2} + \frac{1}{s} - \frac{\zeta}{4} \,.
	\]
	Consequently, equality holds in~\eqref{nls-control_nrj-L4-for-gamma_not-0}, which implies that $t=1$ and consequently ---see, e.g., \cite[Lemma~A.1]{NguRic-22-arxiv}--- $x_0 = 0$. Finally, the convergence~\eqref{thm:collapse-nls-inhomogeneous-ground-state} holds for the whole sequence due to the uniqueness of the limiting profile, see, e.g., \cite[Proof of Theorem~1.2]{NguRic-22-arxiv} for details.

	\medskip
	\noindent\textbf{Case $\zeta\neq 1$.}
	Applying~\eqref{ineq:GN} to the $L^4$-norm term in~\eqref{nls-approximate} and rewriting the lower bound obtained that way, using~\eqref{thm-collapse-nls-condition-ratio-sequences} and the definition of~$\ell_n$ depending on $A_n$ in~\eqref{thm:collapse-nls-blowup-length} ---it is well defined at least for $n$ large enough since, by~\eqref{thm-collapse-nls-condition-ratio-sequences}, $A_n/(1-\zeta) > 0$ for $n$ large enough---, we obtain by~\eqref{nls-upperbound} that
	\[
		\frac{1}{2} + \frac{1}{s} - \frac{\zeta}{4} + o(1) \geq \frac{1-\zeta}{2}\norm{ \nabla w_n }_{L^2}^2 + \left(\frac{\zeta}{24 \cQ_6} + o(1)\right)\norm{w_n}_{L^6}^6 + \cQ_s^{-1}\norm{ |\cdot|^{\frac{s}{2}} w_n }_{L^2}^2 \,.
	\]
	For $0\leq \zeta<1$ the above yields the uniformly boundedness of~$\{\nabla w_n\}_n$ and $\{ |\cdot|^{s/2} w_n \}_n$ in~$L^2(\R^2)$. This actually still holds true when $\zeta > 1$, by the same arguments as in the case $\zeta \ne 0$. For the rest of the proof, we omit the details since it strictly follows the end of the proof in the case $\zeta \neq 0$.	
\end{proof}

\subsection{Condensation and collapse in the Hartree theory}
While the quantum energy describes the exact energy of the system, taking into account the whole quantum interaction between the particles, the Hartree energy on its side arises from the mean-field approximation, where the interactions are treated as averaged or as effective fields experienced by each particle. The Hartree theory plays a role of an interpolation theory between the many\nobreakdash-body and the \NLS theories, and this role is important in the study of the systems of two- and three\nobreakdash-body interactions, as in~\cite{NguRic-22-arxiv}. Therefore, understanding microscopic phenomenon in the Hartree theory is the next step before turning to the many\nobreakdash-body problem. We have the following result.

\begin{theorem}[Condensation and collapse of the Hartree ground states]\label{thm:collapse-hartree}
	Let $\alpha,\beta > 0$ and let $U$ and $W$ satisfy~\eqref{condition:potential-two-body}--\eqref{condition:potential-three-body-symmetry}.
	\begin{enumerate}[label=(\roman*)]
		\item\label{thm:collapse-hartree_item1} Let $a,b > 0$ be fixed with $a < a_*$ or ($a\geq a_*$ and $\alpha<\beta$). Let $\{v_N\}_N$ be a sequence of (approximate) ground states of~$E_{a,b,N}^{\mathrm{H}}$ given by~\eqref{energy:hartree-inhomogeneous}. Then, there exists a \NLS ground state~$v$ of~\eqref{energy:nls-inhomogeneous} such that, along a subsequence,
		\begin{equation}\label{cv:hartree-nls-ground-state}
			\lim_{N\to+\infty}v_N = v
		\end{equation}
		strongly in~$H^1(\R^2)$. Furthermore,
		\begin{equation}\label{cv:hartree-nls-energy}
			\lim_{N\to+\infty}E_{a,b,N}^{\mathrm{H}} = E_{a,b}^{\NLS} \,.
		\end{equation}
		
		\item\label{thm:collapse-hartree_item2} Let $Q_0$, $\cQ_6$, $\cQ_s$, $\zeta$, $\{a_N\}_N$, $\{b_N\}_N$, and $\{\ell_N \}_N$ be as in Theorem~\ref{thm:collapse-nls}. Assume further that $|x|U(x) \in L^1(\R^2)$, that $\alpha<\beta$ if $\zeta \geq 1$, and that $\ell_N \sim N^{-\eta}$ with
		\begin{equation}\label{collapse:speed-hartree}
			0 < \eta < \min\left\{ \frac{\alpha}{s+3}, \beta \right\}.
		\end{equation}
		Then,
		\begin{equation}\label{blowup:hartree-inhomogeneous-energy}
			E_{a_N, b_N, N}^{\mathrm{H}} = E_{a_N, b_N}^{\NLS} + o\!\left(\ell_N^s\right) = \left( \frac{1}{s} + \frac{1}{2} - \frac{\zeta}{4} + o(1) \right) \cQ_s \ell_N^s \,.
		\end{equation}
		Moreover, for any sequence of (approximate) ground states $\{v_N\}_N$ of~$E_{a_N, b_N, N}^{\mathrm{H}}$ given by~\eqref{energy:hartree-inhomogeneous}, we have
		\begin{equation}\label{blowup:hartree-inhomogeneous-ground-state}
			\lim_{N\to+\infty} \ell_N v_N ( \ell_N \cdot ) = Q_0
		\end{equation}
		strongly in~$H^1(\R^2)$, for the whole sequence.
	\end{enumerate}
\end{theorem}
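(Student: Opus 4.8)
The plan is to treat the Hartree functional as a perturbation of the \NLS functional, controlling the difference between the smeared nonlinearities $U_{N^\alpha}, W_{N^\beta}$ and the pointwise delta-like nonlinearities on the relevant scale, and then to import the compactness and uniqueness arguments already used in the proof of Theorem~\ref{thm:collapse-nls}. The key quantitative input will be the standard estimate: for $f \in H^1(\R^2)$ and a nice even kernel $U$ with $\int U = 1$, one has $\bigl| \iint U_{N^\alpha}(x-y)|f(x)|^2|f(y)|^2 \,\dix\diy - \int |f|^4 \bigr| \lesssim N^{-\alpha} \norm{f}_{L^\infty}\norm{\nabla f}_{L^2}\norm{f}_{L^2}^2$ or, more robustly, $\lesssim N^{-\alpha}\norm{\nabla f}_{L^2}^2 \norm{f}_{L^2}^2$ by an $L^4$-interpolation plus the elementary bound $\bigl||f(x)|^2 - |f(y)|^2\bigr| \le \ldots$, and similarly a factor $N^{-\beta}$ for the three-body term (the extra hypothesis $|x|U(x)\in L^1$ sharpens the two-body error). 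Since in the collapse regime one works after rescaling by $\ell_N$ with $\ell_N \sim N^{-\eta}$, the effective smearing parameters become $N^\alpha \ell_N \sim N^{\alpha-\eta}$ and $N^\beta\ell_N \sim N^{\beta-\eta}$, both $\to\infty$ precisely because $\eta < \min\{\alpha,\beta\}$; the condition $\eta < \alpha/(s+3)$ (rather than just $\eta<\alpha$) is what is needed to beat the blow-up of the kinetic energy $\sim \ell_N^{-2}$ and the Sobolev norms of the rescaled minimizer against the error gain $N^{-\alpha}$.

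First I would prove part~\ref{thm:collapse-hartree_item1}. The upper bound $E_{a,b,N}^{\mathrm H} \le E_{a,b}^{\NLS} + o(1)$ follows by plugging a fixed \NLS ground state $v$ (which exists by Theorem~\ref{thm:existence-nls}) into $\cE_{a,b,N}^{\mathrm H}$ and using the approximation-of-identity estimate with $N^\alpha, N^\beta \to \infty$. For the lower bound and the convergence of ground states: take a minimizing sequence $\{v_N\}_N$; show it is bounded in $H^1$ (when $a<a_*$, directly from \eqref{ineq:GN}; when $a\ge a_*$, $\alpha<\beta$, using that the error terms are lower-order and the coercivity argument of the proof of Theorem~\ref{thm:existence-nls}-\ref{nls-GS-existence} together with \eqref{ineq:2-to-3} — here is where $\alpha<\beta$ enters, ensuring the smeared two-body term is dominated by the smeared three-body term up to a controlled error); extract a weak $H^1$ limit $v$ with $\norm{v}_{L^2}\le 1$; pass to the liminf in $\cE^{\mathrm H}_{a,b,N}$ using that the smeared nonlinear terms converge to the \NLS ones along the strongly-$L^p_{\mathrm{loc}}$-convergent subsequence (plus the error bounds to handle the difference uniformly), concluding $E_{a,b}^{\NLS} \le \cE_{a,b}^{\NLS}(v) \le \liminf \cE^{\mathrm H}_{a,b,N}(v_N) = \lim E_{a,b,N}^{\mathrm H}$; a standard Brezis--Lieb / no-vanishing argument (the trapping potential $|x|^s$ prevents vanishing) upgrades $\norm{v}_{L^2}=1$, so $v$ is a genuine \NLS ground state and the convergence is strong in $H^1$.

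For part~\ref{thm:collapse-hartree_item2} I would mirror the proof of Theorem~\ref{thm:collapse-nls} after the substitution $w_N := \ell_N v_N(\ell_N\cdot)$. The rescaled Hartree energy identity is \eqref{nls-approximate} with the three nonlinear terms replaced by their smeared counterparts evaluated at $w_N$; the errors are, by the estimates above, of relative size $N^{-\alpha}\ell_N^{-s-3} + N^{-\beta}\ell_N^{-s-?}$ times bounded quantities — more precisely $o(1)$ against the main order $\cQ_s\ell_N^s$ exactly under \eqref{collapse:speed-hartree}. Thus one gets the same chain of inequalities as in \eqref{nls_control_nrj_L2_for_gamma_not_0}--\eqref{nls-control_nrj-L4-for-gamma_not-0}, with an extra $o(1)$ everywhere, yielding boundedness of $\{\nabla w_N\}$, $\{|\cdot|^{s/2}w_N\}$, and $\{w_N\}_{L^6}$, then strong $H^1$ convergence to a Gagliardo--Nirenberg optimizer $w(x)=\sqrt t\,Q_0(\sqrt t\, x + x_0)$, and finally $t=1$, $x_0=0$ from the same strict-rearrangement function $g_\zeta$; this simultaneously gives \eqref{blowup:hartree-inhomogeneous-ground-state} and, by matching with the \NLS upper bound \eqref{nls-upperbound}, the energy expansion \eqref{blowup:hartree-inhomogeneous-energy}. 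The case split $\zeta\neq 0$ vs.\ $\zeta\neq 1$ is handled exactly as before; the hypothesis $\alpha<\beta$ when $\zeta\ge 1$ again guarantees the needed coercivity when the cubic term is super-critical ($a_N$ near or above $a_*$).

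The main obstacle — and the only place real work is needed beyond the \NLS proof — is the quantitative control of the smeared three-body term $\iiint W_{N^\beta}(x-y,x-z)|w_N|^2|w_N|^2|w_N|^2$ against $\int|w_N|^6$ with an error that is genuinely $o(\ell_N^s)$ after rescaling. The natural bound replaces two of the three $|w_N|^2$ factors using $\bigl||w_N(x)|^2 - |w_N(y)|^2\bigr| \lesssim \bigl(|w_N(x)|+|w_N(y)|\bigr)|w_N(x)-w_N(y)|$ and $|w_N(x)-w_N(y)| \lesssim |x-y|^{1/2}\norm{\nabla w_N}_{L^2}$ (a 2D Gagliardo fractional estimate), picking up $N^{-\beta/2}$ or $N^{-\beta}$ times norms like $\norm{w_N}_{L^8}^? \norm{\nabla w_N}_{L^2}$ that must themselves be bounded — which is not circular only if one first establishes the boundedness of $\{w_N\}$ in $L^8$ (or uses the $L^6$-bound plus interpolation). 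Getting this hierarchy of a priori bounds to close, so that the three-body error is controlled \emph{before} one knows strong convergence, is the delicate point; the exponents in \eqref{collapse:speed-hartree}, in particular $\eta<\alpha/(s+3)$, are exactly what make the bookkeeping work, and one expects the argument to parallel the corresponding estimate in~\cite{NguRic-22-arxiv}.
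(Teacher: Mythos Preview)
Your overall strategy (treat the Hartree functional as a perturbation of the \NLS one, rescale by $\ell_N$, and import the compactness/uniqueness arguments from Theorem~\ref{thm:collapse-nls}) matches the paper's, but you miss the structural simplification that makes the proof go through cleanly, and your handling of the case $a\ge a_*$ (resp.\ $\zeta\ge1$) is too vague to constitute a proof.

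\textbf{The three-body error is not an obstacle.} You identify as the ``main obstacle'' the need for a quantitative bound $\bigl|\iiint W_{N^\beta\ell_N}(\cdots)|w_N|^2|w_N|^2|w_N|^2 - \int|w_N|^6\bigr| = o(\ell_N^s)$ \emph{before} you know that $\{w_N\}$ is bounded in $H^1$, and you worry about closing this loop. The paper never faces this issue because it exploits the \emph{signs} in Lemma~\ref{lem:hartree-nls-2-3-body}. The first inequalities in~\eqref{cv-ineq:hartree-nls-2body}--\eqref{cv-ineq:hartree-nls-3body} say that the smeared nonlinearities are \emph{below} the pointwise ones. Hence, for the energy \emph{upper} bound one uses~\eqref{cv-rate:hartree-nls-2body} for the two-body term (this is where $|x|U\in L^1$ and $\eta<\alpha/(s+3)$ enter, via the error $N^{-\alpha}\ell_N^{-3}$) and simply the sign of~\eqref{cv-ineq:hartree-nls-3body} for the three-body term: no three-body rate needed. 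For the energy \emph{lower} bound one uses the sign in~\eqref{cv-ineq:hartree-nls-2body} to replace the smeared cubic term by $\|w_N\|_{L^4}^4$ outright, and \emph{keeps the three-body term smeared} throughout the boundedness argument, using only that it is nonnegative. Only after $\{w_N\}$ is known to be bounded in $H^1$ (hence $\|w_N\|_{H^1}^6$ is uniformly bounded) does one invoke the qualitative~\eqref{cv-ineq:hartree-nls-3body} with $N^\beta\ell_N\to\infty$ (this is exactly $\eta<\beta$) to pass to $\|w\|_{L^6}^6$ in the limit. No fractional Sobolev estimates, no $L^8$ a~priori bounds, no circularity.

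\textbf{The boundedness for $a\ge a_*$ (or $\zeta\ge1$) requires a specific rescaling contradiction.} Writing ``the error terms are lower-order and the coercivity argument of Theorem~\ref{thm:existence-nls}\ref{nls-GS-existence} together with~\eqref{ineq:2-to-3}'' is not a proof: you cannot claim the errors are lower-order before you know $\{v_N\}$ is bounded in $H^1$, and~\eqref{ineq:2-to-3} compares $L^4$ and $L^6$ norms, not the smeared quantities. The paper argues by contradiction: assuming $\|\nabla v_N\|_{L^2}\to\infty$, set $\widetilde v_N:=\varepsilon_N v_N(\varepsilon_N\cdot)$ with $\varepsilon_N:=\|\nabla v_N\|_{L^2}^{-1}$; from the crude bound $U_{N^\alpha}\le N^{2\alpha}\|U\|_{L^\infty}$ one gets $\varepsilon_N\gtrsim N^{-\alpha}$, hence $N^\beta\varepsilon_N\gtrsim N^{\beta-\alpha}\to\infty$ precisely when $\alpha<\beta$. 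Then the (still smeared, now at scale $N^\beta\varepsilon_N$) three-body term forces $\|\widetilde v_N\|_{L^6}\to0$ via~\eqref{cv-ineq:hartree-nls-3body}, contradicting $\liminf\|\widetilde v_N\|_{L^4}^4\ge 2/a>0$ obtained from the energy. The same mechanism, with $\ell_N\varepsilon_N$ replacing $\varepsilon_N$, handles $\zeta\ge1$ in part~\ref{thm:collapse-hartree_item2}. This is where the hypothesis $\alpha<\beta$ is actually used, and it is not captured by your sketch.
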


\begin{remark*}\leavevmode
	\begin{itemize}[leftmargin=*]
		\item Similarly to the \NLS case, a sequence $\{v_N\}_N$ of \emph{approximate} ground states of~$E_{a_N, b_N, N}^{\mathrm{H}}$ is a sequence belonging to the minimizing domain and satisfying
			\begin{equation}\label{hartree:inhomogeneous-approximate-ground-state}
				E_{a_N, b_N, N}^{\mathrm{H}} \leq \cE_{a_N, b_N, N}^{\mathrm{H}}(v_N) \leq E_{a_N, b_N, N}^{\mathrm{H}} \left( 1 + o(1)_{N\to+\infty} \right) \,.
			\end{equation}
		\item Here again, the repulsive potential compensates for the attractive one. However, while for the \NLS theory it could be seen directly due to~\eqref{ineq:2-to-3}, for the Hartree theory this fact is not observed easily. In particular, when the strength $a$ is too negative ($a \geq a_*$), the two\nobreakdash-body interaction is not controlled anymore by the kinetic term. In that case, we need the repulsive, three\nobreakdash-body interaction to compensate the the attractive, two\nobreakdash-body interaction and it is why we need to impose the condition $\alpha<\beta$ in that case.
		\item The condition $\ell_N \sim N^{-\eta}$ is, by definition of~$\ell_N$ in~\eqref{thm:collapse-nls-blowup-length}, also a condition on $a_* - a_N$ or on $b_N$, depending on the value of~$\zeta$. The technical assumption on the attractive, two\nobreakdash-body interaction, i.e., $|x|U(x) \in L^1(\R^2)$ is used to determine the convergence rate of the Hartree energy to the \NLS energy in the limit $N\to+\infty$. The condition~\eqref{collapse:speed-hartree} is then used to ensure that the Hartree and \NLS ground state problems are close in the collapse regime. Moreover, the extra condition $\alpha<\beta$ applies in particular to the case where $a_N \searrow a_*$. This corresponds to the case $\zeta \geq 1$ in Theorem~\ref{thm:collapse-nls} and ensures that the two\nobreakdash-body interaction, which is too negative ---namely, above the threshold $a_*$---, is still controlled by the three\nobreakdash-body interaction.
		\item The compactness of the sequence of Hartree ground states is obtained by refined arguments using the singularity of the repulsive, three\nobreakdash-body interaction.
	\end{itemize}
\end{remark*}

In order to prove Theorem~\ref{thm:collapse-hartree}, we need the following lemma, which is about the (rate of) convergence of the two- and three\nobreakdash-body interactions.

\begin{lemma}\label{lem:hartree-nls-2-3-body}
	Assume that $U$ and $W$ satisfy~\eqref{condition:potential-two-body}--\eqref{condition:potential-three-body-symmetry}. Then, for any $v\in H^1(\R^2)$, we have
	\begin{align}\label{cv-ineq:hartree-nls-2body}
		0 \leq \norm{v}_{L^4}^4 - \iint_{\R^4} U_{N^\alpha}(x-y) |v(x)|^2 |v(y)|^2 \dix \diy \leq \norm{v}_{H^1}^4 o(1)
	\end{align}
	and
	\begin{align}\label{cv-ineq:hartree-nls-3body}
		0 \leq \norm{v}_{L^6}^6 - \iiint_{\R^6} W_{N^\beta}(x-y, x-z) |v(x)|^2 |v(y)|^2 |v(z)|^2 \dix \diy \diz \leq \norm{v}_{H^1}^6 o(1) \,,
	\end{align}
	with the $o(1)$’s independent of $v$.
	
	Assume in addition that $|x|U(x) \in L^1(\R^2)$. Then,
	\begin{equation}\label{cv-rate:hartree-nls-2body}
		0 \leq \norm{v}_{L^4}^4 - \iint_{\R^4} U_{N^\alpha}(x-y) |v(x)|^2 |v(y)|^2 \dix \diy \leq 2N^{-\alpha} \norm{ |\cdot| U(\cdot) }_{L^1} \norm{v}_{L^6}^3 \norm{ \nabla v }_{L^2} \,.
	\end{equation}
\end{lemma}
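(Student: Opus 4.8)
The plan is to prove the three claims in order, writing $f:=|v|^2$ throughout and reducing everything to $v\in C_c^\infty(\R^2)$ by density — every quantity involved is continuous for the $H^1$-topology, using the Sobolev embedding $H^1(\R^2)\hookrightarrow L^p(\R^2)$, $2\le p<\infty$, so that all the integrals below converge (by Young and Hölder). Two structural facts are used repeatedly. First, for $v\in H^1(\R^2)$ one has $f\in L^3(\R^2)\cap W^{1,3/2}(\R^2)$ with $\nabla f=2\operatorname{Re}(\bar v\,\nabla v)$, hence $\norm{\nabla f}_{L^{3/2}}\le 2\norm{v}_{L^6}\norm{\nabla v}_{L^2}$ by Hölder; this $L^{3/2}$-bound is the correct substitute for the (false) assertion that $f\in H^1$. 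Second, by~\eqref{scaled-interaction} together with~\eqref{condition:potential-two-body}--\eqref{condition:potential-three-body-symmetry}, the scaled two-body kernel $U_{N^\alpha}(x-y)$ is even and the scaled three-body kernel $K_{N^\beta}(x,y,z):=W_{N^\beta}(x-y,x-z)$ is invariant under all permutations of $(x,y,z)$, while $\int_{\R^2}U_{N^\alpha}=1$ and $\iint_{\R^4}K_{N^\beta}(x,y,z)\,\diy\,\diz=1$ for every $x$.

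For the left (nonnegativity) inequalities, I would use $\int U_{N^\alpha}=1$ to write the two-body difference as $\int_{\R^2}f\,(f-U_{N^\alpha}*f)\,\dix=\iint U_{N^\alpha}(x-y)f(x)\big(f(x)-f(y)\big)\,\dix\,\diy$ and then, since $U_{N^\alpha}$ is even, symmetrize in $x\leftrightarrow y$ to get the manifestly nonnegative $\tfrac12\iint U_{N^\alpha}(x-y)\big(f(x)-f(y)\big)^2\,\dix\,\diy$. For the three-body difference, the $S_3$-symmetry of $K_{N^\beta}$ allows one to replace it by $\tfrac13\iiint K_{N^\beta}(x,y,z)\big(f(x)^3+f(y)^3+f(z)^3-3f(x)f(y)f(z)\big)\,\dix\,\diy\,\diz$, which is $\ge0$ by the AM--GM inequality $a^3+b^3+c^3\ge3abc$ for $a,b,c\ge0$.

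For the quantitative bound~\eqref{cv-rate:hartree-nls-2body}, Hölder with exponents $3$ and $3/2$ gives
\[
	\int_{\R^2}f\,\big(f-U_{N^\alpha}*f\big)\,\dix\le\norm{f}_{L^3}\norm{f-U_{N^\alpha}*f}_{L^{3/2}}=\norm{v}_{L^6}^2\,\norm{f-U_{N^\alpha}*f}_{L^{3/2}}\,.
\]
Since $\int U_{N^\alpha}=1$, one has $f-U_{N^\alpha}*f=\int U_{N^\alpha}(h)\big(f(\cdot)-f(\cdot-h)\big)\di h$ and $f(x)-f(x-h)=\int_0^1 h\cdot\nabla f\big(x-(1-t)h\big)\dit$, so two applications of Minkowski's integral inequality yield $\norm{f-U_{N^\alpha}*f}_{L^{3/2}}\le\big(\int U_{N^\alpha}(h)|h|\di h\big)\norm{\nabla f}_{L^{3/2}}=N^{-\alpha}\norm{|\cdot|U}_{L^1}\norm{\nabla f}_{L^{3/2}}$. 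Together with $\norm{\nabla f}_{L^{3/2}}\le2\norm{v}_{L^6}\norm{\nabla v}_{L^2}$ this is exactly the claimed estimate, with constant $2$.

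Finally, for the qualitative bounds~\eqref{cv-ineq:hartree-nls-2body}--\eqref{cv-ineq:hartree-nls-3body} I would use a short/long displacement splitting and a diagonal argument, the key input being a \emph{uniform} modulus-of-continuity estimate: for all $v\in H^1(\R^2)$ and $h\in\R^2$,
\[
	\norm{f(\cdot)-f(\cdot-h)}_{L^2}\le\omega(|h|)\norm{v}_{H^1}^2,\qquad\omega(r)\xrightarrow[r\to0^+]{}0,
\]
with $\omega$ nondecreasing and independent of $v$ (one can take $\omega(r)=Cr^{1/8}$), obtained by writing $f(x)-f(x-h)=\big(|v(x)|-|v(x-h)|\big)\big(|v(x)|+|v(x-h)|\big)$, applying Cauchy--Schwarz, and interpolating $\norm{|v|(\cdot)-|v|(\cdot-h)}_{L^2}\le\min\{2\norm{v}_{L^2},|h|\norm{\nabla v}_{L^2}\}$ against $\norm{|v|(\cdot)-|v|(\cdot-h)}_{L^6}\le2\norm{v}_{L^6}$. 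For the two-body term, substituting $y=x-N^{-\alpha}z$ turns $\tfrac12\iint U_{N^\alpha}(x-y)\big(f(x)-f(y)\big)^2$ into $\tfrac12\int U(z)\norm{f(\cdot)-f(\cdot-N^{-\alpha}z)}_{L^2}^2\,\diz$, which the estimate above bounds by $\tfrac12\,\big(\omega(N^{-\alpha}\Lambda)\big)^2\norm{U}_{L^1}\norm{v}_{H^1}^4+2\norm{v}_{L^4}^4\int_{|z|>\Lambda}U$ for every $\Lambda>0$; choosing $\Lambda=\Lambda_N\to\infty$ with $N^{-\alpha}\Lambda_N\to0$ makes this $o(1)\,\norm{v}_{H^1}^4$ uniformly in $v$. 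For the three-body term, after the substitution $u=N^{-\beta}a$, $w=N^{-\beta}b$ I would use the telescoping identity $f(x)^2-f(x-u)f(x-w)=f(x)\big(f(x)-f(x-w)\big)+f(x-w)\big(f(x)-f(x-u)\big)$ to reduce the difference to two integrals in which one factor is bounded in $L^2$ by $C\norm{v}_{H^1}^4$ (Hölder and Sobolev) and the other is $f(\cdot)-f(\cdot-N^{-\beta}a)$ or $f(\cdot)-f(\cdot-N^{-\beta}b)$; Cauchy--Schwarz, the modulus of continuity, and the same $\{|a|,|b|\le\Lambda\}$ versus $\{|a|+|b|>\Lambda\}$ splitting (the latter contributing $o(1)$ because $W\in L^1$) close the argument. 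The one delicate point — and the main obstacle — is precisely the \emph{uniformity} of these $o(1)$'s: because $f=|v|^2$ is only $W^{1,3/2}$ and not $H^1$ when $v\in H^1(\R^2)$, one cannot differentiate $f$ in $L^2$, and the whole last step hinges on the $L^{3/2}$-difference estimate and the interpolated modulus-of-continuity bound, both uniform in $v$. Everything else (the changes of variables restoring the scaling, the density reduction, convergence of the integrals) is routine.
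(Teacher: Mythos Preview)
Your proof is correct and essentially self-contained; the paper itself does not prove the lemma but simply refers to \cite{LewNamRou-17,LewNamRou-18-proc} for~\eqref{cv-ineq:hartree-nls-2body} and~\eqref{cv-rate:hartree-nls-2body} and to the 1D analogue in \cite{NguRic-22-arxiv} for~\eqref{cv-ineq:hartree-nls-3body}. Your symmetrization/AM--GM argument for the nonnegativity, the $L^3$--$L^{3/2}$ H\"older--Minkowski chain yielding the exact constant~$2$ in~\eqref{cv-rate:hartree-nls-2body}, and the uniform modulus-of-continuity plus short/long-range splitting for the qualitative $o(1)$ bounds are standard and in the same spirit as those references; the observation that $f=|v|^2$ lies only in $W^{1,3/2}$ (not $H^1$) and the resulting interpolated bound $\norm{f(\cdot)-f(\cdot-h)}_{L^2}\le C|h|^{\theta}\norm{v}_{H^1}^2$ is exactly the right way to obtain uniformity in $v$.
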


\begin{proof}
	The proof of~\eqref{cv-ineq:hartree-nls-2body} and~\eqref{cv-rate:hartree-nls-2body} can be found in~\cite{LewNamRou-17,LewNamRou-18-proc}. On the other hand,~\eqref{cv-ineq:hartree-nls-3body} is the 2D analogue of~\cite[Lemma 3.1]{NguRic-22-arxiv} and we omit its proof for brevity.
\end{proof}

With Lemma~\ref{lem:hartree-nls-2-3-body} in hand, we are now in the position to prove Theorem~\ref{thm:collapse-hartree}.

\begin{proof}[Proof of Theorem~\ref{thm:collapse-hartree}]
	The proof for approximate ground states being the same as the one for ground states, with very few changes, we only write the latter for brevity.
	
	First, we prove~\emph{\ref{thm:collapse-hartree_item1}}. That is, the convergence of Hartree energy in~\eqref{cv:hartree-nls-energy} as well as of its ground states in~\eqref{cv:hartree-nls-ground-state}. We see immediately from the variational principle, together with~\eqref{cv-ineq:hartree-nls-2body} and the nonnegativity in~\eqref{cv-ineq:hartree-nls-3body}, that
	\begin{equation}\label{upperbound:hartree-nls}
		\lim_{N\to+\infty}E_{a,b,N}^{\mathrm{H}} \leq E_{a,b}^{\NLS} \,.
	\end{equation}
	To prove the matching lower bound, we process as follows. Let $\{v_N\}_N$ be a sequence of ground states of~$E_{a,b,N}^{\mathrm{H}}$. We observe that if $\{\norm{\nabla v_N}_{L^2}\}_N$ is uniformly bounded, then it follows from~\eqref{upperbound:hartree-nls}, from the nonnegativity of the three\nobreakdash-body interaction, from the nonnegativity in~\eqref{cv-ineq:hartree-nls-2body}, and from~\eqref{ineq:GN} that $\{\norm{ |\cdot|^{s/2} v_N }_{L^2}\}_N$ is also uniformly bounded. For the uniform boundedness of $\{\norm{\nabla v_N}_{L^2}\}_N$, we treat the two cases $0< a < a_*$ and ($a \geq a_*$ and $\alpha < \beta$) for which we obtain the claim. First, we observe that if $0< a < a_*$, then the argument for $\{\norm{ |\cdot|^{s/2} v_N }_{L^2}\}_N$ just mentioned actually gives also the uniform boundedness of $\{\norm{\nabla v_N}_{L^2}\}_N$. Second, if $a \geq a_*$ with the additional assumption $\alpha < \beta$, then assume on the contrary that $\norm{\nabla v_N}_{L^2} \to+\infty$ as $N\to+\infty$. Define $\widetilde{v}_N = \varepsilon_N v_N(\varepsilon_N \cdot)$ with $\varepsilon_N := \norm{\nabla v_N}_{L^2}^{-1}$. The nonnegativity in~\eqref{cv-ineq:hartree-nls-2body} yields
	\begin{multline}\label{conv:hartree-nls-ground-state}
		E_{a,b,N}^{\mathrm{H}} = \cE_{a, b,N}^{\mathrm{H}}(v_N) \geq \varepsilon_N^{-4}\frac{b}{6}\iiint_{\R^6} W_{N^\beta\varepsilon_N}(x-y, x-z)|\widetilde{v}_N(x)|^2|\widetilde{v}_N(y)|^2|\widetilde{v}_N(z)|^2 \dix \diy \diz \\
		+ \varepsilon_N^{-2} \left( \norm{ \nabla \widetilde{v}_N }_{L^2}^2 - \frac{a}{2} \norm{ \widetilde{v}_N }_{L^4}^4 \right) + \varepsilon_N^s \int_{\R^2}|x|^{s}|\widetilde{v}_N(x)|^2 \dix \,.
	\end{multline}
	On the one hand, after multiplying~\eqref{conv:hartree-nls-ground-state} by $\varepsilon_N^2$ then using the nonnegativity of the three\nobreakdash-body interaction and of the external potential, equation~\eqref{upperbound:hartree-nls} allows to deduce that
	\begin{equation}\label{conv:hartree-nls-ground-state-0}
		\lim_{N\to+\infty}\norm{ \widetilde{v}_N }_{L^4}^4 \geq \frac{2}{a} > 0 \,,
	\end{equation}
	since $\norm{ \nabla \widetilde{v}_N }_{L^2} = 1$ by construction. Moreover, equation~\eqref{ineq:GN} and $\norm{ \widetilde{v}_N }_{L^2} = 1 = \norm{ \nabla \widetilde{v}_N }_{L^2}$ give
	\begin{equation}\label{conv:hartree-nls-ground-state-1}
		\norm{ \nabla \widetilde{v}_N }_{L^2}^2 -\frac{a}{2} \norm{ \widetilde{v}_N }_{L^4}^4 \geq 1 - \frac{a}{a_*} \,.
	\end{equation}
	On the other hand, multiplying~\eqref{conv:hartree-nls-ground-state} by $\varepsilon_N^4$ and using~\eqref{conv:hartree-nls-ground-state-1}, we obtain
	\[
		\lim_{N\to+\infty}\iiint_{\R^6}W_{N^\beta\varepsilon_N}(x-y, x-z)|\widetilde{v}_N(x)|^2|\widetilde{v}_N(y)|^2|\widetilde{v}_N(z)|^2 \dix \diy \diz = 0 \,,
	\]
	and claim that, due to the assumption~$\alpha<\beta$, it yields the strong convergence $\widetilde{v}_N \to 0$ in~$L^6(\R^2)$. Indeed, we first observe that $\varepsilon_N \geq CN^{-\alpha}$, which follows from~\eqref{upperbound:hartree-nls}, from the nonnegativity of the three\nobreakdash-body interaction and of the external potential, and from the fact that $\norm{ U_{N^\alpha} }_{L^\infty} = N^{2\alpha} \norm{U}_{L^\infty}$. Therefore, if~$\alpha<\beta$, then $N^{\beta}\varepsilon_N \geq CN^{\beta-\alpha} \to+\infty$ as $N\to+\infty$ and, consequently, the scaled three\nobreakdash-body interaction $W_{N^\beta\varepsilon_N}$ must converge to the delta interaction $\delta_{x=y=z}$. More precisely, replacing $W_{N^\beta}$ by $W_{N^\beta\varepsilon_N}$ in~\eqref{cv-ineq:hartree-nls-3body} ---the proof of which does not depend on the specific rate of divergence $N^\beta$, hence applies to~$N^\beta\varepsilon_N$--- and using crucially that the $o(1)$ in~\eqref{cv-ineq:hartree-nls-3body} does not depend on the $v\in H^1(\R^2)$, hence applies to~$\widetilde{v}_N$ for which $\norm{\widetilde{v}_N}_{H^1}^2=2$, we obtain
	\[
		\lim_{N\to+\infty} \norm{ \widetilde{v}_N }_{L^6}^6 = \lim_{N\to+\infty} \iiint_{\R^6}W_{N^\beta\varepsilon_N}(x-y, x-z)|\widetilde{v}_N(x)|^2|\widetilde{v}_N(y)|^2|\widetilde{v}_N(z)|^2 \dix \diy \diz = 0 \,.
	\]
	Consequently, $\widetilde{v}_N \to 0$ strongly in~$L^{p}(\R^2)$, for $2 < p \leq 6$, by interpolation, contradicting~\eqref{conv:hartree-nls-ground-state-0}.
	
	Therefore, we have proved in both our cases that $\{\norm{\nabla v_N}_{L^2}\}_N$ and $\{\norm{ |\cdot|^{s/2} v_N }_{L^2}\}_N$ are bounded uniformly. Then, there exists $v\in H^1(\R^2)$ such that, up to a subsequence, the convergence $v_N \to v$ holds weakly in~$H^1(\R^2)$, almost everywhere in~$\R^2$, and strongly in~$L^r(\R^2)$ for $2\leq r<+\infty$. In particular, $\norm{v}_{L^2}=1$ hence, by~\eqref{cv-ineq:hartree-nls-2body}--\eqref{cv-ineq:hartree-nls-3body} and the weak lower semicontinuity, we have
	\[
		\lim_{N\to+\infty}\cE_{a,b,N}^{\mathrm{H}}(v_N) \geq \cE_{a,b}^{\NLS}(v) \geq E_{a,b}^{\NLS} \,.
	\]
	Together with~\eqref{upperbound:hartree-nls}, this yields~\eqref{cv:hartree-nls-energy} as well as the $H^1(\R^2)$-convergence in~\eqref{cv:hartree-nls-ground-state}.
	
	\medskip
	
	Now we complete the proof of Theorem~\ref{thm:collapse-hartree} by proving~\emph{\ref{thm:collapse-hartree_item2}}. That is, the asymptotic behavior of Hartree energy and of its ground states in the collapse regime. By the variational principle, the second inequality in~\eqref{cv-rate:hartree-nls-2body} and the nonnegativity in~\eqref{cv-ineq:hartree-nls-3body}, we have
	\begin{align}\label{blow-up:hartree-upper-bound}
		E_{a_N, b_N, N}^{\mathrm{H}} &\leq \cE_{a_N, b_N, N}^{\mathrm{H}}(\ell_N^{-1}Q_0(\ell_N^{-1}\cdot)) \nonumber \\
		& \leq \cE_{a_N, b_N}^{\NLS}(\ell_N^{-1}Q_0(\ell_N^{-1}\cdot)) + CN^{-\alpha} \ell_N^{-3} = \left( \frac{1}{2} + \frac{1}{s} - \frac{\zeta}{4} + CN^{-\alpha} \ell_N^{-s-3} + o(1)\right) \cQ_s \ell_N^s \,.
	\end{align}
	We recall that we assume~$\ell_N = N^{-\eta}$ with $\eta > 0$. Thus, the error term~$N^{-\alpha} \ell_N^{-s-3}$ is negligible when~$\eta<\alpha/(s+3)$ and we obtained the upper bound in~\eqref{blowup:hartree-inhomogeneous-energy}. The matching lower bound in~\eqref{blowup:hartree-inhomogeneous-energy} is a consequence of the claimed convergence~\eqref{blowup:hartree-inhomogeneous-ground-state}, which is obtained as follows. Let $v_N$ be a ground state of~$E_{a_N, b_N, N}^{\mathrm{H}}$ and $w_N := \ell_N v_N(\ell_N \cdot)$. Then, $\norm{w_N}_{L^2} = \norm{v_N}_{L^2}=1$ and, by the nonnegativity in~\eqref{cv-ineq:hartree-nls-2body},
	\begin{multline}\label{blow-up:hartree-lower-bound}
		E_{a_N, b_N, N}^{\mathrm{H}} \geq \ell_N^{-4}\frac{b_N}{6}\iiint_{\R^6} W_{N^\beta\ell_N}(x-y, x-z)|w_N(x)|^2|w_N(y)|^2|w_N(z)|^2 \dix \diy \diz \\
		+ \ell_N^{-2}\left(\norm{\nabla w_N}_{L^2}^2 - \frac{a_N}{2}\norm{w_N}_{L^4}^4 \right) + \ell_N^s \int_{\R^2}|x|^{s}|w_N(x)|^2 \dix \,.
	\end{multline}
	We distinguish two (overlapping) cases for~$\zeta \geq 0$: $\zeta\neq0$ and $\zeta\neq1$.
	
	\medskip
	\textbf{Case $\zeta\neq0$.}
	Combining~\eqref{blow-up:hartree-upper-bound}, under the assumption~$\eta<\alpha/(s+3)$, with~\eqref{blow-up:hartree-lower-bound} and using the definition of~$\ell_N$ depending on $b_N$ in~\eqref{thm:collapse-nls-blowup-length}, we obtain
	\begin{multline}\label{blow-up:hartree-1}
		\frac{1}{2} + \frac{1}{s} - \frac{\zeta}{4} + o(1) \geq \frac{\zeta}{24 \cQ_6} \iiint_{\R^6} W_{N^\beta\ell_N}(x-y, x-z)|w_N(x)|^2|w_N(y)|^2|w_N(z)|^2 \dix \diy \diz \\
		+ \ell_N^{-s-2} \cQ_s^{-1} \left(\norm{\nabla w_N}_{L^2}^2 - \frac{a_N}{2}\norm{w_N}_{L^4}^4 \right) + \cQ_s^{-1} \int_{\R^2}|x|^{s}|w_N(x)|^2 \dix \,.
	\end{multline}
	Moreover, by~\eqref{ineq:GN},~\eqref{thm-collapse-nls-condition-ratio-sequences}, and the definition of~$\ell_N$ depending on $b_N$ in~\eqref{thm:collapse-nls-blowup-length}, we have
	\begin{equation}\label{blow-up:hartree-2}
		\ell_N^{-s-2} \cQ_s^{-1} \left(\norm{\nabla w_N}_{L^2}^2 - \frac{a_N}{2}\norm{w_N}_{L^4}^4 \right) \geq \left(\frac{1-\zeta}{2} + o(1)\right)\norm{\nabla w_N}_{L^2}^2 \,.
	\end{equation}
	Again, we observe that if $\{\norm{\nabla w_N}_{L^2}\}_N$ is uniformly bounded, then it follows from~\eqref{blow-up:hartree-1} and~\eqref{blow-up:hartree-2} that $\{\norm{ |\cdot|^{s/2} w_N }_{L^2}\}_N$ is also uniformly bounded. For the uniform boundedness of $\{\norm{\nabla w_N}_{L^2}\}_N$, we treat the two cases $0< \zeta < 1$ and ($\zeta \geq 1$ and $\alpha < \beta$) for which we obtain the claim. First, we observe that if $0< \zeta < 1$, then the argument for $\{\norm{ |\cdot|^{s/2} w_N }_{L^2}\}_N$ just mentioned actually gives also the uniform boundedness of $\{\norm{\nabla w_N}_{L^2}\}_N$. Second, if $\zeta \geq 1$ with the additional assumption $\alpha < \beta$, then assume on the contrary that $\norm{\nabla w_N}_{L^2} \to+\infty$ as $N\to+\infty$. Define $\widetilde{w}_N = \varepsilon_N w_N(\varepsilon_N \cdot)$ with $\varepsilon_N := \norm{\nabla w_N}_{L^2}^{-1} \to 0$. Dropping the nonnegative external term in~\eqref{blow-up:hartree-1}, we have
	\begin{multline}\label{blow-up:hartree-3}
		\frac{1}{2} + \frac{1}{s} - \frac{\zeta}{4} + o(1) \geq \varepsilon_N^{-4}\frac{\zeta}{24 \cQ_6} \iiint_{\R^6}W_{N^{\beta}\ell_N \varepsilon_N}(x-y, x-z)|\widetilde{w}_N(x)|^2|\widetilde{w}_N(y)|^2|\widetilde{w}_N(z)|^2 \dix \diy \diz \\
		+ \varepsilon_N^{-2}\ell_N^{-s-2} \cQ_s^{-1} \left( \norm{ \nabla \widetilde{w}_N }_{L^2}^2 - \frac{a_N}{2} \norm{ \widetilde{w}_N }_{L^4}^4 \right).
	\end{multline}
	On the one hand, multiplying~\eqref{blow-up:hartree-3} by $\varepsilon_N^2 \ell_N^{s+2}$, dropping the nonnegative three\nobreakdash-body term, and using that $\norm{ \widetilde{w}_N }_{L^2}=1=\norm{ \nabla \widetilde{w}_N }_{L^2}$ and that $a_N \to a_*$ as $N\to+\infty$, we deduce that
	\begin{equation}\label{blow-up:hartree-4}
		\lim_{N\to+\infty}\norm{ \widetilde{w}_N }_{L^4}^4 \geq \frac{2}{a_*} > 0 \,.
	\end{equation}
	On the other hand, multiplying~\eqref{blow-up:hartree-3} by $\varepsilon_N^4$ and using~\eqref{blow-up:hartree-2} on $\widetilde{w}_N$, we deduce that
	\begin{equation}\label{blow-up:hartree-5}
		\lim_{N\to+\infty}\iiint_{\R^6}W_{N^{\beta}\ell_N \varepsilon_N}(x-y, x-z)|\widetilde{w}_N(x)|^2|\widetilde{w}_N(y)|^2|\widetilde{w}_N(z)|^2 \dix \diy \diz = 0 \,.
	\end{equation}
	Since $v_N$ is a ground state of~$E_{a_N, b_N, N}^{\mathrm{H}}$, similarly to earlier we can drop the nonnegativity three\nobreakdash-body and external terms and use $\norm{ U_{N^\alpha} }_{L^\infty} = N^{2\alpha} \norm{U}_{L^\infty}$ for a lower bound on~$E_{a_N, b_N, N}^{\mathrm{H}}$ in~\eqref{blow-up:hartree-upper-bound}. Passing then to the limit yields $\norm{\nabla v_N}_{L^2} \lesssim N^\alpha$. Thus, $\varepsilon_N = \norm{\nabla w_N}_{L^2}^{-1} = \ell_N^{-1}\norm{\nabla v_N}_{L^2}^{-1} \geq C \ell_N^{-1}N^{-\alpha}$ and, when $\alpha<\beta$, $N^{\beta}\ell_N \varepsilon_N \geq C N^{\beta-\alpha} \to+\infty$ as $N\to+\infty$.  Therefore, with the same arguments as before, replacing $W_{N^\beta}$ by $W_{N^{\beta}\ell_N \varepsilon_N}$ in~\eqref{cv-ineq:hartree-nls-3body} and using~\eqref{blow-up:hartree-5} gives that $\widetilde{w}_N \to 0$ strongly in~$L^6(\R^2)$. Consequently, $\widetilde{w}_N \to 0$ strongly in~$L^{p}(\R^2)$ for $2<p\leq 6$, by interpolation, contradicting~\eqref{blow-up:hartree-4}.
	
	Therefore, we have proved in both our cases that $\{\norm{\nabla w_N}_{L^2}\}_N$ and $\{\norm{ |\cdot|^{s/2} w_N }_{L^2}\}_N$ are bounded uniformly. Thus, there exists $w\in H^1(\R^2)$ such that, up to a subsequence, the convergence $w_N \to w$ holds weakly in~$H^1(\R^2)$, almost everywhere in~$\R^2$, and strongly in~$L^r(\R^2)$ for $2\leq r < +\infty$. At this stage, we note that $N^\beta\ell_N \to+\infty$ as $N\to+\infty$ , due to our assumptions $\ell_N \sim N^{-\eta}$ and $\eta < \beta$. We then use~\eqref{cv-ineq:hartree-nls-3body} on $w_N$, with the uniform boundedness of $\{\norm{\nabla w_N}_{L^2}\}_N$ and the strong convergence $w_N \to w$ in~$L^6(\R^2)$, and obtain
	\[
		\lim_{N\to+\infty}\iiint_{\R^6} W_{N^\beta\ell_N}(x-y, x-z)|w_N(x)|^2|w_N(y)|^2|w_N(z)|^2 \dix \diy \diz = \norm{w}_{L^6}^6 \,.
	\]
	Inserting the above into~\eqref{blow-up:hartree-1}, using~\eqref{blow-up:hartree-2}, and adapting arguments in the proof of Theorem~\ref{thm:collapse-nls}, we conclude the proof of the convergence of Hartree ground states in~\eqref{blowup:hartree-inhomogeneous-ground-state} and of the energy lower bound in~\eqref{blowup:hartree-inhomogeneous-energy}.
	
	\medskip
	
	\textbf{Case $\zeta\neq1$.} In this case, $\frac{A_N}{1-\zeta}>0$ (recall that $A_n := 1 - a_n/a_*$) for $N$ large enough, by~\eqref{thm-collapse-nls-condition-ratio-sequences}. Combining~\eqref{blow-up:hartree-upper-bound} with~\eqref{blow-up:hartree-lower-bound} and using~\eqref{ineq:GN} and the definition of~$\ell_N$ depending on $A_N$ in~\eqref{thm:collapse-nls-blowup-length} ---it is well defined at least for $N$ large enough since, by~\eqref{thm-collapse-nls-condition-ratio-sequences}, $A_N/(1-\zeta) > 0$ for $N$ large enough---, we obtain
	
	Applying~\eqref{ineq:GN} to the $L^4$-norm term in~\eqref{blow-up:hartree-lower-bound} and rewriting the lower bound obtained that way, using~\eqref{thm-collapse-nls-condition-ratio-sequences} and the definition of~$\ell_N$ depending on $A_N= 1 - a_N/a_*$ in~\eqref{thm:collapse-nls-blowup-length} ---it is well defined at least for $N$ large enough since, by~\eqref{thm-collapse-nls-condition-ratio-sequences}, $A_N/(1-\zeta) > 0$ for $N$ large enough---, we obtain by~\eqref{blow-up:hartree-upper-bound}, where we assume $\eta<\alpha/(s+3)$, that
	
	\begin{multline*}
		\frac{1}{2} + \frac{1}{s} - \frac{\zeta}{4} + o(1) \geq \left(\frac{\zeta}{24 \cQ_6}+o(1)\right) \iiint_{\R^6} W_{N^\beta\ell_N}(x-y, x-z) |w_N(x)|^2 |w_N(y)|^2 |w_N(z)|^2 \dix \diy \diz\\
		+ \frac{1-\zeta}{2} \norm{\nabla w_N}_{L^2}^2 + \cQ_s^{-1}\int_{\R^2}|x|^{s} |w_N(x)|^2 \dix \,.
	\end{multline*}
	For $0\leq \zeta<1$ the above yields the uniformly boundedness of~$\{\norm{\nabla w_N}_{L^2}\}_N$ and $\{\norm{ |\cdot|^{s/2} w_N }_{L^2}\}_N$. This actually holds true when $\zeta \geq 1$ too, under the additional assumption $\alpha<\beta$, by the same arguments as in the case $\zeta \neq 0$. For the rest of the proof, we omit the details since it strictly follows the end of the proof in the case $\zeta \neq 0$.
\end{proof}

\subsection{Condensation and collapse in the many-body theory}

The goal of this subsection is to prove Theorem~\ref{thm:many-body}. The strategy and main difficulty is to compare the quantum energy $E_{a,b,N}^{\mathrm{QM}}$ in~\eqref{energy:quantum} and the Hartree energy $E_{a,b,N}^{\mathrm{H}}$ in~\eqref{energy:hartree-inhomogeneous}. Comparing the quantum energy and the Hartree energy is a common strategy in many\nobreakdash-body quantum systems to understand the role of interactions and the validity of the mean-field approximation. It is worth noting that the main difficulty lies in the energy lower bound since the upper bound can be obtained by using a factorized state. Once those bounds are obtained, the result then comes from Theorem~\ref{thm:collapse-hartree} where we established the condensation of the Hartree ground states in the limit $N\to+\infty$ and investigated its blowup profile.

\subsubsection{Convergence and collapse of the quantum energy}

A major ingredient in our proof is the information-theoretic quantum de~Finetti theorem from~\cite{LiSmi-15,BraHar-17}. The following formulation is taken for its first part from~\cite[Theorem 3.7]{Rougerie-20b} (where a general discussion and more references can be found) and from~\cite[Proof of Lemma 19]{NamRicTri-23} for its second part.
\begin{theorem}[Information-theoretic quantum de~Finetti]\label{thm:information_deF}
	Let $\gH$ be a complex separable Hilbert space and $\gH_N = \gH^{\otimes_{\mathrm{sym}} N}$ the corresponding bosonic space. Let $\gamma_{\Psi_N}^{(3)}$ be the $3$-body reduced density matrix of a $N$-body state vector $\Psi_N \in \gH_N$ and $P$ be a finite dimensional orthogonal projector. Then, there exists a Borel measure~$\mu_{\Psi_N}$ with total mass $\leq 1$ on the set of one-body mixed states
	\[
		\cS_{P} := \left\{ \gamma \mbox{ positive, trace-class, self-adjoint operator on } P\gH, \, \tr \gamma = 1 \right\}
	\]
	such that
	\begin{equation}\label{ineq:information-deF}
		\sup_{0\leq A,B,C\leq 1}\tr\left| A \otimes B \otimes C \left( P^{\otimes 3} \gamma_{\Psi_N}^{(3)} P^{\otimes 3} - \int_{\cS_{P}} \gamma^{\otimes 3} \di\mu_{\Psi_N}(\gamma) \right)\right| \leq C \sqrt{\frac{\log (\dim P)}{N}}
	\end{equation}
	where the supremum is taken over bounded operators on $P\gH$. Furthermore, with $P_{\perp}=\1-P$,
	\begin{equation}\label{ineq:measure-deF}
		1\geq \int_{\cS_{P}} \di\mu_{\Psi_N}(\gamma) = \tr\left[P^{\otimes 3}\gamma_{\Psi_N}^{(3)} P^{\otimes 3}\right] \geq 1- 3\tr\left[ P_{\perp} \gamma_{\Psi_N}^{(1)} \right].
	\end{equation}
\end{theorem}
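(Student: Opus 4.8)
The plan is to obtain the two displays from essentially independent inputs: \eqref{ineq:information-deF} is the genuinely hard information-theoretic de~Finetti bound, which I would quote rather than reprove, while \eqref{ineq:measure-deF} is an elementary bookkeeping identity on reduced density matrices that I would establish by hand.

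For \eqref{ineq:information-deF} I would invoke the Brandão--Harrow / Li--Smith quantum de~Finetti theorem in the form recorded in~\cite[Theorem 3.7]{Rougerie-20b}. It is worth recalling the mechanism, since it fixes the shape of both the error term and the testing norm. One fixes an informationally complete POVM on $P\gH$ and measures some number $k$ of the first subsystems of the symmetric $(k+3)$-body density matrix, producing a classical--quantum state. A chain-rule/telescoping argument on the quantum (conditional) mutual information between the measured block and the remaining three subsystems, combined with monogamy of entanglement and the fact that the total extractable information is at most $\log(\dim P)$ per measured subsystem, yields a value $k = O(\varepsilon^{-2}\log\dim P)$ for which the post-measurement conditional state on the leftover three subsystems is, on average over outcomes, within $\varepsilon$ of a product state $\gamma^{\otimes 3}$ in the one-way-LOCC distinguishing norm --- which is exactly the $\sup_{0\le A,B,C\le 1}$ norm appearing in~\eqref{ineq:information-deF}; Pinsker's inequality is what converts the small conditional mutual information into this trace-norm closeness. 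Averaging over outcomes produces the Borel measure $\mu_{\Psi_N}$ on $\cS_P$, and optimizing $k$ against the constraint $k\le N$ gives the stated rate $\sqrt{\log(\dim P)/N}$. I would not reproduce this; the cited reference suffices.

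For \eqref{ineq:measure-deF}, the equality $\int_{\cS_P}\di\mu_{\Psi_N}(\gamma) = \tr[P^{\otimes3}\gamma_{\Psi_N}^{(3)} P^{\otimes3}]$ is built into the construction: the total mass of $\mu_{\Psi_N}$ is precisely the probability that the three leftover particles are all found in the range of $P$, which is why the mass is $\le 1$ rather than $=1$ (this is the point of~\cite[Proof of Lemma 19]{NamRicTri-23}). For the lower bound I would use the operator inequality on $\gH^{\otimes 3}$
\[
	\1 - P^{\otimes 3} \;\le\; P_\perp\otimes\1\otimes\1 \;+\; \1\otimes P_\perp\otimes\1 \;+\; \1\otimes\1\otimes P_\perp \,,
\]
obtained by expanding $\1 = P+P_\perp$ in each of the three factors and discarding the (nonnegative) cross terms that are overcounted on the right-hand side. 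Pairing this with $\gamma_{\Psi_N}^{(3)} \ge 0$, taking the trace, and using that $\Psi_N$ is bosonic so that every one-particle marginal of $\gamma_{\Psi_N}^{(3)}$ coincides with $\gamma_{\Psi_N}^{(1)}$, we get $\tr\big[(\1 - P^{\otimes 3})\gamma_{\Psi_N}^{(3)}\big] \le 3\,\tr\big[P_\perp\gamma_{\Psi_N}^{(1)}\big]$, which is the asserted inequality.

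The main obstacle --- were one to prove rather than cite \eqref{ineq:information-deF} --- is the information-theoretic core: securing a de~Finetti error that depends only \emph{polylogarithmically} on $\dim P$, and not polynomially. This is exactly what makes the theorem usable in our setting, where $P$ is taken to be an $N$-dependent spectral projector of the one-particle operator and $\dim P$ is allowed to grow with $N$; a polynomial dependence would be useless. Everything else --- the passage to the restricted state space $\cS_P$, the normalization of $\mu_{\Psi_N}$, and the elementary bound~\eqref{ineq:measure-deF} --- is routine.
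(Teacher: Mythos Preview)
Your proposal is correct and matches the paper's treatment: the paper does not prove this theorem at all but simply quotes it, citing \cite[Theorem~3.7]{Rougerie-20b} for~\eqref{ineq:information-deF} and \cite[Proof of Lemma~19]{NamRicTri-23} for~\eqref{ineq:measure-deF}, exactly as you do. Your sketch of the Brand\~ao--Harrow/Li--Smith mechanism and your explicit derivation of the lower bound in~\eqref{ineq:measure-deF} via the operator inequality $\1^{\otimes 3}-P^{\otimes 3}\le \sum_{i=1}^3 \1^{\otimes(i-1)}\otimes P_\perp\otimes\1^{\otimes(3-i)}$ (which the paper writes out later, in~\eqref{lower-bound:error-body}, when it actually uses the theorem) go beyond what the paper records here, and are correct.
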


We will apply Theorem~\ref{thm:information_deF} with $P$ a spectral projector below an energy cut-off $L > 0$ for the one-body operator $h_x = -\Delta_x + |x|^s$, i.e.,
\[
	P := \1(h\leq L) \,.
\]
Note that the projected Hilbert space $P\gH$ is of finite dimension, which follows from the celebrated semi-classical inequality ``{\`a} la Cwikel--Lieb--Rozenblum'' (see, e.g.,~\cite[Lemma 3.3]{LewNamRou-16}):
\begin{equation}\label{CLR}
	\dim(P\gH)\lesssim L^{1+\frac{2}{s}} \,.
\end{equation}

We project through $P$ the many\nobreakdash-body ground states $\Psi_N$ onto the finite dimensional space and bound the full energy from below in terms of projected state of the Hamiltonian~$H_{a,b,N}$. We write
\begin{align}\label{lower-bound:many-body}
	\frac{\langle \Psi_N, H_{a,b,N} \Psi_N \rangle}{N} &= \tr\left[ h \gamma_{\Psi_N}^{(1)} \right] - \frac{a}{2}\tr\left[U_{N^\alpha}\gamma_{\Psi_N}^{(2)} \right] + \frac{b}{6}\tr\left[W_{N^\beta}\gamma_{\Psi_N}^{(3)} \right] \nonumber \\
	&= \tr\left[\left(\frac{h_1+h_2+h_3}{3} - \frac{a}{4}\cU_{N^\alpha} + \frac{b}{6}W_{N^\beta}\right)\gamma_{\Psi_N}^{(3)} \right] =: \frac{1}{3}\tr\left[H_{a,b,N}^{(3)}\gamma_{\Psi_N}^{(3)}\right],
\end{align}
where $h_i$ acts on the $i$-th variable and where we defined
\[
	\cU_{N^\alpha} := U_{N^\alpha}\otimes \1 + \1 \otimes U_{N^\alpha} \,.
\]
For the one-body noninteracting term, we have
\begin{align}
	\tr\left[ h \gamma_{\Psi_N}^{(1)} \right] ={} &
	\tr\left[PhP\gamma_{\Psi_N}^{(1)} \right] + \tr\left[P_{\perp}hP_{\perp}\gamma_{\Psi_N}^{(1)} \right] \nonumber\\
	\geq{} &
	\frac{1}{3}\tr\left[P^{\otimes 3}(h_1+h_2+h_3)P^{\otimes 3}\gamma_{\Psi_N}^{(3)} \right] + L\tr\left[ P_{\perp} \gamma_{\Psi_N}^{(1)} \right].\label{lower-bound:one-body}
\end{align}
For the two\nobreakdash-body interaction term, we write
\[
	\cU_{N^\alpha} = P^{\otimes 3}\cU_{N^\alpha}P^{\otimes 3} + \Pi\cU_{N^\alpha}\Pi + P^{\otimes 3}\cU_{N^\alpha}\Pi + \Pi\cU_{N^\alpha}P^{\otimes 3}
\]
with the orthogonal projection
\begin{align}
	\Pi := \1^{\otimes 3} - P^{\otimes 3} ={} & P_{\perp} \otimes \1 \otimes \1 + P \otimes P_{\perp} \otimes 1 + P \otimes P \otimes P_{\perp} \nonumber \\
	\leq{} & P_{\perp} \otimes \1 \otimes \1 + \1 \otimes P_{\perp} \otimes 1 + \1 \otimes \1 \otimes P_{\perp}.\label{lower-bound:error-body}
\end{align}
To bound the error terms, we use the inequality
\begin{equation}\label{lower-bound:operator}
	XAY + YAX \geq -\varepsilon X|A|X - \varepsilon^{-1} Y|A|Y
\end{equation}
valid for any $\varepsilon > 0$, any self-adjoint operator $A$, and any orthogonal projectors $X$, $Y$. Applying~\eqref{lower-bound:operator} to $X=P^{\otimes 3}$, $Y=\Pi$ and $A=-\cU_{N^\alpha}\leq0$, we deduce that
\begin{align}\label{lower-bound:two-body-operator}
	\cU_{N^\alpha} \leq{} & (1+\varepsilon)P^{\otimes 3}\cU_{N^\alpha}P^{\otimes 3} + (1+\varepsilon^{-1})\Pi\cU_{N^\alpha}\Pi \\
	\leq{} & P^{\otimes 3}\cU_{N^\alpha}P^{\otimes 3} + CN^{2\alpha} \left(\varepsilon P^{\otimes 3} + (1+\varepsilon^{-1})\Pi\right),\nonumber
\end{align}
where we have used that $U_{N^\alpha} \leq N^{2\alpha} \norm{U}_{L^\infty}$. Using~\eqref{lower-bound:error-body}, then taking the trace of~\eqref{lower-bound:two-body-operator} against $\gamma_{\Psi_N}^{(3)}$, then using $\tr\left[ P^{\otimes 3}\gamma_{\Psi_N}^{(3)} \right] \leq 1$, then optimizing over $\varepsilon$, and finally using $\tr\left[ P_{\perp} \gamma_{\Psi_N}^{(1)} \right] \leq 1$ gives
\begin{equation}\label{lower-bound:two-body}
	\tr\left[\cU_{N^\alpha}\gamma_{\Psi_N}^{(3)}\right] \leq\tr\left[\cU_{N^\alpha}P^{\otimes 3}\gamma_{\Psi_N}^{(3)}P^{\otimes 3}\right] + CN^{2\alpha}\sqrt{\tr\left[ P_{\perp} \gamma_{\Psi_N}^{(1)} \right]} \,.
\end{equation}
Similarly, applying~\eqref{lower-bound:operator} to $X=P^{\otimes 3}$, $Y=\Pi$, $A=W_{N^\beta}\geq0$ and using $W_{N^\beta} \leq N^{4\beta} \norm{W}_{L^\infty}$, we obtain
\begin{equation}\label{lower-bound:three-body}
	\tr\left[W_{N^\beta}\gamma_{\Psi_N}^{(3)} \right] \geq \tr\left[W_{N^\beta}P^{\otimes 3}\gamma_{\Psi_N}^{(3)}P^{\otimes 3} \right] - CN^{4\beta}\sqrt{\tr\left[ P_{\perp} \gamma_{\Psi_N}^{(1)} \right]} \,.
\end{equation}
Putting together the estimates~\eqref{lower-bound:one-body},~\eqref{lower-bound:two-body}, and~\eqref{lower-bound:three-body} into~\eqref{lower-bound:many-body}, we arrive at
\begin{align}\label{lower-bound:truncated}
	\frac{ \langle \Psi_N, H_{a,b,N} \Psi_N \rangle }{N} \geq{} & \frac{1}{3} \tr\left[ H_{a,b,N}^{(3)}P^{\otimes3}\gamma_{\Psi_N}^{(3)}P^{\otimes3} \right] + L\tr\left[ P_{\perp} \gamma_{\Psi_N}^{(1)} \right] - C\left(N^{2\alpha}+N^{4\beta}\right) \sqrt{\tr\left[ P_{\perp} \gamma_{\Psi_N}^{(1)} \right]} \nonumber \\
	\geq{} & \frac{1}{3} \tr\left[ H_{a,b,N}^{(3)}P^{\otimes3}\gamma_{\Psi_N}^{(3)}P^{\otimes3} \right] + \frac{L}{2}\tr\left[ P_{\perp} \gamma_{\Psi_N}^{(1)} \right] - C\frac{N^{4\alpha}+N^{8\beta}}{L} \,,
\end{align}
where the last inequality is obtained by the Cauchy--Schwarz inequality.

Next, by Theorem~\ref{thm:information_deF} we bound the truncated energy from below in terms of the de~Finetti measure. Using~\eqref{ineq:information-deF} and the triangle inequality on each of the three terms of~$H_{a,b,N}^{(3)}$ separately, together with $Ph \leq LP$, $\norm{ U_{N^\alpha} }_{L^\infty} = N^{2\alpha} \norm{U}_{L^\infty}$, and $\norm{ W_{N^\beta} }_{L^\infty} = N^{4\beta} \norm{W}_{L^\infty}$, respectively, then patching things together and recalling~\eqref{CLR}, we obtain
\begin{equation}\label{lower-bound:deF}
	\frac{1}{3} \tr\left[H_{a,b,N}^{(3)}P^{\otimes3}\gamma_{\Psi_N}^{(3)}P^{\otimes3} \right] \geq \frac{1}{3} \tr\left[H_{a,b,N}^{(3)}\int_{\cS_{P}} \gamma^{\otimes 3} \di\mu_{\Psi_N}(\gamma)\right] - C \sqrt{\frac{\log L}{N}} \left( L+N^{2\alpha}+N^{4\beta}\right).
\end{equation}
Moreover, the Hoffmann--Ostenhof inequality~\cite{HofHof-77}, $\tr\left[ h\gamma \right] \geq \int |\nabla \sqrt{\rho_{\gamma}}|^2$ where $\rho_{\gamma}(x)=\gamma(x,x)$, gives
\begin{align}\label{lower-bound:main}
	\frac{1}{3} \tr\left[H_{a,b,N}^{(3)}\gamma^{\otimes 3}\right] &=
	\begin{multlined}[t][0.78\textwidth]
		\tr\left[ h\gamma \right] - \frac{a}{2} \iint_{\R^4} N^{2\alpha} U(N^{\alpha}(x-y)) \rho_{\gamma}(x)\rho_{\gamma}(y) \dix \diy \\
		+ \frac{b}{6} \iiint_{\R^6} N^{4\beta} W(N^{\beta}(x-y),N^{\beta}(x-z)) \rho_{\gamma}(x)\rho_{\gamma}(y)\rho_{\gamma}(z) \dix \diy \diz
	\end{multlined} \nonumber\\
	&=: \cE_{a,b,N}^{\mathrm{mH}}(\gamma) \geq \cE_{a,b,N}^{\mathrm{H}}(\sqrt{\rho_{\gamma}}) \geq E_{a,b,N}^{\mathrm{H}} \,.
\end{align}
Combining~\eqref{lower-bound:truncated}--\eqref{lower-bound:main}, and since the $\Psi_N$'s are ground states of $E_{a,b,N}^{\mathrm{QM}}$, we obtain
\begin{multline}\label{cv:energy-qm-hartree-preparative}
	E_{a,b,N}^{\mathrm{QM}} = \frac{ \langle \Psi_N, H_{a,b,N} \Psi_N \rangle }{N} \geq E_{a,b,N}^{\mathrm{H}}\int_{\cS_{P}}\di\mu_{\Psi_N}(\gamma) + \frac{L}{2}\tr\left[ P_{\perp} \gamma_{\Psi_N}^{(1)} \right] \\
	- C\frac{N^{4\alpha}+N^{8\beta}}{L} - C \sqrt{\frac{\log L}{N}} \left( L+N^{2\alpha}+N^{4\beta}\right).
\end{multline}
We now choose $L$ as a power of $N$ so that the last two terms vanish as $N\to+\infty$: we need $N^{\frac{1}{2}} \gg L \gg N^{4\alpha}+N^{8\beta}$. Choosing optimally
\begin{equation}\label{eq:L}
	L \sim N^{2\alpha+\frac{1}{4}} + N^{4\beta+\frac{1}{4}} \quad \text{with} \quad \alpha<\frac{1}{8} \quad \text{and} \quad \beta<\frac{1}{16} \,,
\end{equation}
using either the lower bound or the upper bound in~\eqref{ineq:measure-deF} depending on $E_{a,b,N}^{\mathrm{H}}$ being nonnegative or positive, respectively, and using in the latter case that $E_{a,b,N}^{\mathrm{H}}$ is bounded (in order to bound from below $L/2 - 3 E_{a,b,N}^{\mathrm{H}}$ by $CL$), we obtain
\begin{equation}\label{cv:energy-qm-hartree-preparative2}
	E_{a,b,N}^{\mathrm{H}} \geq E_{a,b,N}^{\mathrm{QM}} \geq E_{a,b,N}^{\mathrm{H}} + C L \tr\left[ P_{\perp} \gamma_{\Psi_N}^{(1)} \right] - C\left(N^{2\alpha-\frac{1}{4}}+N^{4\beta-\frac{1}{4}}\right).
\end{equation}
The nonnegativity of the trace term gives the final estimate
\begin{equation}\label{cv:energy-qm-hartree}
	E_{a,b,N}^{\mathrm{H}} \geq E_{a,b,N}^{\mathrm{QM}} \geq E_{a,b,N}^{\mathrm{H}} - C\left(N^{2\alpha-\frac{1}{4}}+N^{4\beta-\frac{1}{4}}\right),
\end{equation}
which, with~\eqref{cv:hartree-nls-energy}, yields the claimed convergence~\eqref{cv:qm-nls-energy} of the quantum energy to the \NLS energy.

Finally, and because it will be useful later, note that the uniform boundedness of $E_{a,b,N}^{\mathrm{H}}$, the choice~\eqref{eq:L}, the nonnegativity of $C L \tr\left[ P_{\perp} \gamma_{\Psi_N}^{(1)} \right]$, and~\eqref{cv:energy-qm-hartree-preparative2} actually give
\begin{equation}\label{cv:trace_Pperp_one-body}
	\tr\left[P_{\perp}\gamma_{\Psi_N}^{(1)}\right] = o\!\left(L^{-1}\right) \xrightarrow[N\to+\infty]{} 0 \,,
\end{equation}
hence,~\eqref{ineq:measure-deF} implies that
\begin{equation}\label{cv:energy_measure}
	\lim_{N\to+\infty}\int_{\cS_{P}} \di\mu_{\Psi_N}(\gamma) = 1
\end{equation}
and the sequence~$\{\mu_{\Psi_N}\}_N$ given by Theorem~\ref{thm:information_deF} is tight on the set of one-body mixed states
\[
	\cS := \left\{ \gamma \mbox{ nonnegative, trace-class, self-adjoint operator on } L^2 (\R), \, \tr \gamma = 1 \right\}.
\]
Therefore, modulo a subsequence, the sequence of measures~$\{\mu_{\Psi_N}\}_N$ converges to a measure~$\mu$.

\medskip

In the collapse regime where $a_N \to a_*$ and $b_N \searrow 0$ at the same time, when $N\to+\infty$, the asymptotic formula of the quantum energy~\eqref{blowup:qm-energy} follows from~\eqref{cv:energy-qm-hartree} and~\eqref{blowup:hartree-inhomogeneous-energy}, with the notable fact that, in addition to the condition on the speed of collapse $\eta < \min\left\{ \alpha/(s+3), \beta \right\}$ in~\eqref{collapse:speed-hartree}, we furthermore need that $\eta < \min\left\{ 1-8\alpha, 1-16\beta \right\} / (4s)$ ---yielding therefore the condition~\eqref{collapse:speed-many-body}--- in order for the error terms in~\eqref{cv:energy-qm-hartree} to be of small order compared to the \NLS energy~\eqref{blowup:nls-energy}.

\subsubsection{Convergence and collapse of many\nobreakdash-body ground states}
We prove now the condensation and collapse of many\nobreakdash-body ground states, given in~\eqref{cv:qm-nls-ground-state} and~\eqref{blowup:qm-ground-state}. Note that, when the limiting profile is unique, this can be obtained by the Feynman--Hellmann principle (see, e.g.,~\cite{LewNamRou-18-proc}). This is the case in the collapse regime since the critical cubic \NLS equation~\eqref{eq:NLS} admits a unique solution, which is the limiting profile. In the mean-field regime, however, we need to employ the quantum de~Finetti theory and an abstract argument of convex analysis to deal with the nonuniqueness of the cubic--quintic \NLS ground states of~\eqref{energy:nls-inhomogeneous}.

In the mean-field regime, we recall that $a,b>0$ are fixed, that $0 < \alpha < 1/8$, and that $0 < \beta < 1/16$, with either $a < a_*$ or ($a\geq a_*$ and $\alpha<\beta$). Let $\{\Psi_N\}_N$ be a sequence of ground states of~$H_{a,b,N}$.

We first prove that $\tr\left[ h \gamma_{\Psi_N}^{(1)} \right]$ is bounded uniformly, where we recall that $h_x=-\Delta_x + |x|^s$ is the one-body noninteracting operator. To this purpose, we introduce the perturbed Hamiltonian
\[
	H_{a,b,N,\lambda} = H_{a,b,N} - \lambda\sum_{i=1}^N h_{x_i}
\]
with the corresponding quantum energy $E_{a,b,N,\lambda}^{\mathrm{QM}}$. Here $\lambda$ is fixed with $0<\lambda < 1 - a/a_*$ if $a < a_*$, and with $0<\lambda<1$ if $a\geq a_*$ and $\alpha<\beta$, in order to ensure that $a/(1-\lambda)$ compares the same way to $a_*$ as $a$ does.
We return to~\eqref{lower-bound:truncated} and~\eqref{lower-bound:deF} to derive an energy lower bound similar to~\eqref{cv:energy-qm-hartree} with $E_{a,b,N}^{\mathrm{QM}}$ replaced by $E_{a,b,N,\lambda}^{\mathrm{QM}}$. The derivation is the same, except for~\eqref{lower-bound:main} where the lower bound is now $(1-\lambda)E_{a/(1-\lambda), b/(1-\lambda), N}^{\mathrm{H}}$ instead of $E_{a,b,N}^{\mathrm{H}}$, and we find in particular that $H_{a,b,N,\lambda} \geq -C_{\lambda}N$. Thus, the desired uniform boundedness of~$\tr\left[ h \gamma_{\Psi_N}^{(1)} \right]$ since
\[
	E_{a,b}^{\NLS} + o(1) \geq \frac{ \langle \Psi_N, H_{a,b,N} \Psi_N \rangle }{N} \geq -C_{\lambda} + \lambda\tr\left[ h \gamma_{\Psi_N}^{(1)} \right].
\]
Therefore, and since $h$ has compact resolvent we deduce that, modulo a subsequence, $\gamma_{\Psi_N}^{(1)}$ converges strongly in the trace-class.

By the de~Finetti theorem~\cite[Theorem~2.1]{LewNamRou-14}, there exists a unique Borel probability measure~$\nu$ on the unit sphere $S\gH$ such that
\begin{equation}\label{eq:cv-quantum-de-finetti}
	\lim_{N\to+\infty}\tr\left| \gamma_{\Psi_N}^{(k)} - \int_{S\gH}|v^{\otimes k}\rangle\langle v^{\otimes k}|\di\nu(v)\right| = 0 \,,\quad \forall k\in\N \,.
\end{equation}
Therefore, to complete the proof of~\eqref{cv:qm-nls-ground-state}, it remains to prove that $\nu$ in~\eqref{eq:cv-quantum-de-finetti} is supported on the set of cubic--quintic \NLS ground states $\cM^{\NLS}$.
To this end, we derive an alternative lower bound to that in~\eqref{cv:energy-qm-hartree} where we replace $E_{a,b,N}^{\mathrm{H}}$ by the modified Hartree functional defined on mixed states $\cE_{a,b,N}^{\mathrm{mH}}$. This is simply done by omitting the inequalities in~\eqref{lower-bound:main} and instead keeping $\cE_{a,b,N}^{\mathrm{mH}}(\gamma)$ for the rest of the derivation. We obtain
\begin{equation}\label{cv:ground-states-deF}
	E_{a,b,N}^{\mathrm{H}} \geq E_{a,b,N}^{\mathrm{QM}} \geq \int_{\cS_{P}} \cE_{a,b,N}^{\mathrm{mH}}(\gamma)\di\mu_{\Psi_N} (\gamma) - o(1) \geq \int_{\cS_{P}}\cE_{a,b,N}^{\mathrm{H}}(\sqrt{\rho_\gamma})\di\mu_{\Psi_N} (\gamma) - o(1)
\end{equation}
and we split the integral into two parts: low and high kinetic energy
\[
	\mathrm{K}_- = \{\gamma\in \cS: \langle \sqrt{\rho_{\gamma}},h\sqrt{\rho_{\gamma}} \rangle \leq C_{\mathrm{kin}}\} \quad \text{and} \quad \mathrm{K}_+ = \cS \setminus \mathrm{K}_- \,,
\]
for $C_{\mathrm{kin}} > 0$ a large constant independent of~$N$. On one hand, for the high kinetic energy part, we consider the modified Hartree functional
\[
	\cE_{a,b,N,\lambda}^{\mathrm{H}}(u) = \cE_{a,b,N}^{\mathrm{H}}(u) - \lambda \langle u,hu \rangle\,,
\]
with the corresponding modified Hartree energy $E_{a,b,N,\lambda}^{\mathrm{H}}$. Recall that $ \cE_{a,b,N}^{\mathrm{H}}$ was defined in~\eqref{functional:hartree-inhomogeneous}. Again, and for the same reason, $\lambda$ is fixed with $0<\lambda < 1 - a/a_*$ if $a < a_*$, and with $0<\lambda<1$ if $a\geq a_*$ and $\alpha<\beta$. We return to Theorem~\ref{thm:collapse-hartree} and derive an energy lower bound similar to~\eqref{cv:hartree-nls-energy} with $E_{a,b,N}^{\mathrm{H}}$ replaced by $E_{a,b,N,\lambda}^{\mathrm{H}}$. Again, we find in particular that $E_{a,b,N,\lambda}^{\mathrm{H}} \geq -C_\lambda$ and deduce that
\begin{equation}\label{cv:ground-states-split-1}
	\cE_{a,b,N}^{\mathrm{mH}}(\gamma) \geq \cE_{a,b,N}^{\mathrm{H}}(\sqrt{\rho_{\gamma}}) \geq \lambda \langle \sqrt{\rho_{\gamma}},h\sqrt{\rho_{\gamma}} \rangle - C_{\lambda} \geq \frac{\lambda C_{\mathrm{kin}}}{2},\quad \forall \gamma \in \mathrm{K}_+
\end{equation}
for a large enough $C_{\mathrm{kin}}>0$. On another hand, for the low kinetic energy part, we estimate the modified Hartree functional $\cE_{a,b,N}^{\mathrm{mH}}(\gamma)$ from below by the modified \NLS functional defined on mixed states, given by
\begin{equation}\label{functional:mixed-nls-inhomogeneous}
	\cE_{a,b}^{\mathrm{mNLS}}(\gamma) := \tr\left[ h\gamma \right] - \frac{a}{2} \int_{\R^2} \rho_{\gamma}(x)^2 \dix + \frac{b}{6} \int_{\R^2} \rho_{\gamma}(x)^3 \dix \,.
\end{equation}
For this purpose, we apply~\eqref{cv-ineq:hartree-nls-2body} and~\eqref{cv-ineq:hartree-nls-3body} with $v$ replaced by $\sqrt{\rho_{\gamma}}$, noticing that $\sqrt{\rho_{\gamma}} \in H^1(\R^2)$ for all $\gamma \in \mathrm{K}_-$. We obtain
\begin{equation}\label{cv:ground-states-split-2}
	\cE_{a,b,N}^{\mathrm{mH}}(\gamma) \geq \cE_{a,b}^{\mathrm{mNLS}}(\gamma) - o(1),\quad \forall \gamma \in \mathrm{K}_- \,.
\end{equation}
Putting together~\eqref{cv:ground-states-split-1} and~\eqref{cv:ground-states-split-2} into~\eqref{cv:ground-states-deF}, we deduce that
\begin{align*}
	o(1) + E_{a,b}^{\NLS} \geq E_{a,b,N}^{\mathrm{QM}} \geq{} & \int_{\mathrm{K}_+}\frac{\lambda C_{\mathrm{Kin}}}{2}\di\mu_{\Psi_N} (\gamma) + \int_{\mathrm{K}_-}\cE_{a,b}^{\mathrm{mNLS}}(\gamma)\di\mu_{\Psi_N} (\gamma) - o(1) \\
	\geq{} & \int_{\cS_{P}}\min\left\{\frac{\lambda C_{\mathrm{Kin}}}{2},\cE_{a,b}^{\mathrm{mNLS}}(\gamma)\right\}\di\mu_{\Psi_N} (\gamma) - o(1) \,.
\end{align*}
Recalling now ---see~\eqref{cv:energy_measure}--- that, modulo a subsequence, the sequence of measures~$\{\mu_{\Psi_N}\}_N$ converges to a measure~$\mu$ and 
passing to the limit $N\to+\infty$, we have
\[
	E_{a,b}^{\NLS} \geq \lim_{N\to+\infty} E_{a,b,N}^{\mathrm{QM}} \geq \int_{\cS} \min\left\{\frac{\lambda C_{\mathrm{Kin}}}{2},\cE_{a,b}^{\mathrm{mNLS}}(\gamma)\right\} \di\mu(\gamma) \,.
\]
Passing now to limit $C_{\mathrm{kin}}\to+\infty$ gives
\[
	E_{a,b}^{\NLS} \geq \lim_{N\to+\infty} E_{a,b,N}^{\mathrm{QM}} \geq \int_{\cS} \cE_{a,b}^{\mathrm{mNLS}}(\gamma) \di\mu(\gamma) \,.
\]
Moreover, testing~\eqref{ineq:information-deF} with $A$, $B$, and $C$ finite rank compact operators whose ranges lie within that of~$\widetilde{P} = \1_{h\leq K}$, using the convergences~$\mu_{\Psi_N} \to \mu$ and~$\gamma_{\Psi_N}^{(3)} \to \gamma^{(3)}$, then letting $K \to+\infty$, we obtain that this measure~$\mu$ is related to the measure~$\nu$ in~\eqref{eq:cv-quantum-de-finetti} through
\[
	\int_{\cS} \gamma^{\otimes 3} \di\mu(\gamma) = \int_{S\gH} |v^{\otimes 3} \rangle \langle v^{\otimes 3}| \di\nu(v) \,.
\]
Therefore, using this fact and that $\cE_{a, b}^{\mathrm{mNLS}}(\gamma)$ is a linear function of~$\gamma^{\otimes 3}$, we finally arrive at
\begin{equation}\label{cv:ground-states-last-step}
	E_{a,b}^{\NLS} \geq \lim_{N\to+\infty} E_{a,b,N}^{\mathrm{QM}} \geq \int_{\cS} \cE_{a, b}^{\mathrm{mNLS}}(\gamma) \di\mu(\gamma) = \int_{S\gH} \cE_{a, b}^{\NLS}(v) \di\nu (v) \geq E_{a,b}^{\NLS} \,.
\end{equation}
This shows that $\nu$ must also be supported on $\cM^{\NLS}$, which proves~\eqref{cv:qm-nls-ground-state}.

\medskip

We finish the proof of Theorem~\ref{thm:many-body} by investigating the collapse behavior of many\nobreakdash-body ground states in~\eqref{blowup:qm-ground-state}. Let $\{\Psi_N\}_N$ be a sequence of ground states of~$E_{a_N, b_N, N}^{\mathrm{QM}}$. It follows from~\eqref{ineq:information-deF} and the choice of~$L$ in~\eqref{eq:L} that we must have
\[
	\lim_{N\to+\infty}\tr\left| P^{\otimes 3} \gamma_{\Psi_N}^{(3)} P^{\otimes 3} - \int_{\cS_{P}} \gamma^{\otimes 3} \di\mu_{\Psi_N}(\gamma)\right| = 0 \,.
\]
The above together with~\eqref{ineq:measure-deF}, $\tr[\gamma_{\Psi_N}^{(3)}] = 1$, \eqref{cv:trace_Pperp_one-body}, and the triangle inequality yield
\[
	\lim_{N\to+\infty}\tr\left|\gamma_{\Psi_N}^{(3)} - \int_{\cS_{P}} \gamma^{\otimes 3} \di\mu_{\Psi_N}(\gamma)\right| = 0 \,.
\]
Taking a partial trace we also have
\[
	\lim_{N\to+\infty}\tr\left|\gamma_{\Psi_N}^{(1)} - \int_{\cS_{P}} \gamma \di\mu_{\Psi_N}(\gamma)\right| = 0 \,.
\]
To complete the proof of~\eqref{blowup:qm-ground-state}, it suffices to prove the convergence of $(\gamma_{\Psi_N}^{(1)} - |Q_N \rangle \langle Q_N|)$ to $0$ in trace class, where $Q_N = \ell_N^{-1}Q_0(\ell_N^{-1}\cdot)$. It follows from the above that this is equivalent to
\begin{equation}\label{blowup:ground-states-last-step}
	\lim_{N\to+\infty} \int_{\cS_{P}} \left|\langle \sqrt{\rho_{\gamma}}, Q_N \rangle\right| \di\mu_{\Psi_N}(\gamma) = 1 \,.
\end{equation}

To this end, we define
\[
	\delta_N := \int_{\cS_{P}} \frac{ \cE_{a_N, b_N, N}^{\mathrm{H}}(\sqrt{\rho_{\gamma}}) - E_{a_N, b_N, N}^{\mathrm{H}} }{ \left| E_{a_N, b_N, N}^{\mathrm{H}} \right| } \di\mu_{\Psi_N}(\gamma) \geq 0 \,,
\]
by definition of~$E_{a_N, b_N, N}^{\mathrm{H}}$, and claim that $\delta_N \to 0$. Indeed, on the one hand by the lower bound in~\eqref{ineq:measure-deF} together with~\eqref{cv:trace_Pperp_one-body} if $E_{a_N, b_N, N}^{\mathrm{H}}>0$ and on the other hand by the upper bound in~\eqref{ineq:measure-deF} if $E_{a_N, b_N, N}^{\mathrm{H}}<0$ ---which happens if $\zeta > 2 + 4/s$, see~\eqref{blowup:hartree-inhomogeneous-energy}---, we have
\[
	E_{a_N, b_N, N}^{\mathrm{H}} \leq (1+o(1)) E_{a_N, b_N, N}^{\mathrm{H}} \int_{\cS_{P}} \di\mu_{\Psi_N}(\gamma) \,.
\]
By~\eqref{cv:ground-states-deF}, this yields
\[
	(1+o(1)) E_{a_N, b_N, N}^{\mathrm{H}} \int_{\cS_{P}} \di\mu_{\Psi_N}(\gamma) \geq E_{a_N, b_N, N}^{\mathrm{H}} \geq \int_{\cS_{P}}\cE_{a_N, b_N, N}^{\mathrm{H}}(\sqrt{\rho_\gamma})\di\mu_{\Psi_N} (\gamma) - o(1)\,,
\]
hence
\[
	o(1) \sgn\left( E_{a_N, b_N, N}^{\mathrm{H}} \right) \int_{\cS_{P}} \di\mu_{\Psi_N}(\gamma) + \frac{o(1)}{ \left| E_{a_N, b_N, N}^{\mathrm{H}} \right| } \geq \delta_N \geq0
\]
and we obtain the claim, using~\eqref{cv:energy_measure} as well as the fact that the error term from~\eqref{cv:ground-states-deF} ---the one divided by $|E_{a_N, b_N, N}^{\mathrm{H}}|$ in the above equation--- is of small order compared to the \NLS energy (see the end of the previous subsection), hence compared to $E_{a_N, b_N, N}^{\mathrm{H}}$.

We also define $\cT_N$ as the set of all trace-class, self-adjoint operators $0\leq \gamma \in \cS_{P}$ s.t.\ $\tr\gamma = 1$ and
\begin{equation}\label{blowup:approximate-ground-states-last-step}
	0\leq \frac{ \cE_{a_N, b_N, N}^{\mathrm{H}}(\sqrt{\rho_{\gamma}}) - E_{a_N, b_N, N}^{\mathrm{H}} }{ \left| E_{a_N, b_N, N}^{\mathrm{H}} \right| } \leq \sqrt{\delta_N} \,.
\end{equation}
Note that the sets $\cT_N$ are non-empty since they contain Hartree ground states for instance.

On the one hand, we claim that we must have
\begin{equation}\label{limit:PsiQ_critical}
	\lim_{N\to+\infty}\inf_{\gamma\in \cT_N} \left|\langle \sqrt{\rho_{\gamma}}, Q_N \rangle\right| = 1 \,.
\end{equation}
Indeed, if this was not the case, and since $|\langle \sqrt{\rho_{\gamma}}, Q_N \rangle|\leq1$ by Schwarz inequality, there would exist a subsequence $\{\gamma_N \}_N \subset \cT_N$ such that
\begin{equation}\label{limit:PsiQ_fraud_critical}
	\limsup_{N\to+\infty} \left|\langle \sqrt{\rho_{\gamma_N}}, Q_N \rangle\right| < 1 \,.
\end{equation}
Since $\gamma_N \in \cT_N$ and $\delta_N \to 0$, we would deduce from~\eqref{blowup:approximate-ground-states-last-step} that
\[
	\lim_{N\to+\infty} \frac{ \cE_{a_N, b_N, N}^{\mathrm{H}}(\sqrt{\rho_{\gamma_N}}) }{E_{a_N, b_N, N}^{\mathrm{H}}} = 1 \,.
\]
Therefore, $\{\sqrt{\rho_{\gamma_N}}\}_N$ would be a sequence of approximate ground states of~$E_{a_N, b_N, N}^{\mathrm{H}}$ and
\[
	\lim_{N\to+\infty} \left|\langle \sqrt{\rho_{\gamma_N}}, Q_N \rangle\right| = 1 \,,
\]
by Theorem~\ref{thm:collapse-hartree}, contradicting~\eqref{limit:PsiQ_fraud_critical}. Thus,~\eqref{limit:PsiQ_critical} is proved.

On the other hand, by the definition of~$\delta_N$ and the choice of~$\cT_N$, we have
\[
	\delta_N \geq \int_{\cT_N^c} \frac{ \cE_{a_N, b_N, N}^{\mathrm{H}}(\sqrt{\rho_{\gamma}}) - E_{a_N, b_N, N}^{\mathrm{H}} }{ \left| E_{a_N, b_N, N}^{\mathrm{H}} \right| } \di\mu_{\Psi_N}(\gamma) \geq \sqrt{\delta_N} \, \mu_{\Psi_N}(\cT_N^c) \,,
\]
where $\cT_N^c = \cS_{P} \setminus \cT_N$.
Thus, $\mu_{\Psi_N}(\cT_N^c) \leq \sqrt{\delta_N} \to 0$ and consequently $\mu_{\Psi_N}(\cT_N) \to 1$, by~\eqref{cv:energy_measure}. The latter convergence and~\eqref{limit:PsiQ_critical} imply that
\begin{align*}
	\int_{\cS_{P}} \left|\langle \sqrt{\rho_{\gamma}}, Q_N \rangle\right| \di\mu_{\Psi_N}(\gamma) \geq{} & \int_{\cT_N} \left|\langle \sqrt{\rho_{\gamma}}, Q_N \rangle\right| \di\mu_{\Psi_N}(\gamma) \\
	\geq{} & \mu_{\Psi_N}(\cT_N)\inf_{\gamma \in \cT_N} \left|\langle \sqrt{\rho_{\gamma}}, Q_N \rangle\right| \xrightarrow[N \to+\infty]{} 1 \,.
\end{align*}
Thus~\eqref{blowup:ground-states-last-step} holds true and the proof of Theorem~\ref{thm:many-body} is completed.

\medskip

\begin{remark}\label{rem:bec-refined}
	There is room for improvement in our estimates and we can for instance obtain~\eqref{cv:qm-nls-energy} and~\eqref{cv:qm-nls-ground-state} for a wider range $0 < \alpha < 1/4$ and $0 < \beta < 1/8$. Let us point out important steps.
	
	Coming back to~\eqref{lower-bound:many-body} and using this time the first inequality in~\eqref{lower-bound:two-body-operator}, we obtain
	\begin{equation}\label{lower-bound:two-body-refined}
		\tr\left[\cU_{N^\alpha}\gamma_{\Psi_N}^{(3)}\right] \leq (1+\varepsilon)\tr\left[\cU_{N^\alpha}P^{\otimes 3}\gamma_{\Psi_N}^{(3)}P^{\otimes 3}\right] + C(1+\varepsilon^{-1})N^{2\alpha}\tr\left[ P_{\perp} \gamma_{\Psi_N}^{(1)} \right].
	\end{equation}
	Similarly, for the repulsive, three\nobreakdash-body interaction, we have the refined estimate
	\begin{align}\label{lower-bound:three-body-refined}
		\tr\left[W_{N^\beta}\gamma_{\Psi_N}^{(3)}\right] &\geq (1-\varepsilon)\tr\left[W_{N^\beta}P^{\otimes 3}\gamma_{\Psi_N}^{(3)}P^{\otimes 3}\right] - C \varepsilon^{-1} N^{4\beta}\tr\left[ P_{\perp} \gamma_{\Psi_N}^{(1)} \right] \nonumber\\
		&\geq (1-\varepsilon)\tr\left[W_{N^\beta}P^{\otimes 3}\gamma_{\Psi_N}^{(3)}P^{\otimes 3}\right] - C(1 + \varepsilon^{-1}) N^{4\beta}\tr\left[ P_{\perp} \gamma_{\Psi_N}^{(1)} \right].
	\end{align}
	It then follows from~\eqref{lower-bound:one-body},~\eqref{lower-bound:two-body-refined}, and~\eqref{lower-bound:three-body-refined} that
	\begin{multline}\label{lower-bound:many-body-refined}
		\frac{\langle \Psi_N, H_{a,b,N} \Psi_N \rangle}{N} \geq \frac{1}{3}\tr\left[H_{a, b, N, \varepsilon}^{(3)}P^{\otimes 3}\gamma_{\Psi_N}^{(3)}P^{\otimes 3}\right] \\
		+ \left(L - C \left( 1+\varepsilon^{-1} \right) \left(N^{2\alpha}  + N^{4\beta} \right) \right)\tr\left[ P_{\perp} \gamma_{\Psi_N}^{(1)} \right]
	\end{multline}
	where $H_{a, b, N, \varepsilon}^{(3)}$ is $H_{a,b,N}^{(3)}$ in~\eqref{lower-bound:many-body} but with $a$ and $b$ replaced by $(1+\varepsilon)a$ and $(1-\varepsilon)b$, respectively. We might apply again Theorem~\ref{thm:information_deF} to obtain that
	\begin{equation}\label{lower-bound:deF-refined}
		\frac{1}{3} \tr\left[H_{a, b, N, \varepsilon}^{(3)}P^{\otimes3}\gamma_{\Psi_N}^{(3)}P^{\otimes3} \right] \geq \int_{\cS_{P}} \cE_{a,b,N,\varepsilon}^{\mathrm{mH}}(\gamma)\di\mu_{\Psi_N}(\gamma) - C \sqrt{\frac{\log L}{N}} \left( L + N^{2\alpha}+N^{4\beta}\right)
	\end{equation}
	where $\cE_{a,b,N,\varepsilon}^{\mathrm{mH}}$ is $\cE_{a,b,N}^{\mathrm{mH}}$ in~\eqref{lower-bound:main} but with $a$ and $b$ replaced by $(1+\varepsilon)a$ and $(1-\varepsilon)b$, respectively. Note that the associated Hartree energy $E_{a,b,N,\varepsilon}^{\mathrm{mH}}$ is bounded below uniformly in $N$ when $\varepsilon$ is chosen independent of~$N$ and such that $0<\varepsilon < a_*/a - 1$ if $a < a_*$ and $0<\varepsilon<1$ if $a\geq a_*$ and $\alpha<\beta$ (this choice is again made in order to ensure that $(1+\varepsilon)a$ compares the same way to $a_*$ as $a$ does). This can be obtained by the same arguments as in the proof of Theorem~\ref{thm:collapse-hartree}. Choosing now $L$ as a power of $N$ such that $N^{\frac{1}{2}} \gg L \gg N^{2\alpha} + N^{4\beta}$ ---hence $\alpha < 1/4$ and $\beta < 1/8$---, as well as
	\[
		\varepsilon = \tilde{C} (N^{2\alpha}+N^{4\beta})L^{-1}
	\]
	for a sufficient large $\tilde{C}>0$, and using~\eqref{lower-bound:many-body-refined} and~\eqref{lower-bound:deF-refined}, we obtain
	\begin{equation}\label{cv:ground-states-deF-refined}
		\frac{ \langle \Psi_N, H_{a,b,N} \Psi_N \rangle }{N} \geq \int_{\cS_{P}} \cE_{a,b,N,\varepsilon}^{\mathrm{mH}}(\gamma)\di\mu_{\Psi_N}(\gamma) + \frac{L}{2}\tr\left[ P_{\perp} \gamma_{\Psi_N}^{(1)} \right] - C \sqrt{\frac{\log L}{N}} L
	\end{equation}
	and the choice $L=N^\delta$ with $\delta<1/2$ that we just made ensures the energy convergence. For the convergence of ground states, we process as in the arguments from~\eqref{cv:ground-states-deF} to~\eqref{cv:ground-states-last-step}, using~\eqref{cv:ground-states-deF-refined}, to obtain the following refined estimate of~\eqref{cv:ground-states-last-step}
	\[
		E_{a,b}^{\NLS} \geq \lim_{N\to+\infty} E_{a,b,N}^{\mathrm{QM}} \geq \int_{\cS} \cE_{a,b,\varepsilon}^{\mathrm{mNLS}}(\gamma) \di\mu(\gamma) = \int_{S\gH} \cE_{a,b,\varepsilon}^{\NLS}(v) \di\nu (v) \geq E_{a,b,\varepsilon}^{\NLS} \,.
	\]
	Here $\cE_{a,b,\varepsilon}^{\mathrm{mNLS}}$, $\cE_{a,b,\varepsilon}^{\NLS}$, and $E_{a,b,\varepsilon}^{\NLS}$ are respectively $\cE_{a,b}^{\mathrm{mNLS}}$ in~\eqref{functional:mixed-nls-inhomogeneous}, $\cE_{a,b}^{\NLS}$ in~\eqref{functional:nls-inhomogeneous}, and $E_{a,b}^{\NLS}$ in~\eqref{energy:nls-inhomogeneous} but with $a$ and $b$ replaced by $(1+\varepsilon)a$ and $(1-\varepsilon)b$, respectively. Passing finally to the limit $\varepsilon \to 0$, we obtain~\eqref{cv:qm-nls-ground-state} as well as~\eqref{cv:qm-nls-energy} for $\alpha < 1/4$ and $\beta < 1/8$.
\end{remark}

\appendix
\section{Homogeneous BECs}\label{app:homogeneous}

In this appendix, we consider the auxiliary minimization problem without external potential, i.e., without the term $\sum_{i=1}^N |x_i|^s$ in~\eqref{hamiltonian}. We are interested in ground states of the following homogeneous \NLS minimization problem
\begin{equation}\label{energy:nls-homogeneous}
	G^{\NLS}_{a,b} := \inf\left\{\cG^{\NLS}_{a,b}(v):v\in H^1(\R^2), \norm{v}_{L^2}=1\right\},
\end{equation}
with the associated homogeneous \NLS functional
\begin{equation}\label{functional:nls-homogeneous}
	\cG^{\NLS}_{a,b}(v) := \norm{ \nabla v }_{L^2}^2 - \frac{a}{2} \norm{v}_{L^4}^4 + \frac{b}{6} \norm{v}_{L^6}^6 \,.
\end{equation}
In contrast to the inhomogeneous case~\eqref{energy:nls-inhomogeneous}, the homogeneous minimization problem~\eqref{energy:nls-homogeneous} with nontrivial, three\nobreakdash-body interaction only admits a ground state when the two\nobreakdash-body interaction is sufficiently negative: $a>a_*$. Furthermore, by establishing the asymptotic behaviors of~\eqref{energy:nls-homogeneous} and of its ground states, in the case where $a_n$ converges from above ($a_n \searrow a_*$) with a convergence rate slower in order of magnitude than the one of $b_n \searrow 0$ (see Remark~\ref{rem:collapse-nls-inhomogeneous}), and comparing it with~\eqref{energy:nls-inhomogeneous}, we also prove the missing case~$\zeta = +\infty$ in Theorems~\ref{thm:collapse-nls} and~\ref{thm:many-body}. We have the following.

\begin{theorem}[Homogeneous \NLS ground states: existence and collapse]\label{thm:collapse-nls-homogeneous}
	Let $a,b\in\R$ and $G^{\NLS}_{a,b}$ be given in~\eqref{energy:nls-homogeneous}.
	\begin{enumerate}[label=(\roman*)]
		\item\label{existence:nls-homogeneous-gound-state} $G^{\NLS}_{a,b}$ admits a ground state if and only if either ($b=0$ and $a=a_*$) or ($b>0$ and $a>a_*$).
		\item\label{thm-collapse-nls_homogeneous} Let $\{a_n\}_n \subset(a_*, +\infty)$ and $\{b_n\}_n \subset (0,+\infty)$ satisfy $a_n \searrow a_*$ and $b_n \searrow 0$ as $n\to+\infty$.
		Let $\{v_n\}_n$ be a sequence of (approximate) ground states of~$G_{a_n, b_n}^{\NLS}$. Then, up to a translation,
		\begin{equation}\label{thm:collapse-nls-homogeneous-ground-state}
			\lim_{n\to+\infty} \ell_n v_n(\ell_n \cdot) = Q_0
		\end{equation}
		strongly in~$H^1(\R^2)$, for the whole sequence, where
		\begin{equation}\label{thm:collapse-nls-homogeneous-blowup-length}
			\ell_n = \sqrt{2a_* \cQ_6 \frac{b_n}{a_n - a_*}} \,,
		\end{equation}
		with $\cQ_6$ and $Q_0$ are defined in~\eqref{Def-Q-s} and~\eqref{Def_Q0}, respectively. Furthermore,
		\begin{equation}\label{thm:collapse-nls-homogeneous-energy}
			G_{a_n, b_n}^{\NLS} = -\frac{(a_n - a_*)^2}{4 a_*^2 \cQ_6 b_n} (1+o(1)) \,.
		\end{equation}
	\end{enumerate}
\end{theorem}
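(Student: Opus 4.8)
The plan is to treat the two parts of Theorem~\ref{thm:collapse-nls-homogeneous} separately: part~(i) reduces to scaling considerations plus a single concentration--compactness argument, while part~(ii) follows the scheme of Theorem~\ref{thm:collapse-nls}, i.e.\ I rescale the ground states at the length scale $\ell_n$, bound the energy from above by an explicit trial state, and identify the blow-up profile to get the matching lower bound.

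For part~(i), the crucial identity is that, writing $v_\ell(x):=\ell^{-1}v(\ell^{-1}x)$ (which preserves the $L^2$-norm in dimension two),
\[
	\cG^{\NLS}_{a,b}(v_\ell)=\ell^{-2}\Bigl(\norm{\nabla v}_{L^2}^2-\tfrac a2\norm{v}_{L^4}^4\Bigr)+\tfrac b6\,\ell^{-4}\norm{v}_{L^6}^6\,.
\]
All non-existence statements follow by inspecting this quantity as $\ell\to0$ or $\ell\to+\infty$ together with~\eqref{ineq:GN}: if $b<0$ the $\ell^{-4}$ term sends $\cG^{\NLS}_{a,b}(v_\ell)\to-\infty$ as $\ell\to0$; if $b=0$ and $a>a_*$ the $\ell^{-2}$ term does the same; and if ($b=0$ and $a<a_*$) or ($b>0$ and $a\le a_*$) the functional is nonnegative by~\eqref{ineq:GN} and tends to $0$ along $v_\ell$ as $\ell\to+\infty$, so $G^{\NLS}_{a,b}=0$ while a minimizer would have to make each nonnegative piece vanish, incompatibly with $\norm{v}_{L^2}=1$. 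The case ($b=0$ and $a=a_*$) is the equality case of~\eqref{ineq:GN}, attained by $Q_0$. The substantial case is $b>0$, $a>a_*$: testing with $\ell^{-1}Q_0(\ell^{-1}\cdot)$ and optimizing in $\ell$ gives $G^{\NLS}_{a,b}\le-\tfrac{(a/a_*-1)^2}{4b\,\cQ_6}<0$; inequality~\eqref{ineq:2-to-3} makes minimizing sequences bounded in $H^1$; vanishing is excluded because $G^{\NLS}_{a,b}<0$ whereas a vanishing sequence has $\liminf\cG^{\NLS}_{a,b}\ge0$ (Lions' lemma gives $\norm{v_n}_{L^4},\norm{v_n}_{L^6}\to0$); and dichotomy is excluded by the strict binding inequality $I(1)<I(\theta)+I(1-\theta)$, $\theta\in(0,1)$, for the mass-constrained infima $I(m):=\inf\{\cG^{\NLS}_{a,b}(v):\norm{v}_{L^2}^2=m\}$, which is checked from the scaling behavior of $I$ (in particular $I(m)=0$ for $m\le a_*/a$).

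For part~(ii), set $\mu_n:=(a_n-a_*)/(2a_*)\to0^+$ and $w_n:=\ell_n v_n(\ell_n\cdot)$, so $\norm{w_n}_{L^2}=1$ and, using $\ell_n^2=2a_*\cQ_6 b_n/(a_n-a_*)$,
\[
	\ell_n^2\,\cG^{\NLS}_{a_n,b_n}(v_n)=\Bigl[\norm{\nabla w_n}_{L^2}^2-\tfrac{a_*}{2}\norm{w_n}_{L^4}^4\Bigr]+\mu_n\Bigl[-a_*\norm{w_n}_{L^4}^4+\tfrac{1}{6\cQ_6}\norm{w_n}_{L^6}^6\Bigr]=:\mathcal F_n(w_n)\,.
\]
Testing $G^{\NLS}_{a_n,b_n}$ with $\ell^{-1}Q_0(\ell^{-1}\cdot)$ at the optimal $\ell=\ell_n$ gives the upper bound $\mathcal F_n(w_n)\le-\mu_n(1+o(1))$; since the Gagliardo--Nirenberg bracket is nonnegative, this forces $-a_*\norm{w_n}_{L^4}^4+\tfrac{1}{6\cQ_6}\norm{w_n}_{L^6}^6\le-(1+o(1))$, whence~\eqref{ineq:2-to-3} bounds $\{w_n\}$ in $L^4$ and $L^6$, so the Gagliardo--Nirenberg bracket is $O(\mu_n)\to0$, so $\{w_n\}$ is bounded in $H^1$ with $\norm{\nabla w_n}_{L^2}^2-\tfrac{a_*}{2}\norm{w_n}_{L^4}^4\to0$. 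A concentration--compactness argument (no vanishing since $\liminf\norm{w_n}_{L^4}>0$; no dichotomy, by the binding estimate as in part~(i)) then yields, up to translations, a strong limit $w_n\to w$ in $H^1$ with $\norm{w}_{L^2}=1$ and $\norm{\nabla w}_{L^2}^2=\tfrac{a_*}{2}\norm{w}_{L^4}^4$, hence $w=\sqrt{t}\,Q_0(\sqrt{t}(\cdot-y_0))$ for some $t>0$, $y_0\in\R^2$. Passing to the limit in the $L^4$--$L^6$ inequality and using~\eqref{Def-Q-s}--\eqref{norms_Q_0} (so the bracket becomes $-2t+t^2=(t-1)^2-1\ge-1$) forces $(t-1)^2\le0$, i.e.\ $t=1$, which is~\eqref{thm:collapse-nls-homogeneous-ground-state}; the whole sequence converges by uniqueness of the profile, as in~\cite[Proof of Theorem~1.2]{NguRic-22-arxiv}. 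Reading the same computation as a lower bound, $\mathcal F_n(w_n)\ge-\mu_n(1+o(1))$, matches the upper bound, so $G^{\NLS}_{a_n,b_n}=-\mu_n\ell_n^{-2}(1+o(1))=-\tfrac{(a_n-a_*)^2}{4a_*^2\cQ_6 b_n}(1+o(1))$, which is~\eqref{thm:collapse-nls-homogeneous-energy}. Approximate ground states are treated identically via $\cG^{\NLS}_{a_n,b_n}(v_n)=(1+o(1))G^{\NLS}_{a_n,b_n}$.

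The main obstacle, absent from Theorems~\ref{thm:collapse-nls}--\ref{thm:collapse-hartree} where the confining potential $|x|^s$ supplies compactness for free, is the loss of compactness due to translation invariance; it is handled by the concentration--compactness method, whose delicate ingredient is the strict subadditivity of $m\mapsto I(m)$ (established from the scaling of the functional), which rules out the dichotomy/spreading alternative for both minimizing sequences in (i) and rescaled ground states in (ii). Beyond that, the only real care needed is bookkeeping: every term of $\mathcal F_n$ is of leading order $o(1)$ and the relevant information sits at the scale $\mu_n$, so the expansions must be tracked at that order, exactly as in the proof of Theorem~\ref{thm:collapse-nls}.
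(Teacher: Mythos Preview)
Your argument is correct, but in part~(i) you take a different route from the paper. For the existence case $b>0$, $a>a_*$, the paper bypasses concentration--compactness entirely: since $\cG^{\NLS}_{a,b}(|v|^\ast)\le\cG^{\NLS}_{a,b}(v)$ under symmetric decreasing rearrangement, one can take the minimizing sequence to be radially symmetric decreasing, then invoke the compact embedding $H^1_{\mathrm{rad}}(\R^2)\hookrightarrow L^p(\R^2)$ for $2<p<\infty$ (Strauss) to get strong $L^p$ convergence directly; a short scaling argument $\cG^{\NLS}_{a,b}(v)\ge\norm{v}_{L^2}^2\,\cG^{\NLS}_{a,b}(v(\norm{v}_{L^2}\cdot))$ then forces $\norm{v}_{L^2}=1$. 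This is shorter than your Lions argument, though your approach has the advantage of not relying on the radial structure and of yielding the binding inequality $I(1)<I(\theta)+I(1-\theta)$ explicitly (which the paper does prove later, in its Theorem~\ref{thm:collapse-hartree-homogeneous}, for the Hartree problem where rearrangement is unavailable). For part~(ii) the two proofs are essentially the same; the only difference is that where you invoke ``the binding estimate as in part~(i)'' to rule out dichotomy for $\{w_n\}$, the paper argues more directly: from $\norm{\nabla w_n}_{L^2}^2-\tfrac{a_*}{2}\norm{w_n}_{L^4}^4\to0$ and Brezis--Lieb one gets $0=\bigl[\norm{\nabla w}_{L^2}^2-\tfrac{a_*}{2}\norm{w}_{L^4}^4\bigr]+\lim_n\bigl[\norm{\nabla(w_n-w)}_{L^2}^2-\tfrac{a_*}{2}\norm{w_n-w}_{L^4}^4\bigr]$, and the first bracket is strictly positive by~\eqref{ineq:GN} whenever $0<\norm{w}_{L^2}<1$. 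Your reference to a binding inequality here is slightly loose (the functional $\mathcal F_n$ changes with $n$), so the paper's direct argument is cleaner at this step.
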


\begin{remark}\label{rem:collapse-nls-homogeneous}\leavevmode
	\begin{itemize}[leftmargin=*]
		\item For~\emph{\ref{existence:nls-homogeneous-gound-state}}, we have in the critical case where $a=a_*$ and $b=0$ that $G^{\NLS}_{a_*, 0} = 0$ with the ground states being the normalized optimizers of~\eqref{ineq:GN}, and in the case $a>a_*$ and $b>0$ that $G^{\NLS}_{a,b}<0$ with its ground states that can be chosen nonnegative, \emph{radially symmetric decreasing}, by~\cite[Theorems 7.8 and 7.17]{LieLos-01}.
		
		\item For~\emph{\ref{thm-collapse-nls_homogeneous}}, that is, the collapse regime where $a_n \searrow a_*$ and $b_n \searrow 0$, we define $A_n:=(a_n - a_*)/a_*>0$.
		
			On the one hand if $\ell_n$ in~\eqref{thm:collapse-nls-homogeneous-blowup-length} satisfies $\lim_{n\to+\infty} \ell_n \in (0,+\infty]$ ---which implies $\zeta=1$ in~\eqref{thm-collapse-nls-condition-ratio-sequences}---, then the homogeneous energy~$G_{a_n, b_n}^{\NLS}$ of course converges to zero, by~\eqref{thm:collapse-nls-homogeneous-energy}, but its (approximate) ground states do not collapse, while comparatively the ground states of the inhomogeneous energy~\eqref{energy:nls-inhomogeneous} collapse to~$Q_0$, at the blow-up length scale~\eqref{thm-collapse-nls-condition-ratio-sequences}, as shown in Theorem~\ref{thm:collapse-nls}.
			
			On the other hand, if $\lim_{n\to+\infty} \ell_n = 0$, i.e., $b_n=o(A_n)$, then the homogeneous energy can still convergence ---as long as $\lim_{n\to+\infty} A_n^2/b_n < +\infty$---, including to zero ---if $A_n^2=o(b_n)$\textemdash, even though the $H^1(\R^2)$-norms of~$v_n$ diverge since $\norm{\nabla v_n}_{L^2}^2 = \ell_n^{-2} \norm{ \nabla Q_0 }_{L^2}^2 = \ell_n^{-2}$ diverges. It is worth noting that this case $b_n \ll A_n$ includes the missing case~$\zeta = +\infty$ in Theorem~\ref{thm:collapse-nls}, which reads $b_n^{(s+2)/(s+4)} \ll A_n$. We claim that~\eqref{thm:collapse-nls-homogeneous-ground-state} and~\eqref{thm:collapse-nls-homogeneous-energy} then fill the gap of this missing case in Theorem~\ref{thm:collapse-nls}. Indeed, if $a_n \searrow a_*$ and $b_n \searrow 0$ in such a way that $b_n^{(s+2)/(s+4)} \ll A_n$, then as just explained $\ell_n$ in~\eqref{thm:collapse-nls-homogeneous-blowup-length} converges to $0$ as $n\to+\infty$. Looking back at~\eqref{nls-energy-upperbound}, one observes that the term due to the trapping potential vanishes, while the two other terms explode, hence it has in particular no contribution to the leading order in this collapse regime. By comparing $E^{\NLS}_{a_n, b_n}$ and $G^{\NLS}_{a_n, b_n}$, we obtain that
			\[
				E_{a_n, b_n}^{\NLS} = G_{a_n, b_n}^{\NLS} (1+o(1)) = -\frac{(a_n - a_*)^2}{4 a_*^2 \cQ_6 b_n} (1+o(1))
			\]
			and the sequence of (approximate) ground states of~$E_{a_n, b_n}^{\NLS}$ blows up to $Q_0$ at the blow-up length~$\ell_n$ in~\eqref{thm:collapse-nls-homogeneous-blowup-length}, as shown in~\eqref{thm:collapse-nls-homogeneous-ground-state}.
		
		\item In contrast to the inhomogeneous case, where the blow-up phenomenon does not occur in the limit $a_n \searrow a_*$ with $b>0$ being fixed (see Remark~\ref{rem:collapse-nls-inhomogeneous}), the sequence of ground states $\{v_n\}_n$ of~$G_{a_n, b}^{\NLS}$ blows up in that regime, in the sense that $\norm{\nabla v_n}_{L^2} \to+\infty$. This is due to the fact that $G_{a_*, b}^{\NLS}$ does not admit a ground state. In such collapse regime, which is not covered by the statement of Theorem~\ref{thm:collapse-nls-homogeneous}, the asymptotic behaviors of~$G_{a_n, b}^{\NLS}$ and of its ground states $\{v_n\}_n$ are given by~\eqref{thm:collapse-nls-homogeneous-energy} and~\eqref{thm:collapse-nls-homogeneous-ground-state}.
		
		\item The blow-up phenomenon is also observed in the regime where $a>a_*$ is fixed and $b_n \searrow 0$ (also not covered by the statement of Theorem~\ref{thm:collapse-nls-homogeneous}). It does not occur in the sense $\norm{\nabla v_n}_{L^2} \to+\infty$, but in the sense $G_{a,b_n}^{\NLS} \to -\infty$, as given by~\eqref{thm:collapse-nls-homogeneous-energy}. For this reason, a sequence of \emph{approximate} \NLS ground states~$\{v_n\}_n$ is now naturally defined through
			\[
				G_{a,b_n}^{\NLS} \leq \cG_{a,b_n}^{\NLS}(v_n) \leq G_{a,b_n}^{\NLS} + o(1) \,.
			\]
			instead of~\eqref{nls:inhomogeneous-approximate-ground-state}. Furthermore, in this collapse regime, and for the same reason as two items above, the term in~\eqref{nls-energy-upperbound} due to the trapping potential has no contribution to the leading order term since $\ell_n \to 0$. The asymptotic behaviors of~$E_{a,b_n}^{\NLS}$ and $G_{a,b_n}^{\NLS}$ and of their ground states are then essentially the same. It is worth noting that the blowup profile of the sequence of ground states is not given by the solution of the cubic \NLS equation~\eqref{eq:NLS} but by the ground states of~$G_{a,1}^{\NLS}$ (see Theorem~\ref{thm:collapse-hartree-homogeneous} for further discussions).
	\end{itemize}
\end{remark}

\begin{proof}[Proof of Theorem~\ref{thm:collapse-nls-homogeneous}]
	We start by proving a priori bounds on~$G^{\NLS}_{a,b}$ for $(a,b)\in\R^2$. On the one hand, by the variational principle,
	\begin{equation}\label{upper-bound:nls-homogeneous}
		G^{\NLS}_{a,b} \leq \cE_{a,b}^{\NLS}\left(\ell Q(\ell \cdot)\right) = \ell^2 \frac{a_* - a}{2} \norm{Q}_{L^4}^4 + \ell^4 \frac{b}{6} \norm{Q}_{L^6}^6
	\end{equation}
	for any $L^2$-normalized optimizer $Q$ of~\eqref{ineq:GN}. The above yields that $G^{\NLS}_{a,b} \leq0$ for any $(a,b)\in\R^2$, by taking $\ell \to 0$.
	Moreover, taking $\ell \to+\infty$ in~\eqref{upper-bound:nls-homogeneous}, we obtain that $G^{\NLS}_{a,b} = -\infty$ when either ($b<0$ and $a\in\R$) or ($b=0$ and $a>a_*$). On the other hand, by~\eqref{ineq:GN}, we have
	\begin{equation}\label{lower-bound:nls-homogeneous}
		\cG_{a,b}^{\NLS}(v) \geq \frac{a_* - a}{2} \norm{v}_{L^4}^4 + \frac{b}{6} \norm{v}_{L^6}^6 \,,
	\end{equation}
	for any $L^2$-normalized $v\in H^1(\R^2)$. When $b\geq0$ and $a\leq a_*$, the above yields that $G^{\NLS}_{a,b} \geq 0$, hence $G^{\NLS}_{a,b} = 0$. In the special case $a=a_*$ and $b=0$, we of course have $\cE_{a_*, 0}^{\NLS}(Q) = 0$ for the unique (up to translation and dilation) normalized optimizer $Q$ of~\eqref{ineq:GN}, which is therefore a ground state. Otherwise, when $b\geq0$ and $a\leq a_*$ but $(a,b)\neq(a_*, 0)$, then the right hand side of~\eqref{lower-bound:nls-homogeneous} is strictly positive, in particular if $v$ is a ground state of~$G^{\NLS}_{a,b}$. This is a contradiction to $G^{\NLS}_{a,b} \leq0$ for any $(a,b)\in\R^2$ and yields that there does not exit ground states when either ($a <a_*$ and $b=0$) or ($a\leq a_*$ and $b>0$).
	
	To conclude the proof of~\emph{\ref{existence:nls-homogeneous-gound-state}}, we now deal with the most demanding case where $b>0$ and $a>a_*$. We first prove the existence of a ground state. Let $\{v_n\}_n$ be a \emph{radially symmetric decreasing} minimizing sequence for $G^{\NLS}_{a,b}$. We thus have
	\[
		G^{\NLS}_{a,b} = \lim_{n\to+\infty}\cG_{a,b}^{\NLS}(v_n) \quad \text{with} \quad \norm{v_n}_{L^2}=1 \,.
	\]
	Replacing $Q$ by $Q_0$ in~\eqref{upper-bound:nls-homogeneous} and optimizing over $\ell>0$, we get
	\begin{equation}\label{upper-bound:hartree-homogeneous}
		G_{a,b}^{\NLS} \leq -\frac{(a - a_*)^2}{4 a_*^2 \cQ_6 b} < 0 \,.
	\end{equation}
	Recalling that the cubic term is controlled by the quintic term, see~\eqref{ineq:2-to-3}, equation~\eqref{upper-bound:hartree-homogeneous} gives that $\norm{\nabla v_n}_{L^2}$ is uniformly bounded. Thus, there exists $v\in H^1(\R^2)$ such that, up to a translation and a subsequence, $v_n \to v$ weakly in~$H^1(\R^2)$ and, by the compact embedding $H_{\text{rad}}^1(\R^2) \hookrightarrow L^p(\R^2)$ for $2<p<+\infty$ (see, e.g.,~\cite{Strauss-77}), strongly in~$L^q(\R^2)$ for any $2< q<+\infty$. Fatou's lemma then yields
	\[
		0 > G^{\NLS}_{a,b} = \lim_{n\to+\infty}\cG_{a,b}^{\NLS}(v_n) \geq \cG_{a,b}^{\NLS}(v) \geq \norm{ v }_{L^2}^2 \cG_{a,b}^{\NLS}(v(\norm{ v }_{L^2}\cdot)) \geq G^{\NLS}_{a,b} \,,
	\]	
	since the first two inequalities give $v\not\equiv0$, then we use $0 < \norm{ v }_{L^2} \leq 1$ to obtain the third one, and use it again with $G^{\NLS}_{a,b} < 0$ for the last one. Therefore, we must have equalities in the above. In particular, we have $\norm{ v }_{L^2} = 1$ and $v$ is a ground state of~$G^{\NLS}_{a,b}$.
	
	\medskip
	
	Next, we prove~\emph{\ref{thm-collapse-nls_homogeneous}}. That is, the asymptotic behaviors of the homogeneous \NLS energy and ground states. The energy upper bound in~\eqref{thm:collapse-nls-homogeneous-energy} follows from~\eqref{upper-bound:hartree-homogeneous}. We prove the matching energy lower bound by proving~\eqref{thm:collapse-nls-homogeneous-ground-state}. Let $\{v_n\}_n$ be a sequence of (approximate) ground states of~$G_{a_n, b_n}^{\NLS}$ and let $w_n := \ell_n v_n(\ell_n \cdot)$. Then, $\norm{w_n}_{L^2} = \norm{v_n}_{L^2}=1$ and
	\begin{equation}\label{energy-blowup:nls-homogeneous}
		G^{\NLS}_{a_n, b_n} = \cG^{\NLS}_{a_n, b_n} (v_n) = \ell_n^{-2}\left(\norm{ \nabla w_n }_{L^2}^2 - \frac{a_n}{2} \norm{w_n}_{L^4}^4 \right) + \ell_n^{-4}\frac{b_n}{6}\norm{w_n}_{L^6}^6 \,.
	\end{equation}
	Multiplying~\eqref{energy-blowup:nls-homogeneous} by $2a_* \ell_n^2(a_n - a_*)^{-1}$ and using~\eqref{ineq:GN} as well as the energy lower bound just proved, we obtain
	\begin{equation}\label{energy-blowup:nls-homogeneous-nonvanishing}
		-1 \geq -a_* \norm{w_n}_{L^4}^4 + \frac{1}{6 \cQ_6} \norm{w_n}_{L^6}^6 \,.
	\end{equation}
	Since $\norm{w_n}_{L^4}^4 \leq \norm{w_n}_{L^6}^{3}$, the above yields that both $\norm{w_n}_{L^4}$ and $\norm{w_n}_{L^6}$ are bounded uniformly from above and below. Multiplying~\eqref{energy-blowup:nls-homogeneous} by $\ell_n^2$ and using~\eqref{upper-bound:hartree-homogeneous}, we obtain
	\begin{equation}\label{energy-blowup:nls-homogeneous-nondichotomy}
		0 \geq \lim_{n\to+\infty}\norm{ \nabla w_n }_{L^2}^2 - \frac{a_n}{2} \norm{w_n}_{L^4}^4 = \lim_{n\to+\infty} \norm{ \nabla w_n }_{L^2}^2 - \frac{a_*}{2} \norm{w_n}_{L^4}^4 \geq 0
	\end{equation}
	and the two inequalities are therefore actually equalities. In~\eqref{energy-blowup:nls-homogeneous-nondichotomy}, the first inequality comes from the nonnegativity of the quintic term, the equality from the boundedness of~$\norm{w_n}_{L^4}$ and the convergence of $a_n$, hence at this step we obtain the uniform boundedness of $\{w_n\}_n$ in~$H^1(\R^2)$, and the last inequality from~\eqref{ineq:GN}.
	
	Now, since $\{w_n\}_n$ is bounded uniformly in~$H^1(\R^2)$, there exists $w\in H^1(\R^2)$ such that, up to a translation and a subsequence, $w_n \to w$ weakly in~$H^1(\R^2)$ and almost everywhere in~$\R^2$. We prove that the convergence is actually strong by showing $\norm{w}_{L^2}=1$. If $\norm{w}_{L^2} = 0$, then $w_n \to 0$ strongly in~$L^p(\R^2)$ for all $2<p<\infty$ (see, e.g.,~\cite[Lemma 9]{LenLew-11}). Taking the limit $n\to+\infty$ in~\eqref{energy-blowup:nls-homogeneous-nonvanishing} we obtain a contradiction. Therefore, $w \not\equiv 0$. Next, we discard the possibility that $\norm{w}_{L^2} \in (0,1)$. If this was the case, we would deduce from~\eqref{energy-blowup:nls-homogeneous-nondichotomy} and from the Brezis--Lieb's refinement of Fatou's lemma (see, e.g.,~\cite[Theorem 1.9]{LieLos-01}) that
	\[
		0 = \norm{ \nabla w }_{L^2}^2 - \frac{a_*}{2} \norm{w}_{L^4}^4 + \lim_{n\to+\infty} \norm{ \nabla ( w_n - w ) }_{L^2}^2 - \frac{a_*}{2} \norm{ w_n - w }_{L^4}^4 \,.
	\]
	However, this is not possible since on the one hand
	\[
		\norm{ \nabla w }_{L^2}^2 - \frac{a_*}{2} \norm{w}_{L^4}^4 > 0
	\]
	by~\eqref{ineq:GN} together with $\norm{w}_{L^2} < 1$ and on the other hand, again by~\eqref{ineq:GN},
	\[
		\lim_{n\to+\infty}\norm{ \nabla ( w_n - w ) }_{L^2}^2 - \frac{a_*}{2} \norm{ w_n - w }_{L^4}^4 \geq 0\,.
	\]
	Therefore, we must have $\norm{w}_{L^2}=1$ and $w_n \to w$ strongly in~$L^p(\R^2)$ for all $2\leq p <\infty$.
	
	Furthermore,~\eqref{energy-blowup:nls-homogeneous-nondichotomy} implies that $w$ is an optimizer of~\eqref{ineq:GN}. Hence, after a suitable rescaling, $w(x) = \sqrt{t}Q_0(\sqrt{t}x)$ for some $t>0$, up to translation. We now show that $w \equiv Q_0$ up to translation, which proves~\eqref{thm:collapse-nls-homogeneous-ground-state} hence~\eqref{thm:collapse-nls-homogeneous-energy}. That is, we claim that $t= 1$. Indeed, multiplying again~\eqref{energy-blowup:nls-homogeneous} by $2a_* \ell_n^2(a_n - a_*)^{-1}$, using again~\eqref{ineq:GN} and~\eqref{upper-bound:hartree-homogeneous}, and taking the limit $n\to+\infty$, we get
	\[
		-1 \geq -2 \norm{ \nabla w }_{L^2}^2 + \frac{1}{6 \cQ_6}\norm{w}_{L^6}^6 = -2t + t^2 \geq -1 \,.
	\]
	Consequently, equality holds in the above, which implies $t=1$.
\end{proof}

With Theorem~\ref{thm:collapse-nls-homogeneous}-\emph{\ref{thm-collapse-nls_homogeneous}} in hand, we are now able to fill the gap of the case $\zeta = +\infty$ in Theorem~\ref{thm:many-body}.

\begin{theorem}[Many\nobreakdash-body ground states: condensation and collapse]\label{thm:many-body-homogeneous}
	Let $\alpha, \beta> 0$. Assume $U$ and $W$ satisfy~\eqref{condition:potential-two-body}--\eqref{condition:potential-three-body-symmetry} and $|x|U(x) \in L^1(\R^2)$. Let $\{a_N\}_N \subset (a_*, +\infty)$ and $\{b_N\}_N \subset (0,+\infty)$ satisfy $a_N - a_* \sim N^{-\mu}$ and $b_N \sim N^{-\nu}$ with
	\begin{equation}\label{many-body-blowup:speed-homogeneous}
		0<2\mu\leq \nu \qquad \text{ and } \qquad 2\nu - \mu < \alpha < \beta < \frac{1}{16} + \frac{\nu - 2\mu}{4} \,.
	\end{equation}
	Let $\{\Psi_N \}_N$ be a sequence of ground states of~$E_{a_N, b_N, N}^{\mathrm{QM}}$ given by~\eqref{energy:quantum} and let $\Phi_N := \ell_N \Psi_N(\ell_N \cdot)$ with $\ell_N = \sqrt{2a_* \cQ_6 \frac{b_N}{a_N - a_*}}$. Then, up to a translation,
	\[
		\lim_{N\to+\infty} \tr \left| \gamma_{\Phi_N}^{(k)} - | Q_0^{\otimes k} \rangle \langle Q_0^{\otimes k} | \right| = 0\,,\quad \forall\, k=1,2,\dots
	\]
	for the whole sequence, where $Q_0$ is given by~\eqref{Def_Q0}. Furthermore,
	\[
		E_{a_N, b_N, N}^{\mathrm{QM}} = E_{a_N, b_N}^{\NLS} (1+o(1)) = G_{a_N, b_N}^{\NLS} (1+o(1)) = -\frac{(a_N - a_*)^2}{4 a_*^2 \cQ_6 b_N}(1+o(1)) \,.
	\]
\end{theorem}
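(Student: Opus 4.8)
\emph{Strategy.} The plan is to bracket $E^{\mathrm{QM}}_{a_N,b_N,N}$ by the inhomogeneous Hartree energy $E^{\mathrm{H}}_{a_N,b_N,N}$ of~\eqref{energy:hartree-inhomogeneous}, and then to identify the latter with the \emph{homogeneous} \NLS energy $G^{\NLS}_{a_N,b_N}$ of~\eqref{energy:nls-homogeneous}, exploiting that in the window~\eqref{many-body-blowup:speed-homogeneous} the anharmonic trap is subleading. First I record that $0<2\mu\le\nu$ forces $\ell_N\sim N^{(\mu-\nu)/2}\to0$ and $\ell_N^s\ll(a_N-a_*)^2/b_N\sim|G^{\NLS}_{a_N,b_N}|\sim N^{\nu-2\mu}$, i.e.\ we sit in the regime $\zeta=+\infty$ of Theorem~\ref{thm:collapse-nls} (cf.\ Remark~\ref{rem:collapse-nls-homogeneous}); in particular $G^{\NLS}_{a_N,b_N}=E^{\NLS}_{a_N,b_N}(1+o(1))=-(a_N-a_*)^2(4a_*^2\cQ_6b_N)^{-1}(1+o(1))$.

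\emph{Upper bound.} Testing the variational principle with the factorized state $v^{\otimes N}$, $v=\ell_N^{-1}Q_0(\ell_N^{-1}\cdot)$, using the nonnegativity of the three\nobreakdash-body deficit and the quantitative two\nobreakdash-body estimate~\eqref{cv-rate:hartree-nls-2body} of Lemma~\ref{lem:hartree-nls-2-3-body} (legitimate since $|x|U(x)\in L^1$), and the explicit optimization behind~\eqref{upper-bound:hartree-homogeneous}, I obtain $E^{\mathrm{QM}}_{a_N,b_N,N}\le E^{\mathrm{H}}_{a_N,b_N,N}\le -(a_N-a_*)^2(4a_*^2\cQ_6b_N)^{-1}+\cQ_s\ell_N^s/s+CN^{-\alpha}\ell_N^{-3}$. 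Both correction terms are $o(|G^{\NLS}_{a_N,b_N}|)$, since $\ell_N^s$ and $N^{-\alpha}\ell_N^{-3}$ are $o(N^{\nu-2\mu})$ under~\eqref{many-body-blowup:speed-homogeneous} (for $N^{-\alpha}\ell_N^{-3}$ this already follows from $\alpha>(\nu+\mu)/2$, a fortiori from $\alpha>2\nu-\mu$). Hence $E^{\mathrm{QM}}_{a_N,b_N,N}\le G^{\NLS}_{a_N,b_N}(1+o(1))$.

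\emph{Lower bound and energy identities.} For the matching lower bound I re-run the information\nobreakdash-theoretic de~Finetti argument of~\eqref{lower-bound:many-body}--\eqref{cv:energy-qm-hartree} with $L$ a suitable power of $N$, essentially as in~\eqref{eq:L}; since $E^{\mathrm{H}}_{a_N,b_N,N}$ is eventually negative, I use the upper bound $\int\di\mu_{\Psi_N}\le1$ in~\eqref{ineq:measure-deF} to reach $E^{\mathrm{QM}}_{a_N,b_N,N}\ge E^{\mathrm{H}}_{a_N,b_N,N}+\tfrac{L}{2}\tr[P_{\perp}\gamma^{(1)}_{\Psi_N}]-C(N^{2\alpha-1/4}+N^{4\beta-1/4})$. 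The key point is that these truncation errors need not vanish now, only be $o(|G^{\NLS}_{a_N,b_N}|)=o(N^{\nu-2\mu})$, which is exactly what $\alpha<\beta<\tfrac{1}{16}+\tfrac{\nu-2\mu}{4}$ guarantees; the same choice of $L$ also yields $\tr[P_{\perp}\gamma^{(1)}_{\Psi_N}]\to0$. It then remains to prove $E^{\mathrm{H}}_{a_N,b_N,N}=G^{\NLS}_{a_N,b_N}(1+o(1))$ together with the collapse of the Hartree minimizers --- the homogeneous counterpart of Theorem~\ref{thm:collapse-hartree}, i.e.\ Theorem~\ref{thm:collapse-hartree-homogeneous}, whose proof discards the nonnegative trap, rescales $v_N\mapsto w_N=\ell_Nv_N(\ell_N\cdot)$ so that the smeared two- and three\nobreakdash-body interactions acquire the divergent ranges $N^\alpha\ell_N$ and $N^\beta\ell_N$ (here $2\nu-\mu<\alpha<\beta$ keeps these ranges divergent and the over-critical two\nobreakdash-body term compensated by the three\nobreakdash-body one, just as in the proof of Theorem~\ref{thm:collapse-hartree}), compares them with the local ones through Lemma~\ref{lem:hartree-nls-2-3-body}, and concludes with the homogeneous \NLS collapse Theorem~\ref{thm:collapse-nls-homogeneous}, which identifies the limiting profile as $Q_0$ up to a translation. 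Combining the three inputs yields $E^{\mathrm{QM}}_{a_N,b_N,N}=G^{\NLS}_{a_N,b_N}(1+o(1))$, hence the displayed chain of energy identities.

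\emph{Condensation and the main obstacle.} Feeding the energy asymptotics back into the de~Finetti lower bound as in~\eqref{cv:ground-states-deF}--\eqref{blowup:ground-states-last-step}, the measure $\mu_{\Psi_N}$ must concentrate, after rescaling, on approximate ground states of the homogeneous Hartree functional, each of which converges strongly in $H^1(\R^2)$ to a translate of $Q_0$ by Theorem~\ref{thm:collapse-hartree-homogeneous}; running the $\delta_N/\cT_N$ argument of~\eqref{blowup:approximate-ground-states-last-step}--\eqref{limit:PsiQ_critical} with $Q_N=\ell_N^{-1}Q_0(\ell_N^{-1}(\cdot-y_N))$ for a suitably extracted sequence of centers $\{y_N\}_N$ then gives $\int_{\cS_P}|\langle\sqrt{\rho_{\gamma}},Q_N\rangle|\di\mu_{\Psi_N}(\gamma)\to1$, which, together with~\eqref{ineq:information-deF} and $\tr[P_{\perp}\gamma^{(1)}_{\Psi_N}]\to0$, upgrades to $\tr\left|\gamma^{(k)}_{\Phi_N}-|Q_0^{\otimes k}\rangle\langle Q_0^{\otimes k}|\right|\to0$ modulo the translation $y_N$, for every $k$. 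The main obstacle is twofold. First, one must thread \emph{simultaneously} the de~Finetti truncation errors $N^{2\alpha-1/4},N^{4\beta-1/4}$, the interaction\nobreakdash-smearing errors $\sim N^{-\alpha}\ell_N^{-3}$, and the vanishing trap $\sim\ell_N^s$ below the possibly only bounded energy $|G^{\NLS}_{a_N,b_N}|\sim N^{\nu-2\mu}$; checking that the asymmetric window $2\nu-\mu<\alpha<\beta<\tfrac{1}{16}+\tfrac{\nu-2\mu}{4}$ with $0<2\mu\le\nu$ makes all these inequalities non-vacuous is the delicate bookkeeping. Second, since the trap is now subleading it no longer pins the blow-up center, so the compactness of the rescaled minimizers --- hence of the de~Finetti measure and of the many\nobreakdash-body reduced density --- is only \emph{modulo translations}, and one has to extract a single sequence of centers and propagate it coherently through the de~Finetti decomposition; this is why the conclusion is stated only ``up to a translation''.
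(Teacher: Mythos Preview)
Your proposal is correct and follows essentially the same route as the paper: bracket $E^{\mathrm{QM}}_{a_N,b_N,N}$ by $E^{\mathrm{H}}_{a_N,b_N,N}$ via the information-theoretic de~Finetti argument with a well-chosen cut-off $L$, identify $E^{\mathrm{H}}_{a_N,b_N,N}$ with $G^{\NLS}_{a_N,b_N}$ through a rescaled Hartree blow-up analysis relying on $\alpha<\beta$ for compensation, and then deduce condensation by feeding the energy asymptotics back into the $\delta_N/\cT_N$ mechanism. Your bookkeeping of which inequality in~\eqref{many-body-blowup:speed-homogeneous} controls which error term matches the paper's own explanation, and your observation that the trap is subleading so compactness is only modulo translations is precisely the second ingredient the paper singles out.

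One small caveat: you invoke Theorem~\ref{thm:collapse-hartree-homogeneous} for the Hartree step, but as stated that theorem treats either $a,b$ fixed or $a>a_*$ fixed with $b_N\searrow0$, not the present regime where \emph{both} $a_N\searrow a_*$ and $b_N\searrow0$. The paper accordingly does not quote Theorem~\ref{thm:collapse-hartree-homogeneous} directly but redoes the argument inline (see~\eqref{energy-blowup:hartree-homogeneous-nonvanishing}--\eqref{energy-blowup:hartree-homogeneous-nonvanishing-2}). Your description of that proof (``discard the trap, rescale by $\ell_N$, note $N^\alpha\ell_N,N^\beta\ell_N\to\infty$, apply Lemma~\ref{lem:hartree-nls-2-3-body}, conclude via Theorem~\ref{thm:collapse-nls-homogeneous}'') is nonetheless exactly what is needed, so the gap is only in the labelling, not in the argument.
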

The proof of Theorem~\ref{thm:many-body-homogeneous} being similar to the one of Theorem~\ref{thm:many-body}, we skip the details for brevity, but we explain in particular where the precise conditions on $\alpha$, $\beta$, $\nu$, and $\mu$ come from.
\begin{proof}
	First, the condition $\alpha<\beta$ comes from the assumption $a_N>a_*$, see the proof of Theorem~\ref{thm:collapse-hartree}.
	
	\medskip
	
	Second, the condition $0<2\mu\leq \nu$ in~\eqref{many-body-blowup:speed-homogeneous} ensures that $b_N \searrow 0$ faster than $a_N \searrow a_*$ in such a way that $E_{a_N, b_N}^{\NLS} \sim (a_N-a_*)^2/b_N \sim N^{\nu-2\mu} \not\to 0$ (see Remark~\ref{rem:collapse-nls-homogeneous}).
	
	\medskip
	
	Third, the lower bound $2\nu - \mu < \alpha$ comes from the convergence of the Hartree energy (which has been studied in Theorem~\ref{thm:collapse-hartree} in the general case). Indeed, the upper bound~\eqref{blow-up:hartree-upper-bound} adapted to our present case, in which $\ell_N^2 = 2a_* \cQ_6 b_N/(a_N - a_*) \sim N^{\mu - \nu} \to 0$, reads
	\[
		E_{a_N, b_N, N}^{\mathrm{H}} \leq \cE_{a_N, b_N}^{\NLS}(\ell_N^{-1}Q_0(\ell_N^{-1}\cdot)) + CN^{-\alpha} \ell_N^{-3} = -\frac{ N^{\nu-2\mu} }{ 4 \cQ_6 } + CN^{3(\nu-\mu)-\alpha} + \frac{\cQ_s}{s} \ell_N^s \,,
	\]
	where we used~\eqref{nls-energy-upperbound}. Since $\ell_N\to0$, the lower bound $2\nu - \mu < \alpha$ is the largest range of $\alpha$'s for which our method can give the convergence of the Hartree energy. Note that this is always a stronger condition than $\alpha>0$ since $2\nu - \mu > \nu - 2\mu \geq0$.

	The matching lower bound on $E_{a_N, b_N, N}^{\mathrm{H}}$ is still obtained by proving an $H^1(\R^2)$-convergence of rescaled ground states~$w_N =\ell_N v_N(\ell_N \cdot)$, which requires the assumption $0<\alpha<\beta$ to first prove their uniform boundedness in $L^2(\R^2)$. See the arguments starting at~\eqref{conv:hartree-nls-ground-state} in the proof of~\eqref{blowup:hartree-inhomogeneous-ground-state}.
	
	It is worth nothing that the external potential has no contribution to the leading order in the collapse regime, which could be a problem for compactness. Fortunately, because the limiting profile is unique, we can still obtain the compactness of~$\{w_N\}_N$. For this, we return to~\eqref{blow-up:hartree-lower-bound}. We now neglect there the nonnegative external term and adapt the proof of the case $\zeta\neq0$. Using the upper bound of the Hartree energy and the definition of~$\ell_N$, we obtain
	\begin{multline}\label{energy-blowup:hartree-homogeneous-nonvanishing}
		-1 + o(1) \geq \frac{1}{6 \cQ_6} \iiint_{\R^6} W_{N^\beta\ell_N}(x-y, x-z)|w_N(x)|^2|w_N(y)|^2|w_N(z)|^2 \dix \diy \diz \\
		+ \frac{2a_*}{a_N - a_*}\left(\norm{\nabla w_N}_{L^2}^2 - \frac{a_N}{2} \norm{w_N}_{L^4}^4 \right).
	\end{multline}
	Multiplying both sides by $(a_N - a_*)/(2a_*)>0$, taking the limit $N\to+\infty$, and neglecting the three\nobreakdash-body term gives
	\begin{equation}\label{energy-blowup:hartree-homogeneous-nondichotomy}
		0 \geq \lim_{N\to+\infty}\norm{\nabla w_N}_{L^2}^2 - \frac{a_N}{2}\norm{w_N}_{L^4}^4 = \lim_{N\to+\infty}\norm{\nabla w_N}_{L^2}^2 - \frac{a_*}{2}\norm{w_N}_{L^4}^4 \geq 0 \,,
	\end{equation}
	due to~\eqref{ineq:GN} for the last inequality and, for the equality, due to the same arguments as for in~\eqref{energy-blowup:nls-homogeneous-nondichotomy}. The above is actually an equality. On the other hand, since $\{w_N\}_N$ is bounded uniformly in~$H^1(\R^2)$, there exists $w\in H^1(\R^2)$ such that, up to translation and extraction of a subsequence, $w_N \to w$ weakly in~$H^1(\R^2)$ and almost everywhere in~$\R^2$. We prove that this is actually strong convergence by showing $\norm{w}_{L^2}=1$. If $\norm{w}_{L^2} = 0$, then we must have that $w_N \to 0$ strongly in~$L^p(\R^2)$ for all $2<p<\infty$ (see, e.g.,~\cite[Lemma 9]{LenLew-11}). Hence, by~\eqref{cv-ineq:hartree-nls-3body} and the fact that $N^{\beta}\ell_N \sim N^{\beta - (\nu - \mu)/2} \to+\infty$ (since we already assumed $\beta > \alpha > 2\nu - \mu$), we have
	\begin{equation}\label{energy-blowup:hartree-homogeneous-nonvanishing-1}
		\lim_{N\to+\infty}\frac{1}{6 \cQ_6} \iiint_{\R^6} W_{N^\beta\ell_N}(x-y, x-z)|w_N(x)|^2|w_N(y)|^2|w_N(z)|^2 \dix \diy \diz = 0 \,.
	\end{equation}
	Moreover, again by ~\eqref{ineq:GN},
	\begin{equation}\label{energy-blowup:hartree-homogeneous-nonvanishing-2}
		\lim_{N\to+\infty}\frac{2a_*}{a_N - a_*}\left(\norm{\nabla w_N}_{L^2}^2 - \frac{a_N}{2}\norm{w_N}_{L^4}^4 \right) \geq \lim_{N\to+\infty} -a_* \norm{w_N}_{L^4}^4 = 0 \,.
	\end{equation}
	Taking the limit $N\to+\infty$ in~\eqref{energy-blowup:hartree-homogeneous-nonvanishing} contradicts that~\eqref{energy-blowup:hartree-homogeneous-nonvanishing-1} and~\eqref{energy-blowup:hartree-homogeneous-nonvanishing-2} can hold simultaneously. Therefore, $w \not\equiv 0$. On the other hand, by using~\eqref{energy-blowup:hartree-homogeneous-nondichotomy} and the same arguments as in the proof of Theorem~\ref{thm:collapse-nls-homogeneous}, we discard the possibility that $\norm{w}_{L^2} \in (0,1)$. Therefore, we must have $\norm{w}_{L^2}=1$ and $w_N \to w$ strongly in~$L^p(\R^2)$ for all $2\leq p <\infty$. Again, by the same arguments as in the proof of Theorem~\ref{thm:collapse-nls-homogeneous}, we can actually prove the energy lower bound for the Hartree energy as well as the $H^1(\R^2)$-convergence of~$w_N$ to $w \equiv Q_0$.
	
	\medskip
	
	 Fourth, and finally, the upper bound $\beta < 1/16 + (\nu - 2\mu)/4$ comes from the convergence of the quantum energy. Indeed, omitting the nonnegative trace term in~\eqref{cv:energy-qm-hartree-preparative}, but keeping the other terms as they are, and using $E_{a,b,N}^{\mathrm{H}} \leq 0$ with the upper bound in~\eqref{ineq:measure-deF} makes~\eqref{cv:energy-qm-hartree} reading now
	\[
		E_{a,b,N}^{\mathrm{H}} \geq E_{a,b,N}^{\mathrm{QM}} \geq E_{a,b,N}^{\mathrm{H}} - C\frac{N^{8\beta}}{L} - C \sqrt{\frac{\log L}{N}} \left( L+N^{4\beta}\right).
	\]
	Then, the upper bound $\beta < 1/16 + (\nu - 2\mu)/4$ is the largest range of $\beta$'s such that we can find $L\sim N^\delta$ such that the error terms are negligible compared to $E_{a,b,N}^{\mathrm{H}} \sim N^{\nu-2\mu}$. Note that in the above equation, we also took advantage of $\alpha<\beta$, which implies that all the error terms in~\eqref{lower-bound:truncated}--\eqref{lower-bound:deF} and in consecutive equations can be written w.l.o.g.\ with only $\beta$ powers of $N$ instead of a sum of $\alpha$ and $\beta$ powers. That is why we do not need any similar bound on $\alpha$.

	This concludes the proof of Theorem~\ref{thm:many-body-homogeneous}.
\end{proof}

To conclude this work, we note that Theorems~\ref{thm:collapse-nls}, \ref{thm:many-body}, \ref{thm:collapse-nls-homogeneous}, and~\ref{thm:many-body-homogeneous} leave out the case where $a>a_*$ is fixed and $b \searrow 0$. In such collapse regime, the blow-up phenomenon can be observed in the \NLS and Hartree theories, by using the nonlinearity of the corresponding functionals. It occurs in the sense that the energy tends to $-\infty$, as pointed out in Remarks~\ref{rem:collapse-nls-inhomogeneous} and~\ref{rem:collapse-nls-homogeneous}. For convenience, let us introduce the following homogeneous Hartree minimization problem
\begin{equation}\label{energy:hartree-homogeneous}
	G^{\mathrm{H}}_{a,b,N} := \inf\left\{\cG^{\mathrm{H}}_{a,b,N}(v):v\in H^1(\R^2), \norm{v}_{L^2}=1\right\},
\end{equation}
with the associated homogeneous Hartree functional
\begin{multline}\label{functional:hartree-homogeneous}
	\cG^{\mathrm{H}}_{a,b,N}(v) := \int_{\R^2}|\nabla v(x)|^2 \dix - \frac{a}{2}\iint_{\R^4} U_{N^\alpha}(x-y)|v(x)|^2|v(y)|^2 \dix \diy \\
	+ \frac{b}{6}\iiint_{\R^6} W_{N^\beta}(x-y, x-z)|v(x)|^2|v(y)|^2|v(z)|^2 \dix \diy \diz \,.
\end{multline}
As in Theorem~\ref{thm:collapse-hartree} and in the proof of Theorem~\ref{thm:many-body-homogeneous}, we prove the condensation and collapse of \emph{approximate} Hartree ground states both in the mean-field regime where $N\to+\infty$ and in the collapse regime where $a>a_*$ is fixed and $b_N \searrow 0$.

\begin{theorem}[Condensation and collapse of homogeneous Hartree approximate ground states]\label{thm:collapse-hartree-homogeneous}
	Let $0<\alpha<\beta$ and let $U$ and $W$ satisfy~\eqref{condition:potential-two-body}--\eqref{condition:potential-three-body-symmetry}.
	\begin{enumerate}[label=(\roman*)]
		\item\label{thm:collapse-homogeneous_hartree_item1} Let $a>a_*$ and $b>0$ be fixed. Let $\{v_N\}_N$ be a sequence of approximate ground states of~$G_{a,b,N}^{\mathrm{H}}$ defined in~\eqref{energy:hartree-homogeneous}. Then, there exists a homogeneous \NLS ground state~$v$ of~$G_{a,b}^{\NLS}$ defined in~\eqref{energy:nls-homogeneous} such that, up to translation and extraction of a subsequence,
		\begin{equation}\label{cv:hartree-nls-homogeneous-ground-state}
			\lim_{N\to+\infty}v_N = v
		\end{equation}
		strongly in~$H^1(\R^2)$. Furthermore,
		\begin{equation}\label{cv:hartree-nls-homogeneous-energy}
			\lim_{N\to+\infty}G_{a,b,N}^{\mathrm{H}} = G_{a,b}^{\NLS} \,.
		\end{equation}
		\item\label{thm:collapse-homogeneous_hartree_item2} Assume further that $|x|U(x) \in L^1(\R^2)$. Fix $a>a_*$ and assume $b_N = N^{-\eta}$ with $0 < \eta < 2\alpha$. Let $\{v_N\}_N$ be a sequence of approximate ground states of~$G_{a, b_N, N}^{\mathrm{H}}$. Then, up to translation and extraction of a subsequence,
		\begin{equation}\label{blowup:hartree-homogeneous-ground-state}
			\lim_{N\to+\infty} \sqrt{b_N} v_N \!\left(\sqrt{b_N} \, \cdot \right) = w
		\end{equation}
		strongly in~$H^1(\R^2)$, where $w$ is an optimizer of~$G_{a,1}^{\NLS}<0$ given by~\eqref{energy:nls-homogeneous}. Furthermore,
		\begin{equation}\label{blowup:hartree-homogeneous-energy}
			G_{a, b_N, N}^{\mathrm{H}} = G_{a,b_N}^{\NLS} (1+o(1)) = b_N^{-1} G_{a,1}^{\NLS} (1+o(1)) \,.
		\end{equation}
	\end{enumerate}
\end{theorem}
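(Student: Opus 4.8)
The plan is to prove part~\emph{\ref{thm:collapse-homogeneous_hartree_item1}} by a translation-invariant concentration--compactness argument, modeled on the proof of Theorem~\ref{thm:collapse-hartree}, and then to reduce part~\emph{\ref{thm:collapse-homogeneous_hartree_item2}} to it by an exact self-similar rescaling. The key structural input is the scaling covariance of both functionals: for $\lambda>0$ the map $v\mapsto\lambda v(\lambda\cdot)$ preserves the $L^2$-norm, and a change of variables gives, with $\lambda=b^{-1/2}$,
\[
	\cG^{\mathrm{H}}_{a,b,N}\bigl(b^{-1/2}v(b^{-1/2}\cdot)\bigr)=b^{-1}\,\widetilde{\cG}^{\mathrm{H}}_{N}(v),\qquad\cG^{\NLS}_{a,b}\bigl(b^{-1/2}v(b^{-1/2}\cdot)\bigr)=b^{-1}\,\cG^{\NLS}_{a,1}(v),
\]
where $\widetilde{\cG}^{\mathrm{H}}_{N}$ is the homogeneous Hartree functional~\eqref{functional:hartree-homogeneous} with couplings $(a,1)$ and with $U_{N^\alpha}$, $W_{N^\beta}$ replaced by $U_{\sqrt{b}\,N^\alpha}$, $W_{\sqrt{b}\,N^\beta}$; in particular $G^{\mathrm{H}}_{a,b,N}=b^{-1}\widetilde{G}^{\mathrm{H}}_{N}$ and $G^{\NLS}_{a,b}=b^{-1}G^{\NLS}_{a,1}$. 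Taking $b=b_N=N^{-\eta}$, the effective scale exponents become $\alpha-\tfrac{\eta}{2}$ and $\beta-\tfrac{\eta}{2}$, which are \emph{fixed} positive numbers with $\alpha-\tfrac{\eta}{2}<\beta-\tfrac{\eta}{2}$ precisely because $0<\eta<2\alpha$ and $\alpha<\beta$. Hence part~\emph{\ref{thm:collapse-homogeneous_hartree_item2}} is exactly part~\emph{\ref{thm:collapse-homogeneous_hartree_item1}} applied to $\widetilde{\cG}^{\mathrm{H}}_{N}$, read through the substitution $u=\sqrt{b_N}\,v(\sqrt{b_N}\,\cdot)$, together with $b_N^{-1}G^{\NLS}_{a,1}=G^{\NLS}_{a,b_N}$. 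So all the work is in part~\emph{\ref{thm:collapse-homogeneous_hartree_item1}}, where the same rescaling --- now with an immaterial constant factor $\sqrt{b}$ --- lets me assume $b=1$.

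For the upper bound I would use Lemma~\ref{lem:hartree-nls-2-3-body} to get $\cG^{\mathrm{H}}_{a,b,N}(u)\to\cG^{\NLS}_{a,b}(u)$ for any fixed $u\in H^1(\R^2)$, $\norm{u}_{L^2}=1$, and take for $u$ a homogeneous \NLS ground state, which exists by Theorem~\ref{thm:collapse-nls-homogeneous}-\emph{\ref{existence:nls-homogeneous-gound-state}} since $a>a_*$, $b>0$; this yields $\limsup_N G^{\mathrm{H}}_{a,b,N}\le G^{\NLS}_{a,b}<0$, so along a sequence $\{v_N\}_N$ of approximate ground states $\cG^{\mathrm{H}}_{a,b,N}(v_N)$ is bounded above and eventually negative. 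The crucial a priori estimate is the uniform boundedness of $\{\nabla v_N\}_N$ in $L^2(\R^2)$, and this is where $\alpha<\beta$ enters, by the argument already used in the proof of Theorem~\ref{thm:collapse-hartree}-\emph{\ref{thm:collapse-hartree_item1}} in the case $a\geq a_*$: if $\norm{\nabla v_N}_{L^2}\to+\infty$ one sets $\varepsilon_N:=\norm{\nabla v_N}_{L^2}^{-1}\to0$, $\widetilde{v}_N:=\varepsilon_N v_N(\varepsilon_N\cdot)$, rescales the functional, drops the nonnegative three-body term, and controls the two-body one via the nonnegativity in~\eqref{cv-ineq:hartree-nls-2body} and~\eqref{ineq:GN}; multiplying by $\varepsilon_N^2$ and by $\varepsilon_N^4$ gives $\liminf_N\norm{\widetilde{v}_N}_{L^4}^4\ge 2/a>0$ and that the scaled three-body self-interaction of $\widetilde{v}_N$ tends to $0$, while dropping the three-body term in the energy and using $\norm{U_{N^\alpha}}_{L^\infty}=N^{2\alpha}\norm{U}_{L^\infty}$ yields $\norm{\nabla v_N}_{L^2}^2\lesssim N^{2\alpha}$, hence $\varepsilon_N\gtrsim N^{-\alpha}$ and $N^\beta\varepsilon_N\gtrsim N^{\beta-\alpha}\to+\infty$; Lemma~\ref{lem:hartree-nls-2-3-body}, whose error is uniform and insensitive to the rate of divergence, then forces $\norm{\widetilde{v}_N}_{L^6}\to0$, hence $\norm{\widetilde{v}_N}_{L^4}\to0$ by interpolation, a contradiction.

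With $\{v_N\}_N$ bounded in $H^1(\R^2)$ I would run concentration--compactness. Vanishing is excluded: if $v_N\to0$ in $L^p(\R^2)$ for $2<p<\infty$, then $\iint_{\R^4}U_{N^\alpha}(x-y)|v_N(x)|^2|v_N(y)|^2\dix\diy\le\norm{v_N}_{L^4}^4\to0$ by~\eqref{cv-ineq:hartree-nls-2body} and the three-body term is nonnegative, so $\liminf_N\cG^{\mathrm{H}}_{a,b,N}(v_N)\ge0$, contradicting its being eventually $\le G^{\NLS}_{a,b}/2<0$. Hence, after translating $v_N$ (which leaves $\cG^{\mathrm{H}}_{a,b,N}$ invariant) and passing to a subsequence, $v_N\rightharpoonup v\neq0$ in $H^1(\R^2)$, a.e., and strongly in $L^p_{\mathrm{loc}}$. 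By Lemma~\ref{lem:hartree-nls-2-3-body} the interaction terms coincide with $\norm{v_N}_{L^4}^4$ and $\norm{v_N}_{L^6}^6$ up to $o(1)$, and Brezis--Lieb splits these norms and $\norm{\nabla v_N}_{L^2}^2$, so $\cG^{\mathrm{H}}_{a,b,N}(v_N)=\cG^{\NLS}_{a,b}(v)+\cG^{\NLS}_{a,b}(v_N-v)+o(1)$. Applying the scaling inequality $\cG^{\NLS}_{a,b}(u)\ge\norm{u}_{L^2}^2\,G^{\NLS}_{a,b}$, valid for $\norm{u}_{L^2}\le1$ because $G^{\NLS}_{a,b}<0$ (the inequality used in the proof of Theorem~\ref{thm:collapse-nls-homogeneous}), to $v$ and to $v_N-v$, and using $\norm{v}_{L^2}^2+\norm{v_N-v}_{L^2}^2\to1$, gives $\liminf_N\cG^{\mathrm{H}}_{a,b,N}(v_N)\ge G^{\NLS}_{a,b}$; together with $\cG^{\mathrm{H}}_{a,b,N}(v_N)\le(1+o(1))G^{\mathrm{H}}_{a,b,N}\le(1+o(1))\cG^{\mathrm{H}}_{a,b,N}(u^\star)\to G^{\NLS}_{a,b}$ for a fixed \NLS ground state $u^\star$, this forces $G^{\mathrm{H}}_{a,b,N}\to G^{\NLS}_{a,b}$ --- i.e.~\eqref{cv:hartree-nls-homogeneous-energy} --- and $\cG^{\mathrm{H}}_{a,b,N}(v_N)\to G^{\NLS}_{a,b}$. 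Feeding this back turns all inequalities into equalities; the equality case of the scaling inequality gives $(1-\norm{v}_{L^2}^2)\norm{\nabla v}_{L^2}^2=0$, so $\norm{v}_{L^2}=1$ (since $v\neq0$ has $\norm{\nabla v}_{L^2}>0$ on $\R^2$), $v$ is a homogeneous \NLS ground state, $\norm{v_N-v}_{L^2}\to0$, and $\cG^{\NLS}_{a,b}(v_N-v)\to0$; by~\eqref{ineq:GN} the negative quartic part of the latter is, for $N$ large, absorbed by half of $\norm{\nabla(v_N-v)}_{L^2}^2$, forcing $\norm{\nabla(v_N-v)}_{L^2}\to0$ and $\norm{v_N-v}_{L^6}\to0$, i.e.\ $v_N\to v$ in $H^1(\R^2)$, which is~\eqref{cv:hartree-nls-homogeneous-ground-state}.

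I expect the main obstacle to be the loss of compactness: unlike the inhomogeneous Theorem~\ref{thm:collapse-hartree}, there is no trapping term, so the compact resolvent of $-\Delta+|x|^s$ is unavailable and one must genuinely work with translations, the vanishing lemma, and Brezis--Lieb, and, above all, produce the kinetic a priori bound by hand; because $a>a_*$, this last point is only possible thanks to $\alpha<\beta$, the repulsive three-body interaction dominating the supercritical attractive two-body one after self-similar rescaling --- exactly the mechanism already exploited in the proof of Theorem~\ref{thm:collapse-hartree}. Finally, the hypothesis $|x|U(x)\in L^1(\R^2)$ assumed in part~\emph{\ref{thm:collapse-homogeneous_hartree_item2}} is not needed for this qualitative statement --- only~\eqref{cv-ineq:hartree-nls-2body}--\eqref{cv-ineq:hartree-nls-3body} are used --- and would enter, via~\eqref{cv-rate:hartree-nls-2body}, only if one sought a convergence rate.
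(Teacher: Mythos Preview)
Your proof is correct. For part~\emph{\ref{thm:collapse-homogeneous_hartree_item1}} you follow essentially the paper's route --- Lions concentration--compactness, with the $\alpha<\beta$ contradiction argument from the proof of Theorem~\ref{thm:collapse-hartree} supplying the crucial kinetic a~priori bound --- the only cosmetic difference being that you exclude dichotomy via the single scaling inequality $\cG^{\NLS}_{a,b}(u)\ge\norm{u}_{L^2}^2\,G^{\NLS}_{a,b}$ and its equality case, whereas the paper splits into the three sign cases of the binding inequality~\eqref{cv:hartree-nls-homogeneous-binding-inequality}; the two arguments are equivalent.

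For part~\emph{\ref{thm:collapse-homogeneous_hartree_item2}} your approach genuinely differs from the paper's. The paper reproves the collapse statement directly, introducing $w_N=\sqrt{b_N}\,v_N(\sqrt{b_N}\,\cdot)$ and repeating, for these rescaled functions, the compactness and binding analysis of part~\emph{\ref{thm:collapse-homogeneous_hartree_item1}}; the upper bound is obtained by testing with a blown-up \NLS ground state and invoking the quantitative estimate~\eqref{cv-rate:hartree-nls-2body}, which is precisely where the hypothesis $|x|U\in L^1(\R^2)$ enters. You instead observe the exact scaling identity $G^{\mathrm{H}}_{a,b_N,N}=b_N^{-1}\widetilde G^{\mathrm{H}}_N$ with effective exponents $\alpha-\eta/2<\beta-\eta/2$, both positive because $0<\eta<2\alpha<2\beta$, and simply apply part~\emph{\ref{thm:collapse-homogeneous_hartree_item1}} to the rescaled problem. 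This buys two things: it eliminates the repetition, and --- since part~\emph{\ref{thm:collapse-homogeneous_hartree_item1}} uses only the qualitative bounds~\eqref{cv-ineq:hartree-nls-2body}--\eqref{cv-ineq:hartree-nls-3body} with a \emph{fixed} test function --- it shows, as you note, that $|x|U\in L^1(\R^2)$ is not actually needed for~\eqref{blowup:hartree-homogeneous-ground-state}--\eqref{blowup:hartree-homogeneous-energy}. The paper's direct argument, by contrast, tests with functions whose $H^1$-norm blows up like $b_N^{-1/2}$, so the $o(1)$ in~\eqref{cv-ineq:hartree-nls-2body} would produce an error of order $b_N^{-2}o(1)$, which is not obviously $o(b_N^{-1})$; this is why the paper falls back on~\eqref{cv-rate:hartree-nls-2body}.
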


\begin{remark*}\leavevmode
	\begin{itemize}[leftmargin=*]
		\item By a simple scaling argument, we see that $G_{a,b}^{\NLS} = b^{-1} G_{a,1}^{\NLS}$ for all $a,b>0$.
		
		\item Similarly to the homogeneous \NLS case, since $G_{a, b_N, N}^{\mathrm{H}} \to -\infty$ ---see~\eqref{blowup:hartree-homogeneous-energy}---, the sequence of \emph{approximate} Hartree ground states $\{v_N\}_N$ is now naturally defined, instead of~\eqref{hartree:inhomogeneous-approximate-ground-state}, through
		\[
			G_{a, b_N, N}^{\mathrm{H}} \leq \cG_{a, b_N, N}^{\mathrm{H}}(v_N) \leq G_{a, b_N, N}^{\mathrm{H}} + o(1) \,.
		\]
		
		\item About~\emph{\ref{thm:collapse-homogeneous_hartree_item1}}, because the homogeneous Hartree energy can be split easily into many small pieces with same energy, a true homogeneous Hartree ground state may not exist and we have to consider \emph{approximate} ground states, in the sense of energy~\eqref{hartree:inhomogeneous-approximate-ground-state}. Their condensation~\eqref{cv:hartree-nls-homogeneous-ground-state} is obtained by using the semi-nonlinearity of the corresponding homogeneous Hartree functional.
		
		For comparison, still in the absence of an external potential, a true many\nobreakdash-body ground state does not exist due to the translation-invariance of the many\nobreakdash-body system. Although we could consider \emph{approximate} many\nobreakdash-body ground states (in the sense of energy), we do not hope for their condensation to occur (in the mean-field regime) due to their superposition in the many-body theory.
		
		\item About~\emph{\ref{thm:collapse-homogeneous_hartree_item2}}, that is in the collapse regime, where $a>a_*$ is fixed and $b_N \searrow 0$, it is worth noting that the blowup profiles~\eqref{blowup:hartree-homogeneous-ground-state} and~\eqref{blowup:hartree-homogeneous-energy} apply to both homogeneous and inhomogeneous Hartree problems, \eqref{energy:hartree-inhomogeneous} and~\eqref{energy:hartree-homogeneous}.
		
		\item Finally note that, in such collapse regime, the condensation of the true many\nobreakdash-body ground states of~\eqref{energy:quantum} would be obtained as soon as $G_{a,1}^{\NLS}$ admits a unique ground state, by the arguments in the proof of Theorems~\ref{thm:many-body} and~\ref{thm:collapse-hartree-homogeneous} (see the proof of Theorem~\ref{thm:many-body-homogeneous} for an explanation). Without this uniqueness of the blowup profile, one could proceed instead with a further argument by convex analysis (see, e.g.,~\cite{Nguyen-19b}). However, it requires the compactness of the sequence of many\nobreakdash-body ground states, which is hard to obtain, and not even obvious to be true, since the external term has no contribution to the leading order term in this collapse regime.
	\end{itemize}
\end{remark*}
\begin{proof}[Proof of Theorem~\ref{thm:collapse-hartree-homogeneous}]
	Let us start with~\emph{\ref{thm:collapse-homogeneous_hartree_item1}}. The proof of the energy upper bound in~\eqref{cv:hartree-nls-homogeneous-energy} is derived directly by the variational principle together with~\eqref{cv-ineq:hartree-nls-2body} and~\eqref{cv-ineq:hartree-nls-3body}. The matching energy lower bound is proved by justifying~\eqref{cv:hartree-nls-homogeneous-ground-state}. While the compactness of the sequence of \emph{true} Hartree ground states was obtained in Theorem~\ref{thm:collapse-hartree} for the inhomogeneous case, the result in the homogeneous case is more complicated and requires to consider \emph{approximate} homogeneous Hartree ground states, since a true homogeneous Hartree ground state does not exist, due to the translation-invariance of the homogeneous Hartree functional. The compactness of the sequence of approximate homogeneous Hartree ground states can be obtained from Lions' concentration-compactness method~\cite{Lions-84a}. Indeed, under the conditions that $a>a_*$, $b>0$ are fixed and $0<\alpha<\beta$, the boundedness of~$\{\norm{\nabla v_N}_{L^2}\}_N$ is obtained by the same arguments as in the proof of~\eqref{cv:hartree-nls-ground-state} ---note that the arguments therein do not depend on the external potential. Therefore, there exists $v\in H^1(\R^2)$ such that, up to translation and extraction of a subsequence, $v_N \to v$ weakly in~$H^1(\R^2)$ and almost everywhere in~$\R^2$. As usual, we now prove that this is actually strong convergence by showing $\norm{ v }_{L^2}=1$. Indeed, if $\norm{ v }_{L^2} = 0$ then we must have that $v_N \to 0$ strongly in~$L^p(\R^2)$ for all $2<p<\infty$ (see, e.g.,~\cite[Lemma 9]{LenLew-11}). Hence, by~\eqref{cv-ineq:hartree-nls-2body} and~\eqref{cv-ineq:hartree-nls-3body}, we have
	\[
		\lim_{N\to+\infty}G_{a,b,N}^{\mathrm{H}} \geq 0 \,.
	\]
	However, since $G_{a,b}^{\NLS} < 0$, the above contradicts the upper bound in~\eqref{cv:hartree-nls-homogeneous-energy}. Therefore, $v \not\equiv 0$. It remains to discard the possibility that $\lambda:=\norm{ v }_{L^2} \in (0,1)$. If this was the case, by~\eqref{cv-ineq:hartree-nls-2body},~\eqref{cv-ineq:hartree-nls-3body}, and the Brezis--Lieb refinement of Fatou's lemma (see, e.g.,~\cite[Theorem 1.9]{LieLos-01}), we would have
	\begin{align}\label{cv:hartree-nls-homogeneous-energy-decomposition}
		G_{a,b}^{\NLS} \geq \lim_{N\to+\infty}G_{a,b,N}^{\mathrm{H}} = \lim_{N\to+\infty}\cG_{a,b,N}^{\mathrm{H}}(v_N) & \geq \cG_{a,b}^{\NLS}(v) + \lim_{N\to+\infty}\cG_{a,b}^{\NLS}( v_N - v ) \nonumber \\
		& \geq G_{a,b}^{\NLS}(\lambda) + G_{a,b}^{\NLS}\!\left(\sqrt{1-\lambda^2}\right),
	\end{align}
	where we used the variational principle and the fact that $\norm{ v_N - v }_{L^2}^2 \to 1-\norm{ v }_{L^2}^2$, as $N\to+\infty$, for the last inequality, and where $G_{a,b}^{\NLS}(\lambda)$ denotes the ground state energy of the minimization problem of~\eqref{functional:nls-homogeneous} under the mass constrain $\norm{v}_{L^2}=\lambda\in (0,1)$. in particular, $G^{\NLS}_{a,b}=G^{\NLS}_{a,b}(1)$. It is worth noting that, by the same arguments as in Theorem~\ref{thm:collapse-nls-homogeneous}-\emph{\ref{existence:nls-homogeneous-gound-state}}, either $G_{a,b}^{\NLS}(\lambda)$ is nonnegative and admits no ground states or $G_{a,b}^{\NLS}(\lambda)$ is negative and admits a ground state. With this observation, we prove that~\eqref{cv:hartree-nls-homogeneous-energy-decomposition} cannot occur by considering the following possibilities on $G_{a,b}^{\NLS}(\lambda)$ and $G_{a,b}^{\NLS}\big(\sqrt{1-\lambda^2}\big)$.
	\begin{itemize}[leftmargin=*]
		\item If both are nonnegative, given that $G_{a,b}^{\NLS} < 0$, then
		\begin{equation}\label{cv:hartree-nls-homogeneous-binding-inequality}
			G_{a,b}^{\NLS}(\lambda) + G_{a,b}^{\NLS}\!\left(\sqrt{1-\lambda^2}\right) > G_{a,b}^{\NLS} \,,
		\end{equation}
		 contradicting~\eqref{cv:hartree-nls-homogeneous-energy-decomposition}.
		
		\item If one is nonnegative and one negative, w.l.o.g.\ we assume $G_{a,b}^{\NLS}(\lambda) < 0 \leq G_{a,b}^{\NLS}\big(\sqrt{1-\lambda^2}\big)$. Let $v_\lambda$ be a ground state of~$G_{a,b}^{\NLS}(\lambda)$. Since $\lambda \in (0,1)$ and $G_{a,b}^{\NLS} < 0$, we have
		\begin{equation}\label{cv:hartree-nls-homogeneous-binding-inequality-half-1}
			G_{a,b}^{\NLS}(\lambda) = \cG_{a,b}^{\NLS}(v_\lambda) \geq \lambda^2 \cG_{a,b}^{\NLS}(v_\lambda(\lambda\cdot)) \geq \lambda^2 G_{a,b}^{\NLS} > \lambda G_{a,b}^{\NLS} > G_{a,b}^{\NLS}
		\end{equation}
		contradicting also~\eqref{cv:hartree-nls-homogeneous-energy-decomposition} since $G_{a,b}^{\NLS}\big(\sqrt{1-\lambda^2}\big) \geq 0$.
		
		\item If both are negative, then we have, by the same argument as in~\eqref{cv:hartree-nls-homogeneous-binding-inequality-half-1}, that
		\begin{equation}\label{cv:hartree-nls-homogeneous-binding-inequality-half-2}
			G_{a,b}^{\NLS}\!\left(\sqrt{1-\lambda^2}\right) \geq \left(1-\lambda^2\right) G_{a,b}^{\NLS} > (1-\lambda) G_{a,b}^{\NLS}
		\end{equation}
		and~\eqref{cv:hartree-nls-homogeneous-energy-decomposition} is contradicted by~\eqref{cv:hartree-nls-homogeneous-binding-inequality-half-1} and~\eqref{cv:hartree-nls-homogeneous-binding-inequality-half-2}.
	\end{itemize}
	 Therefore, we must have $\norm{ v }_{L^2}=1$ and $v_N \to v$ strongly in~$L^p(\R^2)$ for all $2\leq p <\infty$. Moreover, again by the variational principle together with~\eqref{cv-ineq:hartree-nls-2body} and~\eqref{cv-ineq:hartree-nls-3body}, we have
	\[
		G_{a,b}^{\NLS} \geq \lim_{N\to+\infty}G_{a,b,N}^{\mathrm{H}} = \lim_{N\to+\infty}\cG_{a,b,N}^{\mathrm{H}}(v_N) \geq \cG_{a,b}^{\NLS}(v) \geq G_{a,b}^{\NLS} \,,
	\]
	yielding the energy lower bound in~\eqref{cv:hartree-nls-homogeneous-energy} as well as the $H^1(\R^2)$-convergence in~\eqref{cv:hartree-nls-homogeneous-ground-state}.
	
	\medskip
	
	Next, we prove~\emph{\ref{thm:collapse-homogeneous_hartree_item2}}. That is, the asymptotic behaviors of the homogeneous Hartree energy and its ground states in the regime $b=b_N \searrow 0$ and $a>a_*$ is fixed. Let $w$ be an optimizer of~$G_{a,1}^{\NLS}$. By the variational principle, the second inequality in~\eqref{cv-rate:hartree-nls-2body} and the nonnegativity in~\eqref{cv-ineq:hartree-nls-3body}, we have
	\begin{align*}
		G_{a, b_N, N}^{\mathrm{H}} &\leq \cG_{a,b_N}^{\mathrm{H}, N}(b_N^{-\frac{1}{2}}w(b_N^{-\frac{1}{2}}\cdot)) \\
		&\leq \cG_{a,b_N}^{\NLS}(b_N^{-\frac{1}{2}}w(b_N^{-\frac{1}{2}}\cdot)) + CN^{-\alpha} b_N^{-\frac{3}{2}} = \left(G_{a,1}^{\NLS} + CN^{-\alpha} b_N^{-\frac{1}{2}}\right) b_N^{-1} \,.
	\end{align*}
	The condition $0<\eta<2\alpha$ is therefore assumed to ensure that the error term $N^{-\alpha} b_N^{-\frac{1}{2}} = N^{\eta/2-\alpha}$ is negligible. This gives the upper bound in~\eqref{blowup:hartree-homogeneous-energy}. We prove the matching energy lower bound by proving~\eqref{blowup:hartree-homogeneous-ground-state}. Let $\{v_N\}_N$ be a sequence of approximate ground states of~$G_{a, b_N, N}^{\mathrm{H}}$ and $w_N := \sqrt{b_N} v_N \big(\sqrt{b_N} \cdot \big)$. Then, $\norm{w_N}_{L^2} = \norm{v_N}_{L^2} = 1$ and, by the nonnegativity in~\eqref{cv-ineq:hartree-nls-2body},
	\begin{multline}\label{blowup:hartree-homogeneous}
		b_N \cG_{a, b_N, N}^{\mathrm{H}}(v_N) \geq \frac{1}{6}\iiint_{\R^6} W_{N^\beta \sqrt{b_N}}(x-y, x-z)|w_N(x)|^2|w_N(y)|^2|w_N(z)|^2 \dix \diy \diz \\
		+ \int_{\R^2}\left(|\nabla w_N(x)|^2 - \frac{a}{2}|w_N(x)|^4 \right) \dix \,.
	\end{multline}
	Due to the assumptions $a>a_*$ fixed and $0<\alpha<\beta$, the boundedness of~$\{\norm{\nabla w_N}_{L^2}\}_N$ is obtained by the same arguments as in the proof of~\eqref{blowup:hartree-inhomogeneous-ground-state} ---note again that the arguments therein do not rely on the external potential. Therefore, there exists $w\in H^1(\R^2)$ such that, up to translation and extraction of a subsequence, $w_N \to w$ weakly in~$H^1(\R^2)$ and almost everywhere in~$\R^2$. We prove that the convergence is actually strong by showing $\norm{w}_{L^2}=1$. On the one hand, if $\norm{w}_{L^2} = 0$, then $w_N \to 0$ strongly in~$L^p(\R^2)$ for all $2<p<\infty$ (see, e.g.,~\cite[Lemma 9]{LenLew-11}). Hence, by~\eqref{cv-ineq:hartree-nls-3body} and~\eqref{blowup:hartree-homogeneous} where $N^\beta \sqrt{b_N} = N^{\beta-\eta/2} \to +\infty$, as $N\to+\infty$, due to the two assumptions $\beta > \alpha > \eta/2$, we have
	\[
		\lim_{N\to+\infty}b_N G_{a,b,N}^{\mathrm{H}} = \lim_{N\to+\infty}b_N \cG_{a, b_N, N}^{\mathrm{H}}(v_N) \geq 0 \,,
	\]
	contradicting the upper bound in~\eqref{blowup:hartree-homogeneous-energy} since $G_{a,1}^{\NLS} < 0$. Therefore, $w \not\equiv 0$. On the other hand, if $\lambda:=\norm{w}_{L^2} \in (0,1)$, then~\eqref{cv-ineq:hartree-nls-3body},~\eqref{blowup:hartree-homogeneous}, and the Brezis--Lieb refinement of Fatou's lemma (see, e.g.,~\cite[Theorem 1.9]{LieLos-01}) yield
	\begin{align*}
		G_{a,1}^{\NLS} \geq \lim_{N\to+\infty}b_N G_{a, b_N, N}^{\mathrm{H}} ={} & \lim_{N\to+\infty}b_N \cG_{a, b_N, N}^{\mathrm{H}}(v_N) \\
		\geq{} & \cG_{a,1}^{\NLS}(w) + \lim_{N\to+\infty}\cG_{a,1}^{\NLS}(w_N - w) \geq G_{a,1}^{\NLS}(\lambda) + G_{a,1}^{\NLS}\left(\sqrt{1-\lambda^2}\right),
	\end{align*}
	contradicting the binding inequality~\eqref{cv:hartree-nls-homogeneous-binding-inequality} for $b=1$. We have therefore proved $\norm{w}_{L^2}=1$ and consequently $w_N \to w$ strongly in~$L^p(\R^2)$ for all $2\leq p <\infty$. Moreover, again by the variational principle together with~\eqref{cv-ineq:hartree-nls-3body} and~\eqref{blowup:hartree-homogeneous}, we have
	\[
		G_{a,1}^{\NLS} \geq \lim_{N\to+\infty}b_N G_{a, b_N, N}^{\mathrm{H}} = \lim_{N\to+\infty}b_N \cG_{a, b_N, N}^{\mathrm{H}}(v_N) \geq \cG_{a,1}^{\NLS}(w) \geq G_{a,1}^{\NLS} \,,
	\]
	yielding the energy lower bound in~\eqref{blowup:hartree-homogeneous-energy} as well as the $H^1(\R^2)$-convergence in~\eqref{blowup:hartree-homogeneous-ground-state}.
\end{proof}

\end{document}